\documentclass[a4paper,12pt]{article}
\usepackage{physics}
\usepackage{color}
\usepackage{amsmath, amssymb, amsfonts}
\usepackage{geometry}
\usepackage{graphicx}
\usepackage{titlesec}
\usepackage{relsize}
\usepackage{bbold} 
\graphicspath{ {images/} }
\numberwithin{equation}{section} 
\usepackage{amsthm}
\usepackage{dsfont}
\newtheorem{theorem}{Theorem}[section]
\newtheorem{lemma}[theorem]{Lemma}

\newtheorem{corollary}[theorem]{Corollary}
\newtheorem{definition}[theorem]{Definition}

\newtheorem{assumption}[theorem]{Assumption}
\usepackage{mwe}
\usepackage{epsfig}
\usepackage{hyperref}
\usepackage{cleveref}
\usepackage{algorithm}
\usepackage[noend]{algpseudocode}
\usepackage{multirow}
\usepackage{makecell}
\usepackage{subcaption}
\usepackage{yfonts}
\usepackage{tikz}
\usepackage{amsmath,mathrsfs}
\usepackage{relsize}
\usepackage{tikz}
\usetikzlibrary{tikzmark,calc}

\usepackage{geometry,mathtools}
\setcounter{MaxMatrixCols}{12}
\usetikzlibrary{calc}
\usetikzlibrary{matrix}
\usetikzlibrary{decorations.pathreplacing}

\DeclareMathOperator{\supp}{supp}

\DeclareMathOperator{\poly}{poly}
\newcommand{\todo}[1]{\textcolor{red}{#1}}




\begin{document}
	
	\title{Detailed Analysis of Circuit-to-Hamiltonian Mappings}
	\author{James D. Watson}
	\date{Department of Computer Science, University College London, UK}
	\maketitle

\begin{abstract}	
	The circuit-to-Hamiltonian construction has found widespread use within the field of Hamiltonian complexity, particularly for proving QMA-hardness results. In this work we examine the ground state energies of the Hamiltonian for standard clock constructions and those which require dynamic initialisation. 
	We put exponentially tight bounds on these ground state energies and also determine improved scaling bounds in the case where there is a constant probability of the computation being rejected.
	Furthermore, we prove a collection of results concerning the low-energy subspace of quantum walks on a line with energy penalties appearing at any point along the walk and introduce some general tools that may be useful for such analyses.
\end{abstract}	
	
\tableofcontents	
	
\section{Introduction}

In the previous two decades there has been a union between condensed matter physics and complexity theory resulting in the new field of Hamiltonian complexity. The idea of Hamiltonian complexity is to study the properties of local Hamiltonians from a complexity perspective, allowing us understand many-body quantum systems that may be to complicated to solve or are otherwise computationally intractable.
Key to this field is the Feynman-Kitaev Hamiltonian construction which allows the evolution of a quantum circuit to be encoded in a many-body Hamiltonian and was used in the seminal proof that the task of estimating the ground state of the local Hamiltonian problem is QMA-complete \cite{Kitaev_Shen_Vyalyi}. 
This technique is often called a circuit-to-Hamiltonian mapping. 
Initial work used the mapping to prove QMA-hardness of the Local Hamiltonian problem for 5-local Hamiltonians \cite{Kitaev_Shen_Vyalyi} before it was then used to prove completeness of progressively more restricted classes of Hamiltonians. 
The culmination of the circuit to Hamiltonian mapping has been proving QMA-hardness of 1D Hamiltonians \cite{aharonov_gottesman_irani_kempe_2009}, and even 1D, translationally invariant, nearest neighbour Hamiltonians \cite{Gottesman-Irani}. 

Perturbation gadgets developed first by \cite{Kempe_Kitaev_Regev}, combined with the circuit-to-Hamiltonian mapping, have been used to prove further sets of hardness results for systems on a lattice \cite{Oliviera_Terhal}, and classify 2-local qubit Hamiltonians \cite{Cubitt_Montanaro_2013}. 
Work has also been done to classify the complexity of sampling \cite{Aaronson_Arkhipov_2011}, traversing the ground state subspace \cite{Gharibian_Sikora_2015}, counting low energy states \cite{Brown_Flammia_Schuch_2011}, excited states \cite{Jordan_Gosset_Love_2010}, and finding the spectral gap \cite{Ambainis_2014, Gharibia_Yirka_2016}. 
The circuit-to-Hamiltonian construction has found usage in wide range of other areas in quantum information, including adiabatic quantum computation \cite{aharonov_dam_kempe_landau_lloyd_regev_2008} and error correction \cite{bohdanowicz_crosson_nirkhe_yuen_2019}.

Detailed analysis has been able to given improved bounds on the promise gap and eigenvalue scaling, not just for the standard constructions but also for non-uniform weights and for branching computations \cite{Caha_Landau_Nagaj, Bausch_Crosson_2018, Usher_Hoban_Browne_2017, Bausch_Cubitt_Ozols_2016}.
However, the analysis of the gaps and eigenvalues are largely just scaling analysis. The aim of this paper is to pin down as closely as possible the eigenvalues and ground states of the Feynman-Kitaev construction. 

The contribution from this work is extending the analysis to constructions with more complex clock constructions which have been used in prior literature (in particular in the 1D case \cite{Cubitt_Perez-Garcia_Wolf, Gottesman-Irani}) which include clocks which require dynamic initialisation and do not have penalty terms only at the beginning and end. 
Furthermore, we pin down the ground state energy of these circuit-to-Hamiltonian mappings to within exponential precision. 

We also prove some potentially useful minimum eigenvalue bounds for Hamiltonians of quantum walks on a line with penalties (or self-loops) which improve on bounds by \cite{Caha_Landau_Nagaj} and use different methods. 
Using these, we prove tighter bounds for Hamiltonians which encode computations which reject with constant probability.

\section{Background and Previous Results}

In this section we give a brief overview of the Feynman-Kitaev circuit-to-Hamiltonian mapping and some previously known results about its properties.

\subsection{Preliminaries}

\begin{definition}[QMA]\label{Def:QMA}
	A promise problem $A=(A_{YES},A_{NO})$ is in QMA if there exist polynomials $p$ and $q$ and a QTM $M$ such
	that for each instance $x$ and any quantum witness $\ket{w}$ such that $\ket{w}$ is of at most $q(|x|)$ qubits $M$ halts in $p(|x|)$ steps on input $(x, \ket{w})$, and
	\begin{itemize}
		\item if $x \in A_{YES}$,  $\exists \ket{w}$ such that $M$ accepts $(x, \ket{w})$ with probability $>2/3$. 
		\item if $x \in A_{NO}$ then $\forall \ket{w}$, $M$ accepts $(x, \ket{w})$ with probability $<1/3$.
	\end{itemize}
\end{definition}

As an intermediate step we will find it useful to prove results about EQMA (Exact-QMA) which is a zero-error quantum complexity class:
\begin{definition}[EQMA] \label{Def:EQMA}
	A promise problem $A=(A_{YES},A_{NO})$ is in EQMA if there exist polynomials $p$ and $q$ and a QTM $M$ such
	that for each instance $x$ and any quantum witness $\ket{w}$ such that $\ket{w}$ is of at most $q(|x|)$ qubits $M$ halts in $p(|x|)$ steps on input $(x, \ket{w})$ and
	\begin{itemize}
		\item if $x \in A_{YES}$,  $\exists \ket{w}$ such that $M$ accepts $(x, \ket{w})$ with probability 1. 
		\item if $x \in A_{NO}$ then $\forall \ket{w}$, $M$ accepts $(x, \ket{w})$ with probability 0 (i.e. always rejects).
	\end{itemize}
\end{definition}
We note that EQMA is not a particularly ``natural'' class and suffers the same ambiguities that EQP does (as defined in \cite{Bernstein_Vazirani_1997}). 
However, throughout the next few sections we will find it is easier prove results for the problem class EQMA as an intermediate step before using the EQMA results to prove results about QMA. 
We also take care to distinguish EQMA from the class NQP, defined in \cite{Adleman_Demarrais_Huang_1997}, which has zero amplitude on the accept state if it is a rejecting instance, but is only required to have non-zero amplitude on the accept state when it is an accepting instance. 
It may still have non-zero amplitude on the reject state when the instance is an accepting instance whereas EQMA does not. 

Throughout the rest of the this work we will denote the matrix representing an $N$ vertex path graph Laplacian as
\begin{equation}\label{eq:E_matrix}
\Delta^{(N)} = \begin{pmatrix}
\frac{1}{2}  & -\frac{1}{2} &      0       &    \dots     &        &    \dots     &       0      &       0      \\
-\frac{1}{2} &       1      & -\frac{1}{2} &    \ddots    &        &              &              &       0      \\
0      & -\frac{1}{2} &      1       & -\frac{1}{2} &        &              &              &    \vdots    \\
\vdots    &    \ddots    & -\frac{1}{2} &      1       & \ddots &              &              &              \\
&              &              &    \ddots    & \ddots &              &    \ddots    &    \vdots    \\
\vdots    &              &              &              &        &    \ddots    &    \ddots    &       0      \\
0      &              &              &              & \ddots &    \ddots    &       1      & -\frac{1}{2} \\
0      &       0      &    \dots     &              & \dots  &      0       & -\frac{1}{2} & \frac{1}{2}
\end{pmatrix}_{N\times N}.
\end{equation}
Furthermore, we denote an $N\times N$ matrix of zeros as $0_{N}$.

\subsection{Feynman's Construction}
Consider a quantum circuit described by the unitary $U=U_N\dots U_2U_1$ (or alternatively a quantum TM evolving according to this set of unitaries). 
Further consider a set of qudits such that the total Hilbert space $\mathcal{H} = \mathcal{H}_{clock}\otimes \mathcal{H}_{register}$. 
$\mathcal{H}_{clock}$ will contain a set of clock states $\ket{0}, \ket{1}, \dots \ket{T-1}$, which will label the steps of the circuit after each unitary.
$\mathcal{H}_{register}$ will be the computational register state that is acted on by the unitaries.

We then design a Hamiltonian which has a ``history state'' ground state of the form
\begin{align}
\ket{\Psi} = \frac{1}{\sqrt{T}}\sum_{t=0}^{T-1}\ket{t}\otimes (U_t\dots U_2U_1)\ket{\phi},
\end{align} 
where $\ket{\phi}$ is the initial input state to the circuit.
To do this we choose 
\begin{align}
H = H_{trans} + H_{in},
\end{align}
where $H_{trans}$ encodes the transitions/propagation of the circuit as
\begin{align}
H_{trans} = \sum_{t=0}^{T-1}(\ket{t+1}\otimes U_t - \ket{t})(\bra{t+1}\otimes U_t - \bra{t}),
\end{align}
and where the term $H_{in}:=\ket{0}\bra{0}_{clock}\otimes\ket{00\dots0}\bra{00\dots0}_{ancillas}$ applies a penalty if the ancilla qubits in the computational register do not start in the all zeros state.
For a fixed local Hilbert dimension, the clock states will generally be $\log(N)$-local. 
However, typically there are ways of reducing the locality of the interaction to a constant \cite{Kitaev_Shen_Vyalyi}.

Often the circuit-to-Hamiltonian mapping is used with the aim of encoding a verifications circuit for a QMA problem. 
With this in mind one includes an output penalty 
\begin{align}\label{Eq:FK_Hamiltonian}
H_{FK} := H_{in} + H_{out} + H_{trans}
\end{align}
where $H_{out} = \ket{T}\bra{T}\otimes\Pi_{out}$, where $\Pi_{out}$ is a projector onto a rejection flag output by the quantum circuit (i.e. it penalises rejecting computations). 

\subsection{Clock Constructions}

The clock construction encoded in the Hamiltonian needs to be local and otherwise satisfy the constraints of the Hamiltonian. 
There are a multitude of clock constructions in the literature, notably including delocalised clocks \cite{Breuckmann_Terhal_2014} and translationally invariant clock constructions \cite{Gottesman-Irani, Cubitt_Perez-Garcia_Wolf}.
Due to the constraints of encoding a clock construction into a translationally invariant Hamiltonian, these clocks require a dynamic initialisation and may undergo ``bad'' transitions (transitions to states that should not be allowed but cannot otherwise be excluded).
We include them in our analysis below. 
Previous analyses of the circuit-to-Hamiltonian mapping have
only achieved loose scaling bounds for such clock constructions, which we improve on here.

\subsection{Spectra of Feynman-Kitaev Hamiltoninans and the Promise Gap}

The Feynman-Kitaev Hamiltonian is often invoked to prove QMA-hardness results. 
Here, we use the fact that if the Hamiltonian encodes a rejecting instance, it will have a high energy ground state, $\lambda_0>\beta$, otherwise if it encodes an accepting instance, it will have a low energy ground state $\lambda_0<\alpha$, for $\beta-\alpha = O(1/\poly(n))$ for a Hamiltonian on $n$ qudits. 
This separation in $\alpha, \beta$ is known as the promise gap.

In the original proof of QMA-hardness of the Local Hamiltonian problem, Kitaev's geometrical lemma is used to prove that the promise gap scales as $\Omega(T^{-3})$.
Both \cite{Bausch_Crosson_2018} and \cite{Caha_Landau_Nagaj} improve on this to show that the standard Feynman-Kitaev Hamiltonian has a promise gap which scales a $\Omega(T^{-2})$.
\cite{Bausch_Crosson_2018} also bounds the scaling for many non-uniform types of Hamiltonians which we will not be concerned with in this work.  

A further reason for interest in the promise gap comes from its relation to the quantum PCP conjecture \cite{Aharonov_Arad_Vidick_2013} -- if it were possible to produce a sufficiently large promise gap with a Feynman-Kitaev Hamiltonian then the PCP conjecture would follow. 

One can also consider alternative models of computation which branch, such as Quantum Thue Systems \cite{Bausch_Cubitt_Ozols_2016}. 
It has been shown, using a generalisation of Kitaev's geometrical lemma, that these models also have a promise gap $\Omega(N^{-3})$ where $N$ is the number of vertices in the unitary labelled graph representing the computation (Lemma 44 of \cite{Bausch_Cubitt_Ozols_2016}). 
Further work \cite{Bausch_Crosson_2018} shows that such constructions cannot be straightforwardly used to prove the quantum PCP conjecture.

Finally, it is worth noting that all known history state constructions in the literature have been shown to have a spectral gap that closes as the length of the computation they encode increases \cite{Cubitt_Gonzalez-Guillen_2018}. A similar result was shown in \cite{Crosson_Bowen_2017}.

In this work we pin down the promise gap for Feynman-Kitaev Hamiltonians with uniform weight transition rules to be within exponential precision of a fixed function $1-\cos(\pi/2T)$, where $T$ is the runtime of the computation.

\subsection{Quantum Walks on a Line}


It is well known that the Hamiltonian describing a particle hopping along a line of length $T$, where individual states are given by $\{\ket{1}, \ket{2}, \dots \ket{T}, \}$, is given by 
\begin{align}
H^{walk} &= \sum_{t=1}^{T-1} (\ket{t+1}-\ket{t})(\bra{t+1}-\bra{t}) \\
&= \Delta^{(T)}. 
\end{align}
This can be used to represent not just a particle propagating along a line, but a generic quantum process evolving. 
Throughout the rest of the paper, we will use a ``walk'' to refer to a graph Laplacian with weighted vertices. 

Our interest will be when the propagating process is a computation and the states represent the clock register labelling stages of the computation. 
In particular, the analysis of Feynman-Kitaev Hamiltonians, such as in equation \ref{Eq:FK_Hamiltonian}, can often be mapped to quantum walks on a line. 

In the event that a computation gets an energy penalty it is often possible to show that the analysis becomes equivalent to analysing a Laplacian plus projectors for the relevant time steps as below
\begin{align}
W^\dagger H^{walk}W = \Delta^{(T)}+\sum_{k\in K} \ket{k}\bra{k},
\end{align}
for some $K\subseteq \{1,2,\dots,T\}$.
 
With this in mind, \cite{Caha_Landau_Nagaj} put bounds on the scaling of the ground state energy for quantum walks on a line with penalties at their end points. 
In this work we improve on these bounds and introduce a set of techniques useful for analysing quantum walks on a line that have an energy penalty at a point which is not necessarily at the ends of the line.

\section{Main Results}

\subsection{Feynman-Kitaev Hamiltonians}

We consider an extension of Feynman-Kitaev Hamiltonians to a more general class of Hamiltonians that have been considered before which we call \emph{Standard Form Hamiltonians} that includes those which have ``bad'' clock transitions and clocks which may require a dynamic initialisation. 
Such clocks have appeared in \cite{Gottesman-Irani}, \cite{Cubitt_Perez-Garcia_Wolf}, which are notable for encoding QTM rather than circuits. 
We then consider a QMA verification computation encoded in the Hamiltonian and the associated minimum eigenvalues in both the accept and reject instances.

We show that the ground state energy of a Hamiltonian encoding a verification of a QMA YES or NO instance is given by the following theorem.
\begin{theorem}
	The ground state energy of a standard-form Hamiltonian, $H_{QMA}\in \mathcal{B}(\mathds{C}^d)^{\otimes n}$, encoding the verification computation of a QMA instance with total runtime $T=\poly(n)$ is bounded as
	\begin{align}
	0 \leq &\lambda_0\big( H_{QMA}^{(YES)}\big) \leq e^{-O(\poly(n))} \\
	1-\cos\bigg(\frac{\pi}{2T}\bigg) - e^{-O(\poly(n))} \leq  &\lambda_0\big( H_{QMA}^{(NO)}\big) \leq 1-\cos\bigg(\frac{\pi}{2T}\bigg).
	\end{align}
\end{theorem}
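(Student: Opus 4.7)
The plan is to split into YES and NO instances and in both cases to reduce the problem, via the standard diagonalisation of the transition part of $H_{trans}$, to a quantum walk on a line with additional projector penalties — exactly the setting $\Delta^{(T)} + \sum_{k \in K} \ket{k}\bra{k}$ flagged at the end of Section 2. The novelty for \emph{standard-form} Hamiltonians (with dynamic clock initialisation and ``bad'' transitions) is that the reduction is no longer an exact unitary equivalence: the legal clock subspace is not the full clock register, and the penalty sites may lie in the interior of the walk. So after the basis change I intend to apply a Nullspace Projection / Geometric Lemma argument to show that the low-energy spectrum of $H_{QMA}$ agrees with the spectrum of the reduced walk Hamiltonian up to corrections that can be made exponentially small.

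For the YES bound, I would first amplify the verifier so its acceptance probability is $1 - e^{-\poly(n)}$ without affecting the asymptotic runtime $T = \poly(n)$. Then the natural history-state ansatz
\begin{equation*}
\ket{\Psi} \;=\; \frac{1}{\sqrt{T}}\sum_{t=0}^{T-1}\ket{t}\otimes U_t\cdots U_1\ket{\phi}
\end{equation*}
satisfies $H_{trans}\ket{\Psi} = 0$ and $H_{in}\ket{\Psi} = 0$ by the usual choices, and $\bra{\Psi}H_{out}\ket{\Psi} = (1-p_{acc})/T \leq e^{-\poly(n)}$ from amplification. To handle dynamic initialisation I would place $\ket{\phi}$ inside the legal clock subspace, so bad-transition penalties contribute zero; any unavoidable residual is absorbed into the exponentially small term. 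The lower bound $\lambda_0 \geq 0$ is immediate since each term in $H_{QMA}$ is positive semidefinite.

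For the NO bound the upper bound is exhibited by the explicit trial state of the commented-out block, $\ket{\nu_1} = \sum_t 2\cos((2t{+}1)\pi/(2T))\sin(\pi t/(2T))\ket{t}\ket{\phi_t}$, which by direct trigonometric verification is an eigenvector of $\Delta^{(T)}+\ket{T-1}\bra{T-1}$ at exactly $1-\cos(\pi/(2T))$; lifting this back through the basis change gives an ansatz state for $H_{QMA}$ with the same energy. For the matching lower bound I would use the fact that, after amplification, any state in the legal clock subspace with non-trivial support on $\ket{T-1}$ incurs the full projector penalty up to $e^{-\poly(n)}$ leakage on rejecting computations, so the spectral problem reduces to that of a walk Laplacian on $T$ vertices with Dirichlet-type boundary at the output end. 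The minimum eigenvalue of that tridiagonal matrix is an elementary computation, yielding the stated $1-\cos(\pi/(2T))$ with exponentially small slack.

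The main obstacle is the interaction between dynamic clock initialisation, bad transitions, and the Nullspace Projection argument: unlike the original Feynman-Kitaev proof, the relevant walk subspace is only an approximate invariant subspace of $H_{trans}$, and penalty terms may appear at interior sites rather than purely at the endpoints. Getting this right requires the sharpened quantum-walk-with-interior-penalty bounds that the paper introduces as separate tools, together with careful bookkeeping of the amplification parameters to guarantee that every error is made exponentially small in $\poly(n)$ simultaneously.
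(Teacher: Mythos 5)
Your YES-instance argument (amplify, then evaluate the history state variationally, with the witness chosen in $\ker H_{in}$ and the clock on a legal evolution so that $H_{pen}$ and $H_{in}$ contribute nothing and $H_{out}$ contributes $(1-p_{acc})/T$) is sound and is in fact a more elementary route to the YES upper bound than the paper's. The NO-instance lower bound, however, is where the theorem's content lies, and there your proposal has a genuine gap. First, the framing is off: in a standard-form Hamiltonian the subspaces $\mathcal{K}_S\otimes\mathcal{H}_Q$ are \emph{exactly} invariant (Lemma \ref{Lemma:Invariant_subspaces}), and within each one the conjugation by $W$ is an exact unitary equivalence to $\Delta^{(|S|)}$ plus penalties; there is no ``approximate invariant subspace'' to control. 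The real difficulty is that after conjugation the output penalty is not a full projector on the correctly-initialised sector: it becomes $2\times2$ blocks $\begin{pmatrix}1-\mu_i & -\sqrt{\mu_i(1-\mu_i)}\\ -\sqrt{\mu_i(1-\mu_i)} & \mu_i\end{pmatrix}$ with $\mu_i\geq 1-\eta$, coupling each initialised block to an ill-initialised one. Your statement that a state with support on the final clock step ``incurs the full projector penalty up to $e^{-\poly}$ leakage'' is not a proof of this; the paper makes it rigorous by constructing a companion EQMA Hamiltonian (all $\mu_i$ set to $1$), computing its minimum eigenvalue exactly, and invoking Weyl's inequality (Lemma \ref{Lemma:KKR_Eigenvalue_Inequality}), noting that the perturbation norm is $\sqrt{\mu_i(1-\mu_i)}\leq\eta^{1/2}$ --- a square-root loss your sketch does not account for, and which is precisely why amplification must precede the comparison. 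A Nullspace-Projection or geometric-lemma argument, as you propose, cannot substitute here: those tools lose multiplicative constants and yield only $\Omega(T^{-2})$-type scaling, whereas the theorem pins $\lambda_0$ to within $e^{-O(\poly(n))}$ of the specific value $1-\cos(\pi/2T)$.

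Second, even granting the end-penalty reduction, you still must show that every other part of the spectrum lies above $1-\cos(\pi/2T)$: (i) blocks penalised by $H_{in}$ (dynamic initialisation puts penalties at the \emph{start} of the walk, not the output end) --- this is exactly what the Starting Penalty Lemma (Lemma \ref{Lemma:Minimum_Eigenvalue}) supplies, showing a penalty at site $k$ never beats the end-penalty value; and (ii) subspaces containing bad transitions, which are handled only because the clock assumptions force an illegal state within $O(n)$ steps and $T=\Omega(n^2)$, so that $1-\cos(\pi/O(n))\gg 1-\cos(\pi/2T)$. You flag both issues as ``the main obstacle'' and defer them to ``the paper's tools'' and ``bookkeeping'', but without the EQMA/Weyl comparison and these two quantitative comparisons the argument does not close; as written, the NO lower bound is asserted rather than proven.
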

Although these bounds do not give a better promise gap scaling compared to other known results --- it remains $\Theta (T^{-2})$ as per \cite{Bausch_Crosson_2018} and \cite{Caha_Landau_Nagaj} --- the above gives the minimum eigenvalue more precisely.
Moreover, it gives bounds in the case where the clock has inherent bad transitions and so the Hamiltonian needs penalising terms and where the clock requires a dynamic initialisation --- cases not covered by \cite{Caha_Landau_Nagaj} or \cite{Bausch_Crosson_2018}.

Furthermore, we consider the case where the QMA acceptance probability is a constant and amplification is no possible --- for example for the class StoqMA.
\begin{theorem}[YES Instance Upper Bound] 
	Let $H^{(YES)}_{QMA}$ encode the verification of a YES QMA instance. Let $\eta=O(1)$ be the maximum probability of rejection, then 
	\begin{align}
	0\leq \lambda_0\big( H_{QMA}^{(YES)}\big) =O\bigg(\frac{\eta}{T^2}\bigg).
	\end{align}
\end{theorem}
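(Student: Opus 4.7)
The plan is variational: exhibit a weighted history state on the optimal YES witness whose Rayleigh quotient is $O(\eta/T^2)$. Concretely I take
\[
|\Psi\rangle \;=\; \sum_{t=0}^{T-1} a_t\,|t\rangle\otimes U_t\cdots U_1\,|w\rangle|0^n\rangle,
\]
where $|w\rangle$ is the witness realising the minimum rejection probability $p_{rej}(w)\le\eta$, and the amplitudes $\{a_t\}$ are chosen to minimise the resulting quadratic form.

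Applying the standard Feynman--Kitaev isometry $W=\sum_t |t\rangle\!\langle t|\otimes U_t\cdots U_1$ maps $H_{trans}$ to $2\Delta^{(T)}\otimes I$ on the clock register, forces $H_{in}$ to vanish on a properly initialised witness, and turns $H_{out}$ into $|T-1\rangle\!\langle T-1|\otimes U^\dagger\Pi_{out}U$, whose expectation on $|w,0^n\rangle$ is at most $\eta$. For the standard-form generalisation with bad transitions and dynamically initialised clocks, additional clock-register projectors at intermediate sites appear; the walk-on-a-line-with-penalties bounds developed earlier in the paper let me absorb these into an effective Hamiltonian of the form $2\Delta^{(T)} + \eta\,|T-1\rangle\!\langle T-1|$ plus subleading intermediate-site penalties. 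The task therefore reduces to upper-bounding the ground state of this effective operator by a good trial amplitude sequence.

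The main step is to take $a_t\propto\cos\!\big(k(t+\tfrac{1}{2})\big)$, which automatically satisfies the free boundary condition at $t=0$ and is an eigenvector of the bulk Laplacian with eigenvalue $1-\cos k$. Enforcing the modified boundary condition at $t=T-1$ yields the transcendental relation $\tan\!\big(k(T-\tfrac{1}{2})\big)=(2\eta+\cos k-1)/\sin k$, whose root for $\eta$ a constant sits just below $\pi/(2T)$ with a correction controlled linearly by $\eta$. Substituting back gives energy $2(1-\cos k)(1+o(1)) = O(\eta/T^2)$ in the regime $\eta=\Theta(1)$; the cosine ansatz additionally gives $|a_{T-1}|^2=O(1/T^3)$, so the output-penalty contribution $\eta\,|a_{T-1}|^2$ is subleading and the walk term dominates. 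The main obstacle I anticipate is tracking the explicit $\eta$-dependence in the presence of the intermediate-site penalties from the standard-form clock construction: the cruder bounds of \cite{Caha_Landau_Nagaj} only give $O(1/T^2)$ without an explicit $\eta$ factor, whereas the sharper penalised-walk bounds established earlier in this paper are needed to keep the linear dependence on $\eta$ visible.
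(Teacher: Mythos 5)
There is a genuine gap, and it is precisely at the point where the factor of $\eta$ is supposed to appear. Your variational step is fine as far as it goes: restricting the Rayleigh quotient to the span of the history states $\ket{t}\otimes U_t\cdots U_1\ket{w,0}$ of a single correctly initialised witness compresses the Hamiltonian (in a legal subspace, with $H_{in}$ annihilating the input) to exactly $\Delta^{(T)}+p\ket{T}\bra{T}$ with $p\le\eta$ (your factor $2$ is just a normalisation convention). But the ground energy of this one-line operator is \emph{not} $O(\eta/T^2)$ when $\eta$ is a constant. Your own boundary condition, written in the paper's normalisation, is $\tan(kT)\tan(k/2)=\eta/(1-\eta)$; whenever $\eta T\to\infty$ the right-hand side dominates $\tan(k/2)=\Theta(k)$, which forces $kT\to\pi/2$, i.e.\ $k=\frac{\pi}{2T}\big(1-O(\tfrac{1}{\eta T})\big)$ and ground energy $\frac{\pi^2}{8T^2}(1+o(1))$ with a constant that does not shrink with $\eta$ — the correction is $O(1/(\eta T))$, not ``linear in $\eta$''. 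So your ansatz yields only $O(1/T^2)$; rebranding this as $O(\eta/T^2)$ by assuming $\eta=\Theta(1)$ trivialises the $\eta$-dependence, which is the entire content of the theorem (the statement allows any $\eta=O(1)$, and the explicit linear factor is what improves on the $O(\eta/T)$ bound of \cite{Caha_Landau_Nagaj} and is claimed to be tight). The regime in which the endpoint-penalised walk really has energy $\Theta(\eta/T^2)$ is $\eta=O(1/T)$, which is the paper's separate walk theorem and does not apply here.

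The missing idea is that the low-energy state must leak out of the single-witness history line at the final time step. In the paper's proof, after conjugating by $R=W(\mathds{1}_C\otimes(X\oplus Y)_Q)$ the legal subspace decomposes (Clairvoyance Lemma) into pairs of $T\times T$ walk blocks coupled only at their last sites by the rank-one $2\times2$ projector $P(\mu)$ with diagonal entries $1-\mu,\ \mu$ and off-diagonal $-\sqrt{\mu(1-\mu)}$, where $\mu\le\eta$ for the accepting witness. The trial vector is supported on \emph{both} blocks: a uniform zero-energy profile with weight proportional to $\sqrt{1-\mu}$ on one block, glued to a sine profile with weight proportional to $\sqrt{\mu}$ on the other, the junction amplitudes chosen to lie in $\ker P(\mu)$ so that the output penalty contributes exactly zero; the only cost is then the walk energy of the sine part, which carries an $O(\mu)$ fraction of the norm, giving $\lambda_0\le\mu\gamma_0\le\eta\big(1-\cos\frac{\pi}{2(T-T_{init})+1}\big)$. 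The dynamic initialisation is handled there by cutting out the first $T_{init}$ sites of the second block (producing $\Delta^{(T-T_{init})}+\frac12\ket{0}\bra{0}$) and using $T_{init}=O(\sqrt{T})$ to guarantee $T|u_{T'}|^2\ge1$; it cannot simply be ``absorbed'' as you suggest. Without a two-branch trial state of this kind, no choice of amplitudes $a_t$ on a single witness line can recover the linear $\eta$ dependence.
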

\noindent 
This is an improvement on the bound found in \cite{Caha_Landau_Nagaj} of $O(\eta/T)$ and is expected to be tight.

\subsection{Quantum Walks on a Line}

We present new eigenvalue bounds on Hamiltonians which encode quantum walks on a line where a penalty is applied, giving particular consideration to the case where the penalty is at the end of the walk. 
This analysis will later be applied to Hamiltonians encoding computation which is incorrect in some way.

We also introduce the Uncoupling Lemma which allows the ground state energy of a quantum walk with a penalty on it to be analysed as two disjoint walks that share the penalty between them.
We state and prove this in the next section.

From this we are able to prove that if there is a walk on a line with a penalty term somewhere, then the lowest ground state energy is achieved with the penalty at one of the ends
\begin{lemma}
	Consider the Hamiltonian   
	\begin{equation}
	\Delta^{(T)} +  \ket{k}\bra{k}
	\end{equation}
	for some basis state $\ket{k}$ with $1\leq k \leq T$. Then 
	\begin{equation}
	\min_k\bigg( \lambda_0 (\Delta^{(T)} +  \ket{k}\bra{k})  \bigg) = 1- \cos\bigg( \frac{\pi}{2T}\bigg)
	\end{equation}
	which occurs for $k=1,T$ for some $T> T_0$.
\end{lemma}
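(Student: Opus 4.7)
The plan is to split the proof into two parts: (i) exactly compute $\lambda_0(\Delta^{(T)}+\ket{1}\bra{1})$ (and symmetrically $\ket{T}\bra{T}$) and show the value is exactly $1-\cos(\pi/(2T))$; and (ii) use the Uncoupling Lemma announced earlier to reduce any interior penalty $1<k<T$ to two disjoint walks with endpoint penalties, whose ground-state energies, by part (i), are both strictly greater than the endpoint value.

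For (i), I would plug into the eigenvalue equation $(\Delta^{(T)}+\ket{1}\bra{1})\psi=\lambda\psi$ directly. The bulk recursion $(1-\lambda)\psi_t=\tfrac12(\psi_{t-1}+\psi_{t+1})$ on sites $2\leq t\leq T-1$ forces a plane-wave ansatz $\psi_t=\cos(\theta(t-c))$ with dispersion $\lambda=1-\cos\theta$. The boundary relation at $t=T$, namely $\psi_{T-1}=(2\cos\theta-1)\psi_T$, pins $c=T+\tfrac12$ (up to a discrete ambiguity). Substituting into the modified boundary at $t=1$, namely $(3/2-\lambda)\psi_1=\tfrac12\psi_2$, reduces to the quantization condition $\tan(\theta T-\theta/2)=\cot(\theta/2)$, whose smallest positive solution is $\theta=\pi/(2T)$. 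This gives $\lambda_0=1-\cos(\pi/(2T))$; by reflection symmetry, the same value arises for $k=T$.

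For (ii), I would invoke the Uncoupling Lemma at the site $k$ where the penalty sits, splitting $\Delta^{(T)}+\ket{k}\bra{k}$ into a left walk of length $k$ with a penalty at its right endpoint and a right walk of length $T-k+1$ with a penalty at its left endpoint, sharing the $\ket{k}\bra{k}$ term. By the endpoint calculation in (i), these two sub-walks have ground-state energies exactly $1-\cos(\pi/(2k))$ and $1-\cos(\pi/(2(T-k+1)))$. The Uncoupling Lemma should give a lower bound on $\lambda_0(\Delta^{(T)}+\ket{k}\bra{k})$ in terms of the \emph{smaller} of these two, namely $1-\cos(\pi/(2\max(k,T-k+1)))$. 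Because $1<k<T$ forces $\max(k,T-k+1)\leq T-1$, and $x\mapsto 1-\cos(\pi/(2x))$ is strictly decreasing in $x$, any interior position gives at least $1-\cos(\pi/(2(T-1)))>1-\cos(\pi/(2T))$, so the minimum is attained at the ends.

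The main obstacle is reconciling any slack introduced by the Uncoupling Lemma with the very narrow gap $1-\cos(\pi/(2(T-1)))-1+\cos(\pi/(2T))=\Theta(T^{-3})$ between the interior and endpoint values. If the Uncoupling Lemma delivers only a lower bound with correction $\epsilon(T,k)$, the conclusion requires $\epsilon(T,k)=o(T^{-3})$ uniformly in $k$; otherwise the statement only holds for $T$ exceeding some threshold $T_0$, matching the qualifier in the lemma. Handling the cases $k=2$ and $k=T-1$, where one sub-walk is trivial and the uncoupling bound is weakest, will require the most care, and is where the $T_0$ cutoff is likely forced.
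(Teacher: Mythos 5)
Your part (i) is fine: the plane-wave/transfer-matrix computation correctly reproduces the spectrum $1-\cos\big((2m-1)\pi/2T\big)$ of $\Delta^{(T)}+\ket{1}\bra{1}$, which the paper simply imports from Yueh--Cheng (Lemma \ref{Lemma:Tridiagonal_Full_Penalty}). The genuine gap is in part (ii): the Uncoupling Lemma does \emph{not} hand you two sub-walks carrying unit endpoint penalties. Its conclusion is
\begin{equation*}
\Delta^{(T)}+\ket{k}\bra{k}\;\geq\;\Big(\Delta^{(k-1)}+\tfrac14\ket{k-1}\bra{k-1}\Big)\oplus\Big(\Delta^{(T-k+1)}+\tfrac12\ket{k}\bra{k}\Big),
\end{equation*}
so the unit penalty is split into weights $1/4$ and $1/2$ (weight is necessarily sacrificed to make the discarded coupling block positive semi-definite), and the left walk has length $k-1$, not $k$. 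Your claim that the two sub-walks have ground energies exactly $1-\cos(\pi/2k)$ and $1-\cos(\pi/2(T-k+1))$ is therefore false --- those are the full-penalty values --- and the comparison via $\max(k,T-k+1)\leq T-1$ built on it is unjustified. The weight-$1/2$ block does have the clean value $1-\cos\big(\pi/(2(T-k+1)+1)\big)$ (Lemma \ref{Lemma:Tridiagonal_Partial_Penalty}), which beats $1-\cos(\pi/2T)$ as soon as $k\geq 2$; but the weight-$1/4$ block is strictly cheaper than a unit-penalty walk of the same length, and when its length approaches $T$ its advantage over the endpoint value shrinks to the $\Theta(T^{-3})$ margin you worry about, which the cheaply available bounds cannot resolve.

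The repair is essentially what the paper does. Bound the weak block by $\Delta^{(m)}+\tfrac14\ket{m}\bra{m}\geq\tfrac12\big(\Delta^{(m)}+\tfrac12\ket{m}\bra{m}\big)$, whose minimum eigenvalue $\tfrac12\big(1-\cos(\pi/(2m+1))\big)\approx\pi^2/4(2m+1)^2$ exceeds $1-\cos(\pi/2T)\approx\pi^2/8T^2$ only when $m\lesssim T/2$; hence you must first invoke the reflection symmetry $k\mapsto T+1-k$ to assume $k\leq\lfloor T/2\rfloor$, so that the weaker $1/4$ weight sits on the short side. This symmetrization is not optional and is absent from your argument (your version didn't need it only because of the incorrect unit-penalty assumption). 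Small values of $k$ (the paper treats $k=2,3$ by hand, where the blocks degenerate or the generic estimate is too tight) and the requirement $T>T_0$ then come out exactly as in the paper; once the fractional weights and the WLOG restriction are in place, the relevant comparisons are at order $T^{-2}$, not $T^{-3}$, and close comfortably for large $T$. With these fixes your route coincides with the paper's proof, the only genuinely different ingredient being your direct derivation of the $k=1$ spectrum in place of the citation to Yueh--Cheng.
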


As an extension of this, we place bounds on the minimum energy eigenvalue for quantum walks with weight $\mu\geq 0$ penalties: $H = \Delta^{(T)}+\mu\ket{T}\bra{T}$ and improve on the scaling in terms of $\mu$:
\begin{theorem}
	For $H = \Delta^{(T)} + \mu\ket{T}\bra{T}$ and $\mu=k/T$, then for sufficiently large $T$ we have 
	\begin{align}
	\lambda_0(H) &= \Theta\bigg( \frac{k}{T^2}\bigg)
	\end{align}
	where $k=O(1)$ is some constant.
\end{theorem}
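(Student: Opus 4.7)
The plan is to establish the claim by matching upper and lower bounds of order $k/T^2$, using nothing beyond a trial state for the upper bound and a Poincar\'e-style inequality for the lower bound. The monotonicity of $\lambda_0$ in $\mu$ (clear from the min-max principle) already gives $0 \le \lambda_0(H) \le 1-\cos(\pi/(2T))$, which produces the right scaling once we establish the lower bound, but to capture the explicit linear dependence on $k$ for small $k$ we need the upper bound argument below as well.

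For the upper bound I would use the uniform superposition $\ket{\phi_+} := \frac{1}{\sqrt{T}}\sum_{t=1}^{T}\ket{t}$. The ``Neumann-type'' diagonal entries $\tfrac{1}{2}$ at the two corners of $\Delta^{(T)}$ are chosen precisely so that $\Delta^{(T)}\ket{\phi_+}=0$. Thus
\begin{equation}
\langle \phi_+|H|\phi_+\rangle = 0 + \mu\,|\langle T|\phi_+\rangle|^2 = \frac{\mu}{T} = \frac{k}{T^2},
\end{equation}
and Rayleigh-Ritz yields $\lambda_0(H) \le k/T^2$. For the lower bound I would write an arbitrary normalised state as $\ket{\psi}=\sum_{t=1}^{T} c_t\ket{t}$, so
\begin{equation}
\langle\psi|H|\psi\rangle = \tfrac{1}{2}\mathcal{E}+\mu|c_T|^2, \qquad \mathcal{E}:=\sum_{t=1}^{T-1}|c_{t+1}-c_t|^2.
\end{equation}
Cauchy-Schwarz applied to the telescoping identity $c_t-c_T=-\sum_{s=t}^{T-1}(c_{s+1}-c_s)$ gives $|c_t-c_T|^2 \le (T-t)\mathcal{E}$, hence $|c_t|^2\le 2|c_T|^2+2(T-t)\mathcal{E}$. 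Summing over $t$ and invoking normalisation produces the Poincar\'e-type constraint
\begin{equation}
1 \le 2T|c_T|^2 + T^2\mathcal{E}.
\end{equation}
Minimising $\tfrac{1}{2}\mathcal{E}+(k/T)|c_T|^2$ subject to this linear constraint with $\mathcal{E},|c_T|^2\ge 0$ is a one-step linear programme whose optimum is $\min(k,1)/(2T^2) = \Omega(k/T^2)$ for $k=O(1)$, matching the upper bound.

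The main obstacle is tracking the $k$-dependence uniformly rather than just settling for $\Theta(1/T^2)$: the optimiser of the linear programme switches regime around $k\approx 1$ (at small $k$ all mass should sit in the penalty term, at large $k$ all mass should sit in the Dirichlet term), and one has to verify that the constant in the $\Omega(\cdot)$ is genuinely linear in $k$ in the relevant range. The ``sufficiently large $T$'' hypothesis is used only to absorb the subleading error between the saturation value $1-\cos(\pi/(2T))$ supplied by the preceding lemma and its asymptotic $\pi^2/(8T^2)$, and to ensure $\mu=k/T$ is small enough that the state $\ket{\phi_+}$ is close to the true ground state (otherwise one could alternatively use Weinstein--Aronszajn on the rank-one update to compute $\lambda_0$ to arbitrary precision, but this is not needed for the stated $\Theta$-bound).
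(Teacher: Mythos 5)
Your proof is correct, but it takes a genuinely different route from the paper. The paper works with the characteristic polynomial: it derives a continuant recurrence for $\det(H_T(\mu)-\lambda\mathds{1})$, rewrites the eigenvalue condition as $g_T(\lambda)=\mu/(1-\mu)$ where $g_T$ is a ratio of the $\mu=1$ and $\mu=0$ characteristic polynomials (whose roots are known exactly from Yueh--Cheng), and then locates the smallest crossing by expanding $g_T(k/T^2)=\sqrt{k/2}\,\tan(\sqrt{2k})\,T^{-1}+O(T^{-2})$ and comparing with $\mu/(1-\mu)=k/(T-k)$; its separate operator inequality $H\ge\mu(\Delta^{(T)}+\ket{T}\bra{T})$ only yields $\lambda_0=\Omega(k/T^3)$, so the genuine lower bound there also comes from the $g_T$ analysis. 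You instead argue variationally: the all-ones vector lies in $\ker\Delta^{(T)}$ (the corner entries $1/2$ make the row sums vanish), giving the clean upper bound $\lambda_0\le\mu/T=k/T^2$, and the telescoping-plus-Cauchy--Schwarz (discrete Poincar\'e) inequality $1\le 2T|c_T|^2+T^2\mathcal{E}$ reduces the lower bound to a two-variable linear programme with optimum $\min(k,1)/(2T^2)$, which is $\Omega(k/T^2)$ for $k=O(1)$. (Your identification $\langle\psi|\Delta^{(T)}|\psi\rangle=\tfrac12\sum_t|c_{t+1}-c_t|^2$ matches the paper's explicit matrix, so the constants are consistent.) What each approach buys: yours is elementary, fully rigorous, holds for every $T$ (the ``sufficiently large $T$'' hypothesis is not even needed), gives explicit constants, and in fact proves the two-sided bound $\min(\mu T,1)/(2T^2)\le\lambda_0\le\mu/T$ for any $\mu\in[0,1]$; the paper's transcendental-equation route is less self-contained and its asymptotic step is only sketched (it implicitly uses the monotone/interlacing structure of $g_T$ between its poles), but it yields finer information --- the precise leading coefficient in $k$, the location of all eigenvalues, and the auxiliary bound $\lambda_m\ge\mu\bigl(1-\cos\bigl(\tfrac{(2m-1)\pi}{2T}\bigr)\bigr)$ for every $m$, which your argument does not address.
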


\section{Hamiltonian Analysis for Quantum Walks on a Line}

Before we start further, we introduce a simple lemma that may find use elsewhere. 
Given a quantum walk on a line that receives an energy penalty 1 on the $k^{th}$ step along its propagation, then the ground state energy can be bounded from below by decoupling the Hamiltonian into two disjoint quantum walk Hamiltonians and sharing the energy penalty between the two new walks. 
\begin{lemma}[Uncoupling Lemma] \label{Lemma:Divorce_Lemma}
 	Given a matrix 
	\begin{align}
	H = \Delta^{(T)}+ \ket{k}\bra{k},
	\end{align}
	then 
	\begin{align}
	H \geq (\Delta^{(k-1)}+\frac{1}{4}\ket{k-1}\bra{k-1}) \oplus (\Delta^{(T-k+1)} + \frac{1}{2}\ket{k}\bra{k}).
	\end{align}
\end{lemma}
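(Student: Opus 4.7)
The plan is to identify the single edge in $\Delta^{(T)}$ that couples the proposed left block (sites $1,\ldots,k-1$) to the proposed right block (sites $k,\ldots,T$), strip it out cleanly, and then redistribute the penalty $\ket{k}\bra{k}$ so that what is returned to each half is a bona fide sub-Laplacian plus the stated small weight.

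First I would write the path Laplacian in its edge-sum form $\Delta^{(T)} = \sum_{t=1}^{T-1}\tfrac{1}{2}(\ket{t+1}-\ket{t})(\bra{t+1}-\bra{t})$, isolate the $(k-1,k)$ edge term
\begin{equation*}
E_{k-1,k} \;=\; \tfrac{1}{2}(\ket{k}-\ket{k-1})(\bra{k}-\bra{k-1}),
\end{equation*}
and observe that $\Delta^{(T)} - E_{k-1,k} = \tilde{\Delta}^{(k-1)} \oplus \tilde{\Delta}^{(T-k+1)}$ as operators on the full $T$-dimensional space, where the tildes denote embedding each sub-Laplacian on its respective block of basis states. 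The key bookkeeping point is that removing the single edge $(k-1,k)$ strips exactly $\tfrac{1}{2}$ from the diagonal at both $k-1$ and $k$, which is precisely the boundary value needed so that each restricted block becomes a genuine path Laplacian of the correct size (corner diagonal $\tfrac{1}{2}$).

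Subtracting the claimed right-hand side from $H$ then reduces the inequality to proving positivity of the residual operator
\begin{equation*}
R \;:=\; E_{k-1,k} + \tfrac{1}{2}\ket{k}\bra{k} - \tfrac{1}{4}\ket{k-1}\bra{k-1},
\end{equation*}
which is supported entirely on the two-dimensional subspace $\mathrm{span}\{\ket{k-1},\ket{k}\}$. Expanding $E_{k-1,k}$, the restriction of $R$ to that subspace in the ordered basis $(\ket{k-1},\ket{k})$ is
\begin{equation*}
\begin{pmatrix} 1/4 & -1/2 \\ -1/2 & 1 \end{pmatrix},
\end{equation*}
which has determinant $0$ and trace $5/4$, hence is positive semidefinite. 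Combining with the decomposition above yields the stated operator inequality.

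The only real ``obstacle'' is finding the correct split in the first place: the off-diagonal $-\tfrac{1}{2}$ inherited from $E_{k-1,k}$ forces any PSD choice of diagonal entries $(a,b)$ on the block to satisfy $ab \geq \tfrac{1}{4}$, and the coefficients $(a,b) = (\tfrac{1}{4},1)$ listed in the lemma saturate this while consuming the minimum possible amount of the $\ket{k}\bra{k}$ penalty (namely $\tfrac{1}{2}$) on the right block. I would also dispatch the degenerate cases $k=1$ and $k=T$ separately by interpreting $\Delta^{(0)}$ as the empty operator and $\Delta^{(1)}$ as $0$, in which case the inequality collapses to a one-sided statement about a single nontrivial block and is immediate.
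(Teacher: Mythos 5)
Your proof is correct and follows essentially the same route as the paper: your residual operator $R$ restricted to $\mathrm{span}\{\ket{k-1},\ket{k}\}$ is exactly the matrix $J=\begin{pmatrix} 1/4 & -1/2 \\ -1/2 & 1\end{pmatrix}$ that the paper's proof identifies and discards as positive semidefinite, with your edge-sum bookkeeping (and the explicit treatment of $k=1,T$) being only a cleaner presentation of the same decomposition.
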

\begin{proof}
	We see that we can make a simple decomposition:
	\begin{align}
	H =& \Delta^{(T)}+ \ket{k}\bra{k} \\
	=&  (\Delta^{(k-1)} + \frac{1}{4}\ket{k-1}\bra{k-1})\oplus 0_{T-k+1} + J + (0_{k-1}\oplus\Delta^{(T-k+1)})
	\end{align}
	where
	\begin{align}
	J = 0_{k-1} \oplus \begin{pmatrix}
	&1/4 &-1/2 \\
	&-1/2 & 1
	\end{pmatrix}
	\oplus 0_{T-k}.
	\end{align}
	We note that $J\geq 0$, hence 
	\begin{align}
	H &\geq   (\Delta^{(k-1)} + \frac{1}{4}\ket{k-1}\bra{k-1})\oplus 0_{T-k+1} + (0_{k-1}\oplus\Delta^{(T-k+1)}  + \frac{1}{2}\ket{k}\bra{k}).
	\end{align}
\end{proof}
\noindent
Using this we can directly bound eigenvalues of $H$.

\subsection{Laplacian Matrix Analysis}

Before we begin we gather some results about tridiagonal toeplitz matrices.
Let 
\begin{equation}\label{eq:Tridiagonal_Toeplitz_Matrix}
A^{(n)} = \begin{pmatrix}
b+\gamma  & c &      0       &    \dots     &        &    \dots     &       0      &       \alpha      \\
a &       b      & c &    \ddots    &        &              &              &       0      \\
0      & a &      b       & c &        &              &              &    \vdots    \\
\vdots    &    \ddots    & a &      b       & \ddots &              &              &              \\
&              &              &    \ddots    & \ddots &              &    \ddots    &    \vdots    \\
\vdots    &              &              &              &        &    \ddots    &    \ddots   &       0      \\
0      &              &              &              & \ddots &    \ddots    &   b      & c \\
\beta      &       0      &    \dots     &          & \dots  &      0     & a & b+ \delta
\end{pmatrix}_{n\times n}.
\end{equation}
then the following is true for specific instances of this family of matrices.
\begin{lemma}[Theorem 3.4 \textit{(iv)} of \cite{Yueh_Cheng_2008}] \label{Lemma:Tridiagonal_Full_Penalty}
	Suppose that $a = c \neq 0$, $\gamma \delta - \alpha \beta = - a^2$, and $\alpha + \beta = \gamma + \delta = 0$. Define
	\begin{equation}
	\theta_k = \frac{(2k-1)\pi}{2n}, \quad k=1,2,\dots,n.
	\end{equation} 
	Then the eigenvalues of $A^{(n)}$ are given by
	$\lambda_k = b + 2a \cos \theta_k $.
\end{lemma}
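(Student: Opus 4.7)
My plan is to prove this via a trigonometric eigenvector ansatz that exploits the fact that the \emph{interior} of $A^{(n)}$ is a symmetric tridiagonal Toeplitz block, while the hypotheses $a = c \neq 0$, $\alpha + \beta = \gamma + \delta = 0$, and $\gamma\delta - \alpha\beta = -a^2$ (equivalently $\gamma^2 - \alpha^2 = a^2$ after substituting $\delta = -\gamma$, $\beta = -\alpha$) are precisely what is needed to turn the boundary equations into the trigonometric condition $\cos(n\theta) = 0$.

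First, for any interior row $2 \le j \le n-1$, the eigenvalue equation $A^{(n)}\vec{v} = \lambda \vec{v}$ reads $a v_{j-1} + (b - \lambda) v_j + a v_{j+1} = 0$. Writing $\lambda = b + 2a\cos\theta$ and $\zeta = e^{i\theta}$, the general solution of this second-order recurrence is $v_j = A\zeta^j + B\zeta^{-j}$. Thus every candidate eigenvalue lies in $\{\,b + 2a\cos\theta : \theta \in [0,\pi]\,\}$, and the remaining task is to identify the admissible $\theta$'s from the boundary rows.

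Second, I would rewrite the first and last rows of $A^{(n)}\vec{v} = \lambda \vec{v}$ in terms of the ``virtual'' entries $v_0 := A + B$ and $v_{n+1} := A\zeta^{n+1} + B\zeta^{-n-1}$. After substituting $\delta = -\gamma$, $\beta = -\alpha$ and using the interior recurrence to eliminate $(\lambda - b)$, the two boundary equations collapse to
\begin{equation*}
a v_0 \;=\; \gamma v_1 + \alpha v_n, \qquad a v_{n+1} \;=\; -\gamma v_n - \alpha v_1.
\end{equation*}
Feeding in the ansatz gives a homogeneous $2\times 2$ system $M (A,B)^{T} = 0$ whose entries are Laurent polynomials in $\zeta$, and a nontrivial eigenvector exists iff $\det M = 0$. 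Expanding should produce
\begin{equation*}
\det M \;=\; a^2\bigl(\zeta^{-n-1} - \zeta^{n+1}\bigr) + (\gamma^2 - \alpha^2)\bigl(\zeta^{n-1} - \zeta^{-(n-1)}\bigr),
\end{equation*}
at which point the hypothesis $\gamma^2 - \alpha^2 = a^2$ lets me pull out a common factor and rewrite the bracket as $-(\zeta - \zeta^{-1})(\zeta^n + \zeta^{-n})$, so that $\det M = -4ia^2 \sin\theta\,\cos(n\theta)$.

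Third, $\det M = 0$ forces $\sin\theta = 0$ or $\cos(n\theta) = 0$, and the latter yields exactly $\theta_k = (2k-1)\pi/(2n)$ for $k = 1, \dots, n$. These give $n$ distinct values of $\cos\theta_k \in (-1,1)$, hence $n$ distinct eigenvalues $\lambda_k = b + 2a\cos\theta_k$; since $A^{(n)}$ is $n\times n$ these must exhaust the spectrum. The roots of $\sin\theta = 0$ (i.e.\ $\zeta = \pm 1$) are degenerate points where the ansatz collapses, and must be inspected separately using the alternative solution $v_j = (A + Bj)\zeta^j$; one checks that they do not produce additional eigenvalues beyond (possibly coinciding with) the boundary members of the $\theta_k$ family.

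The main obstacle is the bookkeeping in the determinant expansion: several cross-terms must be carefully collected before the cancellations become visible. The three hypotheses are essentially designed to force this factorisation to go through: $a = c$ symmetrises the interior recurrence, $\alpha + \beta = \gamma + \delta = 0$ makes the two boundary equations mirror images of each other (which produces the $\zeta \leftrightarrow \zeta^{-1}$ symmetry $M_{12} = \overline{M_{11}}$, $M_{22} = \overline{M_{21}}$ needed to simplify $\det M$), and $\gamma^2 - \alpha^2 = a^2$ is the exact identity that collapses the residual coefficient into the clean $\sin\theta \cos(n\theta)$ form.
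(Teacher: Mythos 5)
Your proposal is correct, and I verified the key computation: with $\beta=-\alpha$, $\delta=-\gamma$ the two boundary rows do reduce (after subtracting the interior recurrence extended to the virtual entries $v_0$, $v_{n+1}$) to $a v_0=\gamma v_1+\alpha v_n$ and $a v_{n+1}=-\gamma v_n-\alpha v_1$, and the resulting $2\times 2$ determinant expands, with all cross terms cancelling, to $a^2\bigl(\zeta^{-(n+1)}-\zeta^{n+1}\bigr)+(\gamma^2-\alpha^2)\bigl(\zeta^{n-1}-\zeta^{-(n-1)}\bigr)=-4ia^2\sin\theta\cos(n\theta)$ once $\gamma^2-\alpha^2=a^2$ is used, so the admissible frequencies are exactly $\theta_k=(2k-1)\pi/2n$ and the counting argument (for $n\geq 2$, the $n$ distinct values $b+2a\cos\theta_k$ with explicit nonzero eigenvectors exhaust the spectrum of an $n\times n$ matrix) closes the proof. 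Note, however, that the paper itself does not prove this statement at all: Lemma \ref{Lemma:Tridiagonal_Full_Penalty} is imported verbatim as Theorem 3.4~\textit{(iv)} of \cite{Yueh_Cheng_2008}, whose authors obtain it by symbolically solving the associated boundary-value difference equation; your transfer-matrix/trigonometric-ansatz derivation is therefore a self-contained alternative rather than a reconstruction of anything in this paper, and its value is precisely that it makes transparent why the three corner conditions are exactly what forces $\cos(n\theta)=0$. Two small points to tighten: the assertion that every eigenvalue lies in $\{b+2a\cos\theta:\theta\in[0,\pi]\}$ with \emph{real} $\theta$ is not justified a priori, since $A^{(n)}$ is not symmetric when $\alpha\neq 0$ (one would have to allow complex $\theta$ there); this does no harm because your final counting argument only uses the constructive direction, so you should simply drop that claim, and for the same reason the separate inspection of the degenerate roots $\sin\theta=0$ is unnecessary.
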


\begin{lemma}[Theorem 3.2 \textit{(viii)} of \cite{Yueh_Cheng_2008}] \label{Lemma:Tridiagonal_Partial_Penalty}
	Suppose that $a = c \neq 0$, $\alpha \beta  = \gamma \delta$, $\alpha + \beta = 0$ and $\gamma + \delta = a$. Define
	\begin{equation}
	\theta_k = \frac{(2k-1)\pi}{2n+1}, \quad k=1,2,\dots,n.
	\end{equation}
	Then the eigenvalues of $A^{(n)}$ are given by
	$\lambda_k = b + 2a \cos \theta_k$.
\end{lemma}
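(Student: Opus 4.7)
The plan is to find eigenvectors $v = (v_1, \dots, v_n)^\top$ satisfying $A^{(n)} v = \lambda v$. For the interior rows $2 \le j \le n-1$, the equation is $a v_{j-1} + b v_j + c v_{j+1} = \lambda v_j$; using $a = c$ and parametrising $\lambda = b + 2a\cos\theta$, this reduces to $v_{j-1} - 2\cos\theta \cdot v_j + v_{j+1} = 0$, which admits the general solution $v_j = P e^{ij\theta} + Q e^{-ij\theta}$. I would then introduce fictitious values $v_0$ and $v_{n+1}$ by extending the interior recurrence, and translate rows $1$ and $n$ of $A^{(n)} v = \lambda v$ into boundary conditions on these. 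A short calculation, which only uses the interior recurrence to eliminate $v_2$ and $v_{n-1}$ from the first and last rows, gives
\begin{equation*}
a v_0 = \gamma v_1 + \alpha v_n, \qquad a v_{n+1} = \beta v_1 + \delta v_n.
\end{equation*}

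Substituting the exponential ansatz yields a $2 \times 2$ homogeneous linear system in $(P, Q)$, and the vanishing of its determinant is the quantisation condition on $\theta$. The main algebraic task is to simplify this determinant using the hypotheses $\beta = -\alpha$, $\delta = a - \gamma$, and $\alpha \beta = \gamma \delta$ (equivalently $\alpha^2 = \gamma^2 - a\gamma$). The condition $\gamma + \delta = a$, in contrast to the symmetric $\gamma + \delta = 0$ of \Cref{Lemma:Tridiagonal_Full_Penalty}, breaks the reflection symmetry of the two ends of the chain and effectively introduces a half-integer shift at one boundary; this is what turns $2n$ into $2n+1$ in the denominator of $\theta_k$. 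I would therefore seek a phase representation $v_j = R \sin\bigl((j - \tfrac{1}{2})\theta + \psi\bigr)$, chosen so that the corner contributions align with standard sum-to-product identities rather than being scattered across a tangle of $e^{\pm i(n+1)\theta}$ terms.

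I expect the determinant to collapse to the single trigonometric equation $\cos\bigl((n + \tfrac{1}{2})\theta\bigr) = 0$, whose solutions in $(0, \pi)$ are exactly $\theta_k = (2k-1)\pi/(2n+1)$ for $k = 1, \dots, n$, yielding the claimed eigenvalues $\lambda_k = b + 2a \cos\theta_k$; completeness follows from having produced $n$ distinct values, matching the dimension of the matrix. The main obstacle will be this algebraic collapse: the constraint $\gamma(a - \gamma) = -\alpha^2$ does not immediately manifest as a clean identity on the corner terms, and finding the correct parametrisation of $(\alpha, \gamma)$ (for instance $\gamma = \tfrac{a}{2}(1 + \cos\phi)$, $\alpha = \tfrac{ia}{2}\sin\phi$, with corresponding expressions for $\beta, \delta$) before substituting is likely the cleanest route to making the reduction visible, after which the quantisation condition should follow by one application of the angle-addition formula.
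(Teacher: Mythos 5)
Your proposal is essentially correct, but note that the paper itself does not prove this statement at all: it is imported verbatim as Theorem 3.2~\textit{(viii)} of \cite{Yueh_Cheng_2008}, where Yueh and Cheng derive it by solving the associated difference equation with a symbolic-calculus/generating-function method. Your route is the more elementary transfer-matrix ansatz, and it does go through. Your boundary conditions are right: extending the interior recurrence to $j=1$ and $j=n$ via fictitious $v_0,v_{n+1}$ gives exactly $a v_0=\gamma v_1+\alpha v_n$ and $a v_{n+1}=\beta v_1+\delta v_n$. Moreover, the obstacle you anticipate is not actually there, and no special parametrisation of $(\alpha,\gamma)$ is needed: writing $v_j=Pz^j+Qz^{-j}$ with $z=e^{i\theta}$ and substituting $\beta=-\alpha$, $\delta=a-\gamma$, the $2\times2$ determinant expands so that every corner cross-term either cancels identically or carries the factor $\gamma(a-\gamma)+\alpha^2$, which vanishes by the hypothesis $\alpha\beta=\gamma\delta$. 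What remains is
\begin{equation*}
a^2\bigl[\sin\bigl((n+1)\theta\bigr)-\sin(n\theta)\bigr]
=2a^2\cos\Bigl(\bigl(n+\tfrac12\bigr)\theta\Bigr)\sin\Bigl(\tfrac{\theta}{2}\Bigr),
\end{equation*}
so for $\theta\in(0,\pi)$ the quantisation condition is precisely $\cos\bigl((n+\tfrac12)\theta\bigr)=0$, i.e.\ $\theta_k=(2k-1)\pi/(2n+1)$, as you predicted; this also confirms your heuristic that $\gamma+\delta=a$ (rather than $0$ as in Lemma~\ref{Lemma:Tridiagonal_Full_Penalty}) produces the half-integer shift turning $2n$ into $2n+1$.

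Two small points to tidy when writing this up: (i) check that each admissible $\theta_k$ yields a nonzero eigenvector --- this holds because $v\equiv 0$ with $(P,Q)\neq(0,0)$ would force $z^{2j}$ to be constant in $j$, impossible for $\theta\in(0,\pi)$ --- and that the $n$ values $b+2a\cos\theta_k$ are distinct (true since $a\neq0$ and $\cos$ is injective on $(0,\pi)$), which gives completeness as you say; (ii) the row-by-row reduction as stated presumes $n\geq3$ so that the corner entries do not overlap; the small cases can be checked directly.
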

\noindent
We note that $\Delta^{(n)}$ is a special case of $A^{(n)}$ with $a=c=-1/2$, $b=1$, $\gamma=\delta=-1/2$ and $\alpha=\beta=0$. 
Hence the above lemmas will be useful in proving the following:
\begin{lemma}[Starting Penalty Lemma] \label{Lemma:Minimum_Eigenvalue}
	Consider the $T\times T$ matrix  
	\begin{equation}
	\Delta^{(T)} +  \ket{k}\bra{k}
	\end{equation}
	for some basis state $\ket{k}$. Then 
	\begin{equation}
	\min_k\bigg( \lambda_0 \big(\Delta^{(T)} +  \ket{k}\bra{k}\big)  \bigg) = 1- \cos\bigg( \frac{\pi}{2T}\bigg)
	\end{equation}
	which occurs for $k=1,T$ for some $T> T_0$.
\end{lemma}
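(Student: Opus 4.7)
The plan is to combine the closed-form tridiagonal eigenvalue formulas from Lemmas~\ref{Lemma:Tridiagonal_Full_Penalty} and~\ref{Lemma:Tridiagonal_Partial_Penalty} with the Uncoupling Lemma (Lemma~\ref{Lemma:Divorce_Lemma}): handle the boundary cases $k \in \{1,T\}$ by direct evaluation, then use the Uncoupling Lemma to reduce each interior case $2 \leq k \leq T-1$ to an eigenvalue problem on two shorter walks. For $k=1$ the matrix $\Delta^{(T)} + \ket{1}\bra{1}$ fits the template \eqref{eq:Tridiagonal_Toeplitz_Matrix} with $a=c=-\tfrac{1}{2}$, $b=1$, $\gamma=\tfrac{1}{2}$, $\delta=-\tfrac{1}{2}$, $\alpha=\beta=0$, and these satisfy $\gamma\delta-\alpha\beta=-a^{2}$, $\gamma+\delta=0$, $\alpha+\beta=0$, so Lemma~\ref{Lemma:Tridiagonal_Full_Penalty} gives the spectrum $\{1-\cos((2j-1)\pi/(2T))\}_{j=1}^{T}$, whose minimum is exactly $1-\cos(\pi/(2T))$. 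The reflection $j \mapsto T+1-j$ handles $k=T$ identically.

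For an interior $k$, by that same reflection I may assume without loss of generality that $k \leq (T+1)/2$, so that $n := k-1$ satisfies $n \leq (T-1)/2$. Lemma~\ref{Lemma:Divorce_Lemma} then gives
\begin{equation*}
\Delta^{(T)} + \ket{k}\bra{k} \;\geq\; \bigl(\Delta^{(n)} + \tfrac{1}{4}\ket{n}\bra{n}\bigr) \,\oplus\, \bigl(\Delta^{(T-k+1)} + \tfrac{1}{2}\ket{k}\bra{k}\bigr),
\end{equation*}
and it suffices to lower-bound each of the two summands by $1-\cos(\pi/(2T))$. After relabelling, the longer block is of the form covered by Lemma~\ref{Lemma:Tridiagonal_Partial_Penalty} with $\gamma=-\tfrac{1}{2}$, $\delta=0$, $\alpha=\beta=0$ (the hypotheses $\alpha\beta=\gamma\delta$, $\alpha+\beta=0$, $\gamma+\delta=a$ are all satisfied), giving minimum eigenvalue $1-\cos(\pi/(2T-2k+3))$, which strictly exceeds $1-\cos(\pi/(2T))$ as soon as $k \geq 2$.

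The remaining and main obstacle is the shorter block $B_{1} := \Delta^{(n)} + \tfrac{1}{4}\ket{n}\bra{n}$, whose ``in-between'' weight $\tfrac{1}{4}$ fits neither Lemma~\ref{Lemma:Tridiagonal_Full_Penalty} nor~\ref{Lemma:Tridiagonal_Partial_Penalty}. I would derive its eigenvalue equation directly by inserting the ansatz $\psi_{j} = \cos((j-\tfrac{1}{2})\theta)$, $\lambda = 1-\cos\theta$: the Neumann boundary at $j=1$ and the interior three-term recurrence are automatically satisfied, and the modified boundary condition at $j=n$ reduces to
\begin{equation*}
\tan\bigl((n-\tfrac{1}{2})\theta\bigr) \;=\; \frac{\cos\theta - \tfrac{1}{2}}{\sin\theta}.
\end{equation*}
On the interval $(0,\pi/(2n-1))$ the left-hand side strictly increases from $0$ to $+\infty$ while the right-hand side strictly decreases from $+\infty$ (its derivative equals $(\tfrac{1}{2}\cos\theta-1)/\sin^{2}\theta < 0$), so there is a unique crossing, which is the sought smallest positive root $\theta_{\min}$. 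To show $\theta_{\min} > \pi/(2T)$ I would evaluate both sides at $\theta = \pi/(2T)$: the bound $n \leq (T-1)/2$ forces $(n-\tfrac{1}{2})\pi/(2T) \leq \pi/4 - \pi/(2T)$, so the left-hand side is at most $\tan(\pi/4 - \pi/(2T)) < 1$, while an elementary check shows that the right-hand side $(\cos(\pi/(2T))-\tfrac{1}{2})/\sin(\pi/(2T))$ strictly exceeds $1$ for all $T$ above a small explicit threshold (the inequality reduces to $\sqrt{2}\cos(\pi/(2T)+\pi/4) > \tfrac{1}{2}$, valid for $T \geq 4$). By the opposite monotonicity of the two sides, the crossing therefore lies strictly to the right of $\pi/(2T)$, giving $\lambda_{0}(B_{1}) > 1 - \cos(\pi/(2T))$. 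Combining the two block estimates yields $\lambda_{0}(\Delta^{(T)}+\ket{k}\bra{k}) > 1-\cos(\pi/(2T))$ for every interior $k$ and all $T > T_{0}$, which together with the boundary computation establishes the claimed minimum. The pinch-point is exactly this $B_{1}$ argument: the awkward $\tfrac{1}{4}$ weight has no convenient closed form, and the final ``RHS exceeds $1$'' step fails for the very smallest values of $T$, which is precisely what forces the ``for some $T > T_{0}$'' qualifier in the statement.
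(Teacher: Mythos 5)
Your proposal is correct, and for the boundary cases $k=1,T$ it coincides with the paper's argument (same application of Lemma~\ref{Lemma:Tridiagonal_Full_Penalty} with $\gamma=-\delta=\tfrac{1}{2}$). Where you genuinely diverge is the interior case: the paper does \emph{not} invoke the Uncoupling Lemma here. Instead, for $4\leq k\leq\lfloor T/2\rfloor$ it uses a different splitting that keeps only \emph{half} of $\Delta^{(k-1)}$ in the short block, so that block becomes $\tfrac{1}{2}\bigl(\Delta^{(k-1)}+\tfrac{1}{2}\ket{k-1}\bra{k-1}\bigr)$ and is again covered by the closed form of Lemma~\ref{Lemma:Tridiagonal_Partial_Penalty} (at the price of the factor $\tfrac{1}{2}$ in the resulting bound $\tfrac{1}{2}(1-\cos(\pi/(2\lfloor T/2\rfloor+1)))>1-\cos(\pi/(2T))$), while $k=2,3$ are handled by separate explicit $2\times 2$ decompositions. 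You instead take Lemma~\ref{Lemma:Divorce_Lemma} at face value and attack the awkward block $\Delta^{(n)}+\tfrac{1}{4}\ket{n}\bra{n}$ head-on with the ansatz $\psi_j=\cos((j-\tfrac{1}{2})\theta)$; I checked your secular equation $\tan((n-\tfrac{1}{2})\theta)=(\cos\theta-\tfrac{1}{2})/\sin\theta$, the monotonicity of both sides on $(0,\pi/(2n-1))$, and the evaluation at $\theta=\pi/(2T)$ (using $n\leq(T-1)/2$ and $\sqrt{2}\cos(\pi/(2T)+\pi/4)>\tfrac{1}{2}$ for $T\geq 4$), and they all hold, so your route is sound and in fact unifies $k=3$ with the generic case and yields explicit constants, whereas the paper's route stays entirely within the quoted Yueh--Cheng formulas and avoids any transcendental analysis. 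Two small caveats you should make explicit: (i) your secular-equation derivation presupposes $n\geq 2$, so $k=2$ (where the short block degenerates to the scalar $\tfrac{1}{4}$, trivially exceeding $1-\cos(\pi/(2T))$ for the relevant $T$) needs a one-line separate remark, exactly as the paper treats $k=2,3$ separately; and (ii) you should state why the crossing you locate gives the \emph{minimum} eigenvalue, e.g.\ note that $\lambda_0\bigl(\Delta^{(n)}+\tfrac{1}{4}\ket{n}\bra{n}\bigr)\leq\tfrac{1}{4}<2$ by Weyl, so it corresponds to a real root $\theta\in(0,\pi)$, every eigenvalue below $2$ arises from such a root (the first-row condition pins the eigenvector to your ansatz), and all roots outside $(0,\pi/(2n-1))$ are larger. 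With those two sentences added, your argument is a complete and somewhat more self-contained alternative to the paper's proof, consistent with its $T>T_0$ qualifier.
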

\begin{proof} 
	For the $k=1$ case, Lemma \ref{Lemma:Tridiagonal_Full_Penalty} gives the minimum eigenvalue of the above as 
	\begin{equation} 
	\lambda_0^{(k=1)} = 1-\cos(\frac{\pi}{2T}). 
	\end{equation}
	We now need to consider the $k=2,3$ cases separately. 
	\\
	\\
	\noindent
	$\mathbf{k=2:}$ \newline
	\\
	For the $k=2$ we consider splitting the matrix up into two separate matrices in the following way:
	\begin{scriptsize}
		
	\begin{align} \label{Eq:k=2_Case}
	\begin{pmatrix}
	1/2 &       -1/2&    &        &      & \\
	-1/2& \mathbf{2}&-1/2&        &      &  \\
	    &-1/2       &  1 &        &      &    \\
	    &           &    &  \ddots&  -1/2&       \\
	    &           &    &  -1/2  &  1   & -1/2  \\
	    &           &    &        &-1/2  & 1/2     \\
	\end{pmatrix} &=
	\begin{pmatrix}
	1/4& -1/2&&&& \\
	-1/2& 1&&&&  \\
	&&  &   &  &    \\
	&&&  \mathbf{0}&  &   &    \\
	&&&      &    &   \\
	&&&              & &  \\
	\end{pmatrix} \\
	&+ \begin{pmatrix}
	1/4 &      0&    &        &      & \\
	0& 1     &-1/2&        &      &  \\
	&-1/2   &  1 &        &      &    \\
	&       &    &  \ddots&  -1/2&       \\
	&       &    &  -1/2  &  1   & -1/2  \\
	&       &    &        &-1/2  & 1/2     \\
	\end{pmatrix} \\
	\end{align}
\end{scriptsize}
	\noindent Note the first matrix (i.e. the $2\times 2$ block) is semi-positive definite (with eigenvalues $\lambda\in \{0, 5/4\}$), hence the following inequality holds:
\begin{scriptsize}
	\begin{align}
	\begin{pmatrix}
1/2 &       -1/2&    &        &      & \\
-1/2& \mathbf{2}&-1/2&        &      &  \\
&-1/2       &  1 &        &      &    \\
&           &    &  \ddots&  -1/2&       \\
&           &    &  -1/2  &  1   & -1/2  \\
&           &    &        &-1/2  & 1/2     \\
	\end{pmatrix} &\geq
	\begin{pmatrix}
	1/4 &      0&    &        &      & \\
0& 1     &-1/2&        &      &  \\
&-1/2   &  1 &        &      &    \\
&       &    &  \ddots&  -1/2&       \\
&       &    &  -1/2  &  1   & -1/2  \\
&       &    &        &-1/2  & 1/2     \\
	\end{pmatrix}
	\end{align}
	
\end{scriptsize}

	We then note that the bottom-right block of the matrix is $\Delta^{(T-1)} + \frac{1}{2}\ket{1}\bra{1}$. From Lemma \ref{Lemma:Tridiagonal_Partial_Penalty} this has a minimum eigenvalue $\lambda_0^{(k=2)} = 1-\cos(\frac{\pi}{2T-1}) > \lambda_0^{(k=1)} = 1-\cos(\frac{\pi}{2T})$. Thus the $k=2$ case has a larger minimum eigenvalue than the $k=1$ case.
	\\
	\\
	\noindent
	$\mathbf{k=3}:$ \newline
	\\
	The $k=3$ case follows similarly: 
	\begin{scriptsize}
	
	\begin{align}
	\begin{pmatrix}
	1/2& -1/2&&& \\
	-1/2& 1&-1/2&&  \\
	&-1/2&  \mathbf{2}&  -1/2 &      \\
	&&-1/2  &1   &      \\
	&&&&  \ddots&  -1/2    \\
	&&&&    -1/2  & 1/2      \\
	\end{pmatrix}
	\geq
	\begin{pmatrix}
	1/2 & -1/2&      &       & \\
	-1/2&  3/4&     0&       &  \\
	    &   0 &     1&  -1/2 &      \\
	    &     &-1/2  &   1   &      \\
	    &     &      &       &  \ddots  &  -1/2      \\
	    &     &      &       &    -1/2  & 1/2      \\
	\end{pmatrix}
	\end{align}
		
\end{scriptsize}
\noindent
	where we have used the same method as in equation (\ref{Eq:k=2_Case}). The top-left block is positive definite with eigenvalues $(5\pm \sqrt{17})/8>0.1$.
	From Lemma \ref{Lemma:Tridiagonal_Partial_Penalty} the minimum eigenvalue of the right-hand side is then $ = 1-\cos(\frac{\pi}{2T-3}) >\lambda_0^{(k=1)}$. 
	Hence
	\begin{align}
	\lambda_0^{(k=2)} \geq 1-\cos(\frac{\pi}{2T-3}) > \lambda_0^{(k=1)}.
	\end{align}
	\\
	\\
	\noindent
	$\mathbf{4 \leq k \leq \lfloor T/2 \rfloor }:$ \newline
	\\
	We now consider the case for $k \geq 4$. 
	For this we will consider the matrix $\Delta^{(T)} +  \ket{k}\bra{k}$ and split it into two block diagonal components: 
	\begin{scriptsize}

	\begin{align}
	&
	\begin{pmatrix}
	1/2& -1/2&&&& \\
	-1/2& 1&&&&  \\
	&&\ddots  & -1/2  &  &    \\
	&&-1/2&  1&  -1/2&   &  &  \\
	&&&    -1/2  & \mathbf{2}   & -1/2  \\
	&&&              &-1/2 & 1 & -1/2 \\
	&&&     &         &-1/2 & \ddots   \\
	\end{pmatrix} 
	= \begin{pmatrix}
	0& 0&&&& \\
	0& 0&&&&  \\
	&&\ddots  &  0 &  &    \\
	&&0&  1/4&   -1/2&   &  &  \\
	&&&      -1/2  & 1   & 0  \\
	&&&              & 0 & 0 & 0 \\
	&&&     &         &0 & \ddots   \\
	\end{pmatrix}
	\\
	+ &\begin{pmatrix}
	1/4& -1/4&&&& \\
	-1/4& 1/2&&&&  \\
	&&\ddots  & -1/4  &  &    \\
	&&-1/4&  1/2&   0&   &  &  \\
	&&&      0  & 1   & -1/2  \\
	&&&              &-1/2 & 1 & -1/2 \\
	&&&     &         &-1/2 & \ddots   \\
	\end{pmatrix} + 
	\begin{pmatrix}
	1/4& -1/4&&&& \\
	-1/4& 1/2&&&&  \\
	&&\ddots  & -1/4  &  &    \\
	&&-1/4&  1/4&   0&   &  &  \\
	&&&      0  & 0   & 0  \\
	&&&              & 0 & 0 & 0 \\
	&&&     &         &0 & \ddots   \\
	\end{pmatrix}.
	\end{align}
	\end{scriptsize}
	Alternatively, we can write this as a block matrix decomposition:
	
	\begin{footnotesize}
	\begin{align}
	\Delta^{(T)} + \ket{k}\bra{k} =& \begin{pmatrix}
	\frac{1}{2}\Delta^{(k-1)} + \frac{1}{4}\ket{k-1}\bra{k-1} & 0  \\
	0 & \Delta^{(T-k+1)} + \frac{1}{2}\ket{k}\bra{k}
	\end{pmatrix} 
	+ \begin{pmatrix}
	\frac{1}{2}\Delta^{(k-1)}  & 0  \\
	0 & 0
	\end{pmatrix}
	\\
	&+  
	\begin{pmatrix}
	0  & &&  \\
	& 1/4 & -1/2& \\
	& -1/2 & 1 & \\
	& &&& 0 
	\end{pmatrix}.
	\end{align}					
	\end{footnotesize}

	\noindent The second and third matrices are both positive semi-definite, and thus
\begin{small}
		\begin{equation} \label{Eq:Starting_Penalty:Block}
	\Delta^{(T)} + \ket{k}\bra{k} \geq \begin{pmatrix}
	\frac{1}{2}\Delta^{(k-1)} + \frac{1}{4}\ket{k-1}\bra{k-1} & 0  \\
	0 & \Delta^{(T-k+1)} + \frac{1}{2}\ket{k}\bra{k}
	\end{pmatrix}. 
	\end{equation}
\end{small}
	Without loss of generality, we can now restrict to $k \leq T/2$ (if $k> T/2$ we can do the same process as above, but swapping around the top-left and bottom-right blocks). We consider the two blocks separately:
	
	\paragraph{Top-Left Block:}
	~\newline
	
	From Lemma \ref{Lemma:Tridiagonal_Partial_Penalty} we find that the minimum eigenvalue of $\Delta^{(j)} + \frac{1}{2}\ket{j}\bra{j} $, denoted $\lambda_0^{(j)}$, is
	\begin{equation}
	\lambda_0^{(j)} = 1 - \cos(\frac{\pi}{2j+1}).
	\end{equation}
	\noindent 
	Hence we know that the smallest possible minimum eigenvalue of the top-left block of right-hand side of \ref{Eq:Starting_Penalty:Block} occurs when $j=\lfloor T/2 \rfloor$. Thus 
	\begin{align}
	 \lambda_0^{(\lfloor T/2\rfloor)} &= \frac{1}{2}\bigg(1 - \cos(\frac{\pi}{2\lfloor T/2 \rfloor+1})\bigg) \\
	 &> 1 - \cos(\frac{\pi}{2T}) = \lambda_0^{(k=1)}. 
	\end{align}
	
	\paragraph{Bottom-Right Block:}
	~\newline
	
	Again, we find the minimum eigenvalue of the bottom block is 
	\begin{align}
	\lambda_0 &= 1- \cos(\frac{\pi}{ 2(T-k+1)+1 }) \\
	 &> 1 - \cos(\frac{\pi}{2T})\\
	 &= \lambda_0^{(k=1)},
	\end{align}
	where the first to second line follows from the fact $4 \leq k\leq \lfloor T/2 \rfloor $.
	\newline	

	Thus $\min_{k} \lambda_0\left(\Delta^{(T)} +  \ket{k}\bra{k}\right)$ is achieved for $k=1, T$, for $T>T_0 = 3$. 
\end{proof}

\subsection{Endpoint Penalty Analysis}

In \cite{Caha_Landau_Nagaj} the spectra of quantum walks with end point penalties was considered. 
We do the same here and determine bounds for the ground state energies of these walks for a range of different strength penalties.
Our main object of study will be the Hamiltonian 
\begin{align}
H_T(\mu) = \Delta^{(T)}+ \mu \ket{T}\bra{T},
\end{align}
where we will explore how the minimum eigenvalue varies as a function of $T, \mu$.

\begin{lemma}
	The eigenvalues of $H_T(\mu)$ are the solutions of the equation
	\begin{align}
	\sqrt{\frac{\lambda}{\lambda-2}}\frac{y_T(\lambda)-x_T(\lambda)}{y_T(\lambda)+x_T(\lambda)} = \frac{\mu}{1-\mu},
	\end{align}
	where 
	\begin{align}
	x_T(\lambda)&= (1-\lambda - \sqrt{\lambda(\lambda-2)})^T      \\
	y_T(\lambda)&= (1-\lambda + \sqrt{\lambda(\lambda-2)})^T.
	\end{align}
\end{lemma}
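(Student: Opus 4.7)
The plan is to diagonalise $H_T(\mu)$ by solving the three-term recurrence that its eigenvalue equation induces. Writing an eigenvector as $|v\rangle = \sum_{j=1}^{T} v_j |j\rangle$, the equation $H_T(\mu)|v\rangle = \lambda|v\rangle$ reads
\begin{align}
\tfrac{1}{2}v_1 - \tfrac{1}{2}v_2 &= \lambda v_1, \\
-\tfrac{1}{2}v_{j-1} + v_j - \tfrac{1}{2}v_{j+1} &= \lambda v_j, \qquad 2 \le j \le T-1, \\
-\tfrac{1}{2}v_{T-1} + (\tfrac{1}{2}+\mu)v_T &= \lambda v_T.
\end{align}
First I would note that, if we formally extend the sequence by $v_0$ and $v_{T+1}$, the bulk recurrence $v_{j+1} - 2(1-\lambda)v_j + v_{j-1} = 0$ holds for all interior indices precisely when the left and right boundary conditions become $v_0 = v_1$ and $v_{T+1} = (1-2\mu)v_T$. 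This is the cleanest way to encode the soft boundaries of $\Delta^{(T)}$ and the penalty at site $T$.

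Next I would solve the linear recurrence. Its characteristic roots are $r_{\pm} = (1-\lambda) \pm \sqrt{\lambda(\lambda-2)}$, so $r_+^T = y_T(\lambda)$ and $r_-^T = x_T(\lambda)$ in the paper's notation, and $r_+ r_- = 1$. Hence the general solution is $v_j = A r_+^j + B r_-^j$. The left boundary $v_0 = v_1$ gives $A(1-r_+) + B(1-r_-) = 0$, which after a short manipulation (using $r_\pm - 1 = -\lambda \pm \sqrt{\lambda(\lambda-2)}$) fixes the ratio $B/A$ up to a symmetric normalisation; it is convenient to write $v_j = A r_+^{j-1/2} + B r_-^{j-1/2}$ with $A = B$, so that $v_j \propto r_+^{j-1/2} + r_-^{j-1/2}$ automatically satisfies $v_0 = v_1$.

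Then I would impose the right boundary $v_{T+1} = (1-2\mu) v_T$. Substituting the ansatz gives
\begin{equation}
r_+^{T+1/2} + r_-^{T+1/2} = (1-2\mu)\bigl(r_+^{T-1/2} + r_-^{T-1/2}\bigr).
\end{equation}
Clearing the common $r_\pm^{-1/2}$, and using $r_+ + r_- = 2(1-\lambda)$ and $r_+ - r_- = 2\sqrt{\lambda(\lambda-2)}$, the equation rearranges to
\begin{equation}
\mu\bigl(r_+^{T} + r_-^{T}\bigr)\bigl(r_+ + r_-\bigr)
= (1-\mu)\bigl(r_+^{T} - r_-^{T}\bigr)\bigl(r_+ - r_-\bigr) \cdot \tfrac{1}{\text{(factor)}},
\end{equation}
where the algebraic factor cleans up because $(r_+ - r_-)/(r_+ + r_-)$ and the square roots recombine. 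The key algebraic identity I would use is
\begin{equation}
\frac{r_+ - r_-}{r_+ + r_-} \;=\; \frac{\sqrt{\lambda(\lambda-2)}}{1-\lambda},\qquad \sqrt{\frac{\lambda}{\lambda-2}} = \frac{\sqrt{\lambda(\lambda-2)}}{\lambda-2},
\end{equation}
which together let the $\lambda$-dependent prefactor collapse to exactly $\sqrt{\lambda/(\lambda-2)}$. After replacing $r_+^T, r_-^T$ by $y_T(\lambda), x_T(\lambda)$ and isolating $\mu/(1-\mu)$, one recovers the stated equation
\begin{equation}
\sqrt{\frac{\lambda}{\lambda-2}}\;\frac{y_T(\lambda) - x_T(\lambda)}{y_T(\lambda) + x_T(\lambda)} \;=\; \frac{\mu}{1-\mu}.
\end{equation}

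The main obstacle is purely bookkeeping: keeping track of branches of $\sqrt{\lambda(\lambda-2)}$ (for $\lambda \in (0,2)$ the substitution $1-\lambda = \cos\theta$, $\sqrt{\lambda(\lambda-2)} = i\sin\theta$ makes the identity manifestly $\tan(T\theta)\tan(\theta/2) = \mu/(1-\mu)$, which can be used as a sanity check) and making sure the degenerate case $r_+ = r_-$ (i.e.\ $\lambda \in \{0,2\}$) is handled separately, since at those points both numerator and denominator of the characteristic-root expressions degenerate. Apart from this, the argument is a direct computation.
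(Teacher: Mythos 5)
Your proposal is correct, and it reaches the stated transcendental equation by a genuinely different route from the paper. The paper works with the characteristic polynomial directly: it sets up the standard continuant recurrence for the leading principal minors of $H_T(\mu)-\lambda\mathds{1}$ ($f_0=1$, $f_1=\tfrac12-\lambda$, bulk step with coefficient $1-\lambda$, modified last step containing $\mu$), solves that three-term recurrence in closed form in terms of the roots $1-\lambda\pm\sqrt{\lambda(\lambda-2)}$, and rearranges $p_T(\lambda)=0$ into the stated ratio. You instead solve the eigenvector recurrence: ghost sites with boundary conditions $v_0=v_1$ and $v_{T+1}=(1-2\mu)v_T$ (both of which you identify correctly for this matrix), the ansatz $v_j\propto r_+^{\,j-1/2}+r_-^{\,j-1/2}$ which automatically satisfies the left boundary since $r_+r_-=1$, and then the right boundary as a quantization condition. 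The two arguments are dual uses of the same linear recurrence with characteristic roots $r_\pm$, but yours buys two things the paper's terse computation does not: it produces the eigenvectors explicitly as a by-product, and via $1-\lambda=\cos\theta$ it makes the condition manifestly $\tan(T\theta)\tan(\theta/2)=\mu/(1-\mu)$, which is easy to check against the known spectra at $\mu=0,\tfrac12,1$ used later in the paper (indeed your form reproduces $\theta=\frac{(2k-1)\pi}{2T}$ at $\mu=1$ and $\theta=\frac{(2k-1)\pi}{2T+1}$ at $\mu=\tfrac12$). The only care needed is the one you already flag: fixing a consistent branch of $\sqrt{\lambda(\lambda-2)}$ (so that $\sqrt{\lambda/(\lambda-2)}\,\frac{y_T-x_T}{y_T+x_T}$ is real for $\lambda\in(0,2)$) and treating $\lambda\in\{0,2\}$, where $r_+=r_-$, separately; with that, the argument is complete.
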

\begin{proof}
	This follows from a standard recurrence relation for tridiagonal matrices: consider the characteristic equation $\det(H-\lambda\mathds{1})=0$. We can use a standard continuant recurrence relation:
	\begin{align}
	f_0 &= 1, \\
	f_1 &= 1/2-\lambda, \\
	f_2 &= (1-\lambda) f_1 - (1/4)f_0, \\
	&\vdots \\
	f_n &= (1/2+\mu -\lambda)f_{n-1} - (1/4)f_{n-1}.
	\end{align} 
	Solving this gives the characteristic equation
	\begin{align}
	p_T(\lambda)&= \frac{-2^{T-1}}{\sqrt{\lambda-2}}\big((\mu-1)x_T(\lambda)\sqrt{\lambda}+ \mu y_T(\lambda)\sqrt{\lambda-2}\big) \\
	&=0.
	\end{align}
	Rearranging gives the formula as in the lemma statement.
\end{proof}

Using the above lemma, we now prove properties of the eigenvalues of $H_T(\mu)$: 
\begin{theorem}
	 For $\mu = k/T$, the minimum eigenvalue of $H_T(\mu)$ is bounded by
	\begin{align}
	\lambda_0\bigg(H_T\bigg(\frac{k}{T}\bigg)\bigg) = \Theta\bigg(\frac{k}{T^2}\bigg),
	\end{align}
	for $k=O(1)$ and sufficiently large $T$. 

	Furthermore, for all $\mu$ and $m\in \mathds{N}$, $1\leq m\leq T$, 
	\begin{align}
	\lambda_m(H_T(\mu)) \geq \mu\bigg( 1- \cos(\frac{(2m-1)\pi}{2T})\bigg).
	\end{align}
\end{theorem}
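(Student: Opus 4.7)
My plan is to address the two claims separately, since they call for different techniques.

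For the eigenvalue lower bound $\lambda_m(H_T(\mu))\geq\mu\bigl(1-\cos((2m-1)\pi/(2T))\bigr)$, the approach is a simple operator inequality followed by Weyl's monotonicity theorem. I would decompose $H_T(\mu) = \mu\bigl(\Delta^{(T)}+\ket{T}\bra{T}\bigr)+(1-\mu)\Delta^{(T)}$. Since $\Delta^{(T)}$ is a path-graph Laplacian and hence positive semidefinite, for $\mu\in[0,1]$ (the natural range for a penalty weight) the second summand is positive semidefinite, giving $H_T(\mu)\geq\mu\bigl(\Delta^{(T)}+\ket{T}\bra{T}\bigr)$ in the PSD order. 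Weyl's inequality then yields $\lambda_m(H_T(\mu))\geq\mu\,\lambda_m(\Delta^{(T)}+\ket{T}\bra{T})$, and the eigenvalues of $\Delta^{(T)}+\ket{T}\bra{T}$ can be read off from Lemma~\ref{Lemma:Tridiagonal_Full_Penalty} (with $a=c=-1/2$, $b=1$, $\alpha=\beta=0$, $\gamma=-1/2$, $\delta=1/2$) as exactly $1-\cos((2m-1)\pi/(2T))$.

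For the scaling $\lambda_0(H_T(k/T))=\Theta(k/T^2)$, I would handle the upper and lower bounds separately. The upper bound is variational: the uniform state $\ket{\psi}=T^{-1/2}\sum_{j=1}^T\ket{j}$ lies in the kernel of $\Delta^{(T)}$, so $\bra{\psi}H_T(k/T)\ket{\psi}=k/T^2$, which gives $\lambda_0\leq k/T^2$. The lower bound is extracted from the characteristic equation of the preceding lemma. The crucial step is the substitution $\lambda=1-\cos\theta$, under which $1-\lambda\pm\sqrt{\lambda(\lambda-2)}=e^{\pm i\theta}$, so $x_T(\lambda)=e^{-iT\theta}$ and $y_T(\lambda)=e^{iT\theta}$, while $\sqrt{\lambda/(\lambda-2)}=i\tan(\theta/2)$. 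This collapses the characteristic equation into a real trigonometric relation whose smallest positive root $\theta_0$ determines $\lambda_0$. Setting $\theta=c/T$ and expanding to leading order in $1/T$ with $\mu=k/T$ produces a single transcendental condition of the form $c\tan c = 2k + O(1/T)$. For $k=O(1)$, this has a unique root $c_\star\in(0,\pi/2)$ with $c_\star^2=\Theta(k)$, so $\lambda_0 = 1-\cos(c_\star/T) = \Theta(k/T^2)$, matching the variational upper bound.

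The main obstacle will be the transcendental-equation step: one must verify that the smallest positive solution $\theta_0$ really does sit in the regime $\theta=O(1/T)$ (so that the Taylor expansion is valid and no smaller root hides at smaller scales), and one must control the $O(1/T)$ correction uniformly in $T$ well enough to confirm $c_\star=\Theta(\sqrt{k})$ across the entire range $k=O(1)$. Monotonicity of $c\mapsto c\tan c$ on $(0,\pi/2)$ together with elementary bounds on the residual corrections should make this tractable, but it is the only part of the argument that is not essentially a one-line operator inequality.
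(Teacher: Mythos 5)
Your proposal is correct, and it follows the paper's route in outline while improving on it in execution. For the second claim you use exactly the paper's idea — split off $\mu(\Delta^{(T)}+\ket{T}\bra{T})$, drop the remaining PSD piece, apply eigenvalue monotonicity, and read off the spectrum of $\Delta^{(T)}+\ket{T}\bra{T}$ from Lemma~\ref{Lemma:Tridiagonal_Full_Penalty}; note that you write the decomposition with the correct sign, $H_T(\mu)=\mu(\Delta^{(T)}+\ket{T}\bra{T})+(1-\mu)\Delta^{(T)}$, whereas the paper's displayed identity has a spurious minus sign (its conclusion is the one you reach). For the first claim the paper also starts from the characteristic equation of the preceding lemma, but it obtains both bounds by locating the poles and zeros of $g_T$ (using the known $\mu=0,\tfrac12,1$ spectra and a figure) and then expanding $g_T(k/T^2)$ in powers of $1/T$, asserting $c\,k/T<g_T(k/T^2)<c'k/T$. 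Your treatment differs usefully in two places: the upper bound $\lambda_0\leq k/T^2$ comes from a one-line variational argument with the uniform vector, which is more elementary and more explicit than the paper's reading of $g_T$; and the lower bound is made transparent by the substitution $\lambda=1-\cos\theta$, which turns the secular equation into $\tan(\theta/2)\tan(T\theta)=\mu/(1-\mu)$ and reduces the asymptotics to the root of $c\tan c=2k+O(1/T)$ — equivalent to the paper's expansion $g_T(k/T^2)=\sqrt{k/2}\tan(\sqrt{2k})/T+O(T^{-2})$, but with the uniformity in $k=O(1)$ easier to control, which is exactly the point you flag as the remaining work. One caution when you write this up: fix the branches of $\sqrt{\lambda(\lambda-2)}$ and $\sqrt{\lambda/(\lambda-2)}$ consistently. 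With the choices you state ($x_T=e^{-iT\theta}$, $y_T=e^{iT\theta}$, $\sqrt{\lambda/(\lambda-2)}=i\tan(\theta/2)$) the product comes out as $-\tan(\theta/2)\tan(T\theta)$, while the correct real form — as one checks against the known $\mu=\tfrac12$ eigenvalues $\theta=(2k-1)\pi/(2T+1)$ — carries a plus sign, which is the version your final transcendental equation implicitly uses; also record that $\lambda_0\in(0,\,1-\cos(\pi/2T)]$ (e.g.\ by monotonicity in $\mu$ between the $\mu=0$ and $\mu=1$ spectra, or from your variational bound) so that the smallest root really lies in the regime $\theta=O(1/T)$ where the expansion is applied.
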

\begin{proof}
	We first take the characteristic equation and consider $\mu=0,1/2,1$ values. 
	The eigenvalues corresponding to these values are known analytically by \cite{Yueh_Cheng_2008}\cite{Yueh_2005}. 
	Rearranging the characteristic equation gives
	\begin{align} \label{Eq:g_function}
	g_T(\lambda)=\sqrt{\frac{\lambda}{\lambda-2}}\frac{y_T(\lambda)-x_T(\lambda)}{y_T(\lambda)+x_T(\lambda)} = \frac{\mu}{1-\mu},
	\end{align}
	which can be equivalently written as 
	\begin{align} 
	\frac{p_T^{(\mu=0)}(\lambda)}{p_T^{(\mu=1)}(\lambda)} = \frac{\mu}{1-\mu}.
	\end{align}
	A sketch of $g_T(\lambda)$ can be seen in figure \ref{Fig:Characterstic_Ratio} for $T=7$.
	
	The eigenvalues for $\mu=0,1$ are known known analytically, hence it is known $p_T^{(\mu=1)}(\lambda)$ has zeros at $\lambda_k=1-\cos(\frac{(2k-1)\pi}{2T})$ for $k=1,2,\dots, T$, and $p_T^{(\mu=0)}(\lambda)$ has zeros at $\lambda_k= 1 - \cos(\frac{2k\pi}{(2T+1)})$, $k=0,1,\dots, T-1$.
	From this, we find that $g_T(\lambda)$ has poles at $\lambda_k=1-\cos(\frac{(2k-1)\pi}{2T})$ for $k=1,2,\dots, T$, and zeros at $\lambda_k= 1 - \cos(\frac{2k\pi}{(2T+1)})$, $k=0,1,\dots, T-1$ which are the eigenvalues of the $\mu=0$ case.
	
	Furthermore, we know the eigenvalues of the $\mu=1/2$ case and know that this occurs when the left-hand side of equation \ref{Eq:g_function} is equal to one.
	The eigenvalues for this case are $\gamma_k = 1- \cos(\frac{(2k-1)\pi}{2T+1})$, $k=1,2,\dots,T$, hence $g(\gamma_k)=1$ \cite{Yueh_2005}. 
	
	Hence we know $g_T(\lambda)$ must look like Fig. \ref{Fig:Characterstic_Ratio}. 
	By examining this, we can trivially put an upper bound on many of the eigenvalues, including the smallest, as $O(T^{-2})$.

	\begin{figure}[h!] 
		\centering
		\includegraphics[width=1\textwidth]{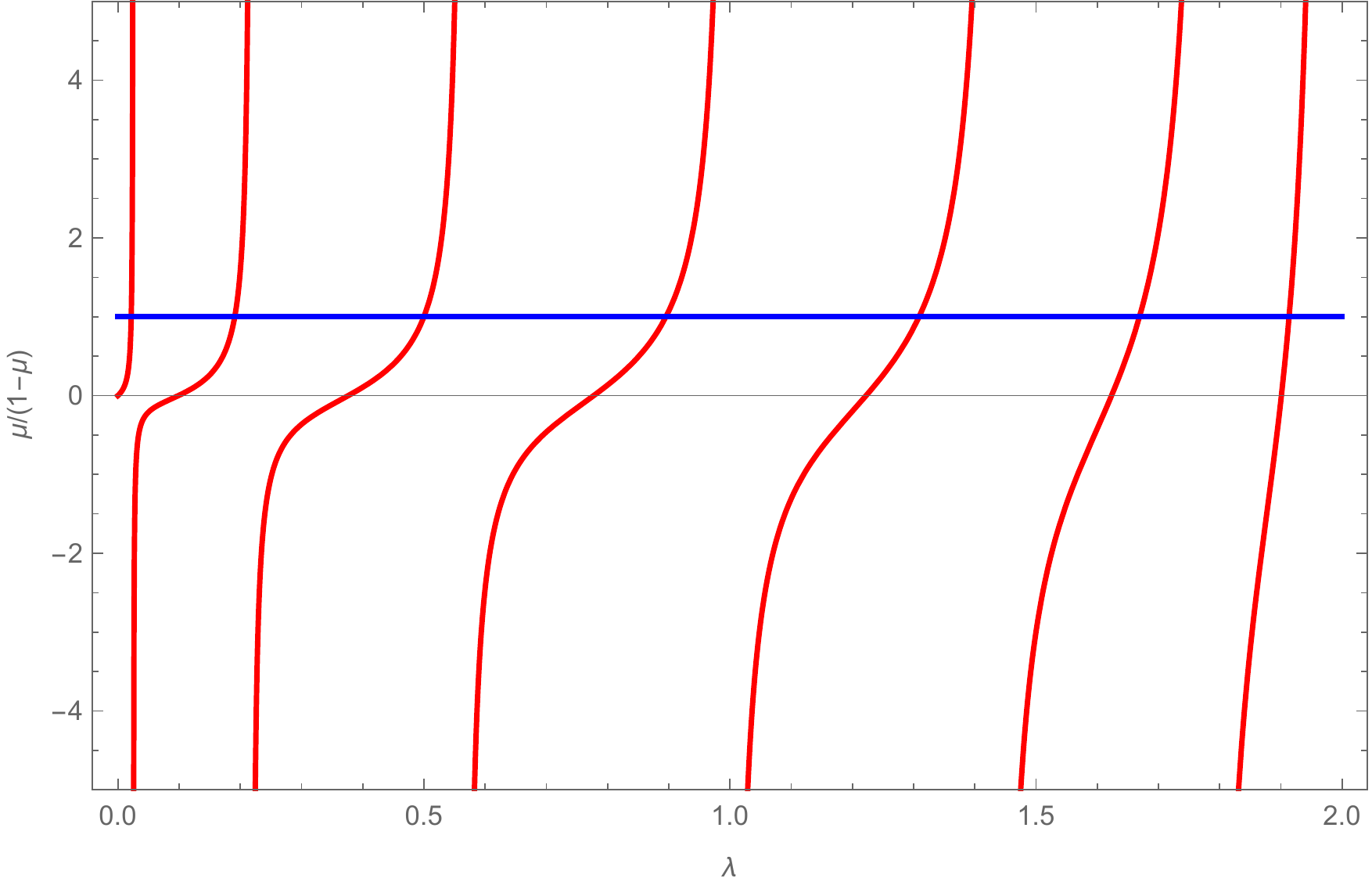}
		\caption{The red line represents function $g_7(\lambda)$. 
		The blue horizontal line corresponds to $\mu=1/2$.}
		\label{Fig:Characterstic_Ratio}
	\end{figure}

	To find solutions we consider where $g_T$ intersects horizontal lines of $\mu/(1-\mu)$ (see Fig. \ref{Fig:Characterstic_Ratio} for an example). 
	In particular we are interested in the case of large $T$ when $\mu$ is small. 
	We consider the cases where $\mu=k/T$ for some $k=O(1)$. 
	
	We consider the expansion of $g_T$ around $T= \infty$ in powers of $1/T$ to get
	\begin{align}
	g_T\bigg(\frac{k}{T^2}\bigg) = \sqrt{\frac{k}{2}}\tan(\sqrt{2k})\frac{1}{T} + O(T^{-2}).
	\end{align}
	We want to determine when this is equal to $\mu/(1-\mu) = k/(T-k)$. 
	Hence for sufficiently large $T$ and $k=O(1)$, we can write 
	\begin{align}
	c\frac{k}{T}< g_T\bigg(\frac{k}{T^2}\bigg)< c'\frac{k}{T}
	\end{align} 
	for some $c, c'=O(1), \  c'>c$. This gives 
	\begin{align}
	\lambda_0\bigg(H_T\bigg(\frac{k}{T}\bigg)\bigg) = \Theta\bigg(\frac{k}{T^2}\bigg).
	\end{align}
	

	\paragraph{Lower Bound}
	
	Finally we lower bound the eigenvalues:
	\begin{align}
	\Delta^{(T)} + \mu\ket{T}\bra{T} &= \mu(\Delta^{(T)}+ \ket{T}\bra{T}) - (1-\mu)\Delta^{(T)} \\
	&\geq \mu(\Delta^{(T)}+ \ket{T}\bra{T})
	\end{align}
	Hence, using the above inequality, we have that 
	\begin{align}
	\lambda_m(\Delta^{(T)} + \mu\ket{T}\bra{T}) &\geq \mu \lambda_m(\Delta^{(T)} + \ket{T}\bra{T}) \\
	&\geq \mu\bigg(1 - \cos(\frac{(2m-1)\pi}{2T})\bigg). 
	\end{align}
	
\end{proof}
Although we do not prove it, numerical analysis suggests that the gradient of $g_\mu(\lambda)$ at $\mu=1/2$ scales as $O(T^3)$. 
This suggests that for general $0<\mu<1/2$ and large $T$, the best bound achievable is $\lambda_0(H_T(\mu))=O(\mu^{T^{-3}}/T^2)$. 

We note the first bound in the lemma statement is an extension on \cite{Caha_Landau_Nagaj} where it was shown that $\lambda_0(H_T(\mu))=O(\mu/T)$. 
While this would give us the same upper bound as our result, it does not give the same lower bound. 
We note that the lower $\Omega(k/T^2)$ result here is only true for $\mu=k/T, \ k=O(1)$ and does \emph{not} apply for the case $\mu=O(1)$ in general.

\section{Hamiltonian Analysis for Standard Form Hamiltonians} \label{Sec:Standard_Form_Hamiltonian_Analysis}

In this section we consider Hamiltonians which encode computation with uniform weight transition rules. 
We also restrict ourselves to computations which do not branch --- given a basis state there is at most one transition rule that applies to it. 
We further restrict ourselves to the analysis of computations which do not branch into multiple tracks (such as those described by unitary labelled graphs as per \cite{Bausch_Cubitt_Ozols_2016}).

We first consider computations which have a deterministically accepted or rejected output: Exact-QMA (EQMA), as define in Def. \ref{Def:EQMA}. 
This will give us the relevant tools for examining Hamiltonians that encode QMA instances.

\subsection{Standard Form Hamiltonians}

The Hamiltonians we are interested in will fit into specific class of Hamiltonians which we call ``standard form Hamiltonians''. 
The idea will be that we encode the verification of problem instances in these Hamiltonians in a history state construction, as per \cite{Kitaev_Shen_Vyalyi}. Thus, as usual for these constructions, this class of Hamiltonians will contain three types of terms. 
The first is \emph{transition terms} which force the evolution from one state to the next. 
The second type of terms are \emph{penalty terms} which act to assign an energy penalty to any states which are not allowed by the computation. 
The final set of terms are \emph{computational penalty terms} which penalise computations that are not correctly initialised or result in a NO instance after the computation has been run.
We label this class `Standard-Form Hamiltonians', the definition of which is a modification to the class of the same name from \cite{Cubitt_Perez-Garcia_Wolf}.

\begin{definition}[Standard Basis States, from Section 4.1 of \cite{Cubitt_Perez-Garcia_Wolf}]
	Let the single site Hilbert space be $\mathcal{H}=\otimes_i \mathcal{H}_i$ and fix some orthonormal basis for the single site Hilbert space. Then a \emph{Standard Basis State} for $\mathcal{H}^{\otimes L}$ are product states over the single site basis.
\end{definition}
\noindent We now define standard-form Hamiltonians --- extending the definition from \cite{Cubitt_Perez-Garcia_Wolf}:
\begin{definition}[Standard-form Hamiltonian, definition extended from \cite{Cubitt_Perez-Garcia_Wolf}]
	\label{Def:Standard-form_H}
	We say that a Hamiltonian $H = H_{trans} + H_{pen} + H_{in} + H_{out}$ acting on a Hilbert space $\mathcal{H} = (\mathds{C}^C\otimes\mathds{C}^Q)^{\otimes L} = (\mathds{C}^C)^{\otimes L}\otimes(\mathds{C}^Q)^{\otimes L} =: \mathcal{H}_C\otimes\mathcal{H}_Q$ is of standard form if $H_{trans,pen, in, out} = \sum_{i=1}^{L-1} h_{trans,pen, in, out}^{(i,i+1)}$, and the local interactions $h_{trans,pen, in, out}$ satisfy the following conditions:
	\begin{enumerate}
		\item $h_{trans} \in \mathcal{B}\left((\mathds{C}^C\otimes\mathds{C}^Q)^{\otimes 2}\right)$ is a sum of transition rule terms, where all the transition rules act diagonally on $\mathds{C}^C\otimes\mathds{C}^C$ in the following sense. Given standard basis states $a,b,c,d\in\mathds{C}^C$, exactly one of the following holds:
		\begin{itemize}
			\item there is no transition from $ab$ to $cd$ at all; or
			\item $a,b,c,d\in\mathds{C}^C$ and there exists a unitary $U_{abcd}$ acting on $\mathds{C}^Q\otimes\mathds{C}^Q$ together with an orthonormal basis $\{\ket{\psi_{abcd}^i}\}_i$ for $\mathds{C}^Q\otimes\mathds{C}^Q$, both depending only on $a,b,c,d$, such that the transition rules from $ab$ to $cd$ appearing in $h_{trans}$ are exactly $\ket{ab}\ket{\psi^i_{abcd}}\rightarrow \ket{cd}U_{abcd}\ket{\psi^i_{abcd}}$ for all $i$. There is then a corresponding term in the Hamiltonian of the form
			$(\ket{cd}\otimes U_{abcd} - \ket{ab})(\bra{cd}\otimes U_{abcd}^\dagger - \bra{ab})$. 
		\end{itemize}
		\label{standard-form_H:transition_terms}
		\item $h_{pen} \in \mathcal{B}\left((\mathds{C}^C\otimes\mathds{C}^Q)^{\otimes 2}\right)$ is a sum of penalty terms which act non-trivially only on $(\mathds{C}^C)^{\otimes 2}$ and are diagonal in the standard basis, such that $h_{pen} = \sum_{(ab) \ Illegal  } \ket{ab}_C\bra{ab}\otimes \mathds{1}_{Q}$, where $(ab)$ are members of a disallowed/illegal subspace. 
		\label{standard-form_H:penalty_terms}
		\item $h_{in}=\sum_{ab} \ket{ab}\bra{ab}_C \otimes \Pi_{ab}$, where $\ket{ab}\bra{ab}_C \in (\mathds{C}^C)^{\otimes2}$ is a projector onto $(\mathds{C}^C)^{\otimes 2} $ basis states, and $\Pi_{ab}^{(in)} \in (\mathds{C}^{Q})^{\otimes 2}$ are orthogonal projectors onto $(\mathds{C}^{Q})^{\otimes 2}$ basis states. 
		\item $h_{out}= \ket{xy}\bra{xy}_C \otimes \Pi_{xy}$, where $\ket{xy}\bra{xy}_C \in (\mathds{C}^C)^{\otimes2}$ is a projector onto $(\mathds{C}^C)^{\otimes 2} $ basis states, and $\Pi_{xy}^{(in)} \in (\mathds{C}^{Q})^{\otimes 2}$ are orthogonal projectors onto $(\mathds{C}^{Q})^{\otimes 2}$ basis states. 
	\end{enumerate}
\end{definition}

We note that although $h_{out}$ and $h_{in}$ have essentially the same form, they will play a different role later in the proof.
We reserve $h_{out}$ for penalties applied to the output of the QTM only.

\subsection{Standard-Form Hamiltonian Analysis}

In this section we exactly determine the minimum eigenvalues of a standard-form Hamiltonian which has a ground state that encodes the verification of either a YES or NO EQMA problem instance. 
This is then used to bound the minimum eigenvalues for Hamiltonians encoding QMA computations. 
We begin with several definitions:
\begin{definition}[Legal and Illegal Pairs and States, from \cite{Cubitt_Perez-Garcia_Wolf}] \label{Def:Legal_and_Illegal}
	The pair $ab$ is an \emph{illegal pair} if the penalty term $\ket{ab}\bra{ab}_C\otimes \mathds{1}_Q$ is in the support of the $H_{pen}$ component of the Hamiltonian. If a pair is not illegal, it is legal. We call a standard basis state \emph{legal} if it
	does not contain any illegal pairs, and illegal otherwise.
\end{definition}

Then the following is a straightforward extension of Lemma 42 of \cite{Cubitt_Perez-Garcia_Wolf} with $H_{in}$ and $H_{out}$ terms included.
\begin{lemma}[Invariant subspaces, extended from Lemma 42 of \cite{Cubitt_Perez-Garcia_Wolf}]
	\label{Lemma:Invariant_subspaces}
	
	Let $H_{trans}$, $H_{pen}$, $H_{in}$ and $H_{out}$ define a standard-form Hamiltonian as defined in Def. \ref{Def:Standard-form_H}. Let $\mathcal{S}=\{S_i\}$ be a partition of the standard basis states of $\mathcal{H}_C$ into minimal subsets $S_i$ that are closed under the transition rules (where a transition rule $\ket{ab}_{CD}\ket{\psi} \rightarrow \ket{cd}_{CD}U_{abcd}\ket{\psi}$ acts on $\mathcal{H}_C$ by restriction to $(\mathds{C}^C)^{\otimes 2}$, i.e. it acts as $ab \rightarrow cd$).	
	Then $\mathcal{H} = (\bigoplus_S \mathcal{K}_{S_i})\otimes\mathcal{H}_Q$ decomposes into invariant subspaces $\mathcal{K}_{S_i}\otimes\mathcal{H}_Q$ of $H = H_{pen} + H_{trans} + H_{in} + H_{out}$ where $\mathcal{K}_{S_i}$ is spanned by $S_i$.
\end{lemma}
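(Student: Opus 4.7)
The plan is to verify separately, for each of the four summands making up $H$, that every subspace of the form $\mathcal{K}_{S_i}\otimes\mathcal{H}_Q$ is mapped into itself, and then to invoke the fact that $\mathcal{S}$ is a partition to conclude that the full Hilbert space decomposes as claimed. Because each $\mathcal{K}_{S_i}$ is spanned by standard basis states of $\mathcal{H}_C$ and $\mathcal{H}_C=\bigoplus_i\mathcal{K}_{S_i}$ by construction, once invariance is established the decomposition $\mathcal{H}=\bigl(\bigoplus_i\mathcal{K}_{S_i}\bigr)\otimes\mathcal{H}_Q$ is immediate.

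First I would handle $H_{pen}$, $H_{in}$ and $H_{out}$, which is the easy part. By Def.~\ref{Def:Standard-form_H}(\ref{standard-form_H:penalty_terms}) each local penalty term acts as $\ket{ab}\bra{ab}_C\otimes \mathds{1}_Q$ on the two clock sites and so is diagonal in the standard basis of $\mathcal{H}_C$; therefore it sends each standard basis state of $\mathcal{H}_C$ to a scalar multiple of itself, which lies in the same $\mathcal{K}_{S_i}$. The $h_{in}$ and $h_{out}$ terms have the form $\ket{ab}\bra{ab}_C\otimes \Pi_{ab}$, which again projects the clock onto a fixed standard basis state and applies a projector on the quantum register; this too preserves $\mathcal{K}_{S_i}\otimes\mathcal{H}_Q$ for every $i$.

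The substantive step is $H_{trans}$. A local transition rule from $ab$ to $cd$ contributes to $h_{trans}^{(i,i+1)}$ the term
\begin{equation}
(\ket{cd}_C\otimes U_{abcd} - \ket{ab}_C)(\bra{cd}_C\otimes U_{abcd}^\dagger - \bra{ab}_C),
\end{equation}
acting as the identity on all clock sites other than $i,i+1$. Applied to a basis vector $\ket{x_1\cdots x_L}_C\otimes\ket{\phi}_Q$, the clock register is left in a linear combination of $\ket{x_1\cdots ab\cdots x_L}_C$ and $\ket{x_1\cdots cd\cdots x_L}_C$ (possibly zero), while the quantum register is acted on by $\mathds{1}$ or $U_{abcd}^{\pm 1}$. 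Hence, ignoring the quantum register, the term acts on clock basis states exactly through the transition $ab\leftrightarrow cd$ on the $(i,i+1)$ sites. Since $S_i$ is by hypothesis closed under all such transitions (in both directions, as $H$ is Hermitian), both basis vectors on the right-hand side lie in the same $S_i$ whenever the starting basis state does, and the output therefore lies in $\mathcal{K}_{S_i}\otimes \mathcal{H}_Q$.

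Summing over all local transition terms and all sites preserves invariance, so $H$ itself maps each $\mathcal{K}_{S_i}\otimes\mathcal{H}_Q$ into itself, yielding the claimed block decomposition. The only point that requires care is the bookkeeping for $H_{trans}$: one must verify that the closure of $S_i$ under the (directed) transition rules really implies closure under the symmetric action that $(\ket{cd}\otimes U-\ket{ab})(\bra{cd}\otimes U^\dagger-\bra{ab})$ produces, which follows because the partition is defined by the undirected connectivity induced by the transition rules, equivalently by the fact that $H$ is Hermitian and so a transition $ab\to cd$ appears together with its reverse.
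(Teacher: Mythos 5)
Your proposal is correct and is exactly the ``straightforward extension'' the paper has in mind: the paper gives no explicit proof, simply citing Lemma 42 of \cite{Cubitt_Perez-Garcia_Wolf} and noting that $H_{in}$ and $H_{out}$ are additionally included, and your term-by-term check (diagonal clock action of $H_{pen}$, $H_{in}$, $H_{out}$; closure of each $S_i$ under the undirected transition connectivity for $H_{trans}$) is precisely that argument spelled out. No gaps worth flagging.
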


This is useful as it allows us to divide up the Hilbert space of $H_{trans}+H_{pen}+H_{in} + H_{out}$ into invariant subspaces in which each state has at most one transition applied to it in the forwards and backwards directions. We can then study the minimum eigenvalues of these subspaces separately. We use a modified version of the Clairvoyance Lemma from \cite{Cubitt_Perez-Garcia_Wolf} to do this.

However, before we can prove this, we need to introduce the following lemma that allows us to put a projector in a stoquastic form:
\begin{lemma}[Lemma 20 of \cite{Bausch_Crosson_2018}] \label{Lemma:Make_Projector_Stoquastic}
	Let $M \in \mathds{C}^{d\times d}$ be a projector. Then for any $1 \leq s < d$, there exists a block-diagonal
	unitary $V = V^\prime \oplus V^{\prime \prime}$ with $\dim V^\prime = s$ such that $D = V^\dagger M V$ is stoquastic. Furthermore, $V$ can be
	chosen such that, if we denote the rank of the upper-left $s \times s$ block with $r_a$ and the complementary
	lower-right block rank with $r_b$, then $D_{ij} \neq 0$ if and only if
	
	\begin{enumerate}
		\item $i=j$ and $j\leq r_a$
		
		\item \textit{or} $s\leq i = j \leq s + r_b$
		
		\item \textit{or} $ s \leq j=s +i \leq \min\{r_a, r_b\} $ 
		
		\item \textit{or} $s \leq i = s +j \leq \min\{r_a, r_b \}$.
	\end{enumerate}
\end{lemma}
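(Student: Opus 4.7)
The plan is to exploit the block form $M = \begin{pmatrix} A & B \\ B^\dagger & C \end{pmatrix}$, with $A$ of size $s \times s$, and to use $M^2 = M$. Expanding blockwise gives
\begin{align*}
A - A^2 = BB^\dagger, \qquad C - C^2 = B^\dagger B, \qquad AB + BC = B,
\end{align*}
from which $0 \le A \le \mathds{1}$ and $0 \le C \le \mathds{1}$. The first step is to spectrally decompose $A = V_0' D_A (V_0')^\dagger$ and $C = V_0'' D_C (V_0'')^\dagger$, ordering each so that the $r_a$ non-zero eigenvalues of $A$ and the $r_b$ non-zero eigenvalues of $C$ come first. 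Setting $V_0 = V_0' \oplus V_0''$, the conjugated matrix $V_0^\dagger M V_0$ then has diagonal upper-left and lower-right blocks (which automatically matches conditions (i) and (ii) of the lemma), and its off-diagonal block becomes $\tilde B := (V_0')^\dagger B V_0''$.

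The key structural observation is that $\tilde B$ is itself essentially diagonal. Translating the three identities above into the rotated basis yields $\tilde B \tilde B^\dagger = D_A(\mathds{1} - D_A)$, $\tilde B^\dagger \tilde B = D_C(\mathds{1} - D_C)$, and $D_A \tilde B + \tilde B D_C = \tilde B$. The third identity forces $\tilde B_{ij} \neq 0 \Rightarrow \lambda_i + \mu_j = 1$, while the first two force $\tilde B \tilde B^\dagger$ and $\tilde B^\dagger \tilde B$ to be diagonal, so each row and column of $\tilde B$ contains at most one non-zero entry. Every non-zero entry of $\tilde B$ therefore matches an eigenvalue $\lambda_i \in (0,1)$ of $A$ with a unique $\mu_j = 1 - \lambda_i \in (0,1)$ of $C$, with $|\tilde B_{ij}|^2 = \lambda_i(1 - \lambda_i)$. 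Using the remaining permutation freedom inside each eigenspace, the eigenvectors can be reordered so that the paired eigenvalues sit in matching positions among the first $\min\{r_a, r_b\}$ indices, producing the claimed support pattern (iii)--(iv).

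Finally, to obtain stoquasticity, postcompose $V_0'$ with a diagonal phase unitary whose entries are $e^{i\theta_i}$, chosen so that each surviving diagonal entry of $\tilde B$ becomes a non-positive real number. The diagonal blocks $D_A$ and $D_C$ are real and invariant under such diagonal phase conjugation, so only the off-diagonal entries are modified; the resulting $D = V^\dagger M V$ has every off-diagonal entry either zero or non-positive, and is therefore stoquastic. The main obstacle is establishing that the cross-coupling identity $AB + BC = B$ together with the singular-value constraints really forces $\tilde B$ to collapse onto a single paired diagonal --- this is precisely what makes the extreme sparsity pattern possible. Once that reduction is in hand, the phase adjustment and the index bookkeeping (ensuring the pairs land inside the first $\min\{r_a, r_b\}$ indices) are routine; without the projector relation a generic Hermitian block matrix's off-diagonal block would remain dense after diagonalizing the diagonal blocks, and no such sparsity could be achieved.
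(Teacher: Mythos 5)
Note first that the paper does not prove this lemma at all: it is imported verbatim as Lemma 20 of \cite{Bausch_Crosson_2018}, so your argument can only be judged on its own terms. Your setup is the natural one and most of it is sound: the block identities $A-A^2=BB^\dagger$, $C-C^2=B^\dagger B$, $AB+BC=B$, the diagonalisation of $A$ and $C$, and the entrywise consequence $\tilde B_{ij}\neq 0 \Rightarrow \lambda_i+\mu_j=1$ are all correct. The genuine gap is the step you yourself flag as the crux: you claim that because $\tilde B\tilde B^\dagger$ and $\tilde B^\dagger\tilde B$ are diagonal, each row and column of $\tilde B$ has at most one non-zero entry. That implication is false: having pairwise orthogonal rows and pairwise orthogonal columns does not force sparsity (any multiple of a unitary matrix satisfies both), and this situation really occurs for projectors. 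For instance $M=\tfrac{1}{2}\begin{pmatrix}\mathds{1} & U\\ U^\dagger & \mathds{1}\end{pmatrix}$ is a projector for every unitary $U$; here $A=C=\mathds{1}/2$ are already diagonal, yet the off-diagonal block $U/2$ can be completely dense (take $U$ of Hadamard type). So after your $V_0$ the cross block need not be anywhere near diagonal, and the ``remaining permutation freedom'' you invoke cannot fix this, since a permutation never thins out a dense block.

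What is true, and what you need to prove, is the following refinement: grouping the eigenvalues $\lambda\in(0,1)$ of $A$ with the matching eigenvalue $1-\lambda$ of $C$ (rows with $\lambda_i\in\{0,1\}$ are annihilated because $\tilde B\tilde B^\dagger=D_A(\mathds{1}-D_A)$ vanishes there), $\tilde B$ is block-diagonal with one block $\tilde B_\lambda$ per such pair, and within each block $\tilde B_\lambda\tilde B_\lambda^\dagger=\lambda(1-\lambda)\mathds{1}$ and $\tilde B_\lambda^\dagger\tilde B_\lambda=\lambda(1-\lambda)\mathds{1}$, so $\tilde B_\lambda=\sqrt{\lambda(1-\lambda)}\,U_\lambda$ with $U_\lambda$ unitary (in particular the multiplicities of $\lambda$ in $A$ and of $1-\lambda$ in $C$ agree). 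To diagonalise these blocks you must use the full unitary freedom inside each degenerate eigenspace, not just reordering: e.g.\ replace $V_0''$ on the eigenspace of $C$ with eigenvalue $1-\lambda$ by $V_0''U_\lambda$ (equivalently, perform an SVD of each $\tilde B_\lambda$, all of whose singular values equal $\sqrt{\lambda(1-\lambda)}$). This conjugation leaves $D_C$ unchanged because it acts within a single eigenspace. With that correction inserted, the rest of your argument --- ordering the paired eigenvalues into matching positions among the first $\min\{r_a,r_b\}$ indices, and conjugating by a diagonal phase unitary to make the surviving couplings real and non-positive while leaving the diagonal blocks untouched --- does go through and yields both the sparsity pattern and stoquasticity.
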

\noindent
More intuitively, we can write
\begin{align}
V^\dagger M  V = D =\begin{pmatrix}
D_{aa} & D_{ab} \\
D_{ba} & D_{bb} 
\end{pmatrix},
\end{align}
where $D_{aa}$   and $D_{bb}$ are real, diagonal matrices with rank $r_a$ and $r_b$ respectively, and dimensions $s \times s$ and $(\abs{H_Q} -s) \times (\abs{H_Q} -s)$ respectively. Furthermore, $D_{ab}$ and $D_{ba}$ are real, negative, diagonal matrices with $\rank D_{ab} = \rank D_{ba}= \min\{r_a, r_b \}$. 
Its form can be seen explicitly as the red part in figure \ref{Fig:Conjugated_Hamiltonian}.

Our statement of the modified version of the Clairvoyance Lemma is 
\begin{lemma}[Clairvoyance Lemma, extended from Lemma 43 of \cite{Cubitt_Perez-Garcia_Wolf}]\label{Lemma:Clairvoyance}
	Let $H = H_{trans} + H_{pen} +H_{in} + H_{out}$ be a standard-form Hamiltonian, as defined in Def. \ref{Def:Standard-form_H}, and let $\mathcal{K}_S$ be defined as in Lemma \ref{Lemma:Invariant_subspaces}. Let $\lambda_0(\mathcal{K}_S)$ denote the minimum eigenvalue of the restriction $H\vert_{\mathcal{K}_S\otimes \mathcal{H}_Q}$ of $H = H_{trans} + H_{pen}+ H_{in} + H_{out}$ to the invariant subspace $\mathcal{K}_S \otimes\mathcal{H}_Q$.
	
	Assume that there exists a subset $\mathcal{W}$ of standard basis states for $\mathcal{H}_C$ with the following properties: 
	\begin{enumerate}
		\item All legal standard basis states for $\mathcal{H}_C$ are contained in $\mathcal{W}$. \label{clairvoyance:legal}
		\item $\mathcal{W}$ is closed with respect to the transition rules.  \label{clairvoyance:closure}
		\item  At most one transition rule applies in each direction to any state in $\mathcal{W}$. Furthermore, there exists an ordering on the states in each $S$ such that the forwards transition (if it exists) is from $\ket{t} \rightarrow \ket{t+1}$ and the backwards transition (if it exists) is $\ket{t}\rightarrow \ket{t-1}$.
		\label{clairvoyance:single_transition}
		\item For any subset $S\subseteq\mathcal{W}$ that contains only legal states, there exists at least one state to which no backwards transition applies and one state to which no forwards transition applies. 
		Furthermore, the unitaries associated with each transition $\ket{t} \rightarrow \ket{t+1}$ are $U_t=\mathds{1}_Q$, for $0\leq t\leq T_{init}-1$. Also, both the final state $\ket{T}$, and whether a state $\ket{t}$ has $t\leq T_{init}$, is detectable by a 2-local, translationally invariant projector acting only on nearest neighbour qudits. 
		\label{clairvoyance:initial_state}
	\end{enumerate}
	
	\noindent Then each subspace $\mathcal{K}_S$ falls into one of the following categories:
	\begin{enumerate} 
		\item $S$ contains only illegal states, and $H\vert_{\mathcal{K}_S\otimes \mathcal{H}_Q} \geq \mathds{1}$. \label{Clv_Lemma:_Illegal}
		
		\item $S$ contains both legal and illegal states, and 
		\begin{equation}
		W^\dagger H\vert_{\mathcal{K}_S\otimes \mathcal{H}_Q}W \geq \bigoplus_i \big(\Delta^{(\abs{S})} + \sum_{\ket{k}\in K_i}\ket{k}\bra{k} \big)
		\end{equation}
		where  $\sum_{\ket{k}\in K_i}\ket{k}\bra{k} := H_{pen}\vert_{\mathcal{K}_S\otimes \mathcal{H}_Q}$ and $K_i$ is some non-empty set of basis states and $W$ is some unitary.  \label{Clv_Lemma:Evolve_to_Illegal}
		
		\item 	 $S$ contains only legal states, then there exists a unitary $R=W(\mathds{1}_C\otimes(X\oplus Y)_Q)$ that puts $H\vert_{\mathcal{K}_S\otimes \mathcal{H}_Q}$ in the form \label{Clv_Lemma:Legal}
		\begin{align}
		R^\dagger H\vert_{\mathcal{K}_S\otimes \mathcal{H}_Q} R = \begin{pmatrix}
		H_{aa} & H_{ab} \\
		H_{ab}^\dagger & H_{bb} 
		\end{pmatrix},
		\end{align}
		where, defining $G:=\supp\bigg( \sum_{t=0}^{T_{init}-1} \Pi_t^{(in)} \bigg)$ and $s:= \dim G$,  
		\begin{itemize}
			\item $X:G \rightarrow G$.
			\item $Y:G^c \rightarrow G^c$.
			\item $H_{aa}$ is an $s\times s$ matrix.
			\item  $H_{aa}, H_{bb} \geq 0$ and are rank $r_a, r_b$ respectively.
			\item  $H_{aa}$ has the form
			\begin{align}
			H_{aa} = \bigoplus_i \big( \Delta^{(|S|)} + \alpha_i\ket{|S|-1}\bra{|S|-1} \big) + \sum_{t=0}^{T_{init-1}} \ket{t}\bra{t}\otimes X^\dagger \Pi_t|_G X.
			\end{align}
			\item  $H_{bb}$ is a tridiagonal, stoquastic matrix of the form
			\begin{equation}
			H_{bb} = \bigoplus_i(\Delta^{(\abs{S})} + \beta_i \ket{|S|-1}\bra{|S|-1} ).
			\end{equation}
			\item  $H_{ab} = H_{ba}$ is a real, negative diagonal matrix with rank $\min\{r_a, r_b\}$.
			\begin{equation}
			H_{ab} = H_{ba} = \bigoplus_i \gamma_i\ket{|S|-1}\bra{|S|-1}.
			\end{equation} 
		\end{itemize}
		where either we get pairings between the blocks such that 
		\begin{small}
		\begin{align}
		\begin{pmatrix} 
		\alpha_i & \gamma_i \\
		\gamma_i & \beta_i 
		\end{pmatrix} = 
		\begin{pmatrix} 
		1-\mu_i & -\sqrt{\mu_i(1-\mu_i)} \\
		-\sqrt{\mu_i(1-\mu_i)} & \mu_i 
		\end{pmatrix}\quad  or 	\quad	
		\begin{pmatrix} 
		1 & 0 \\
		0 & 1 
		\end{pmatrix},
		\end{align}
		\end{small}
		for $0\leq\mu_i\leq 1$, or we get unpaired values of $\alpha_i=0,1$ or $\beta_i=0,1$ for which we have no associated value of $\gamma_i$.
	\end{enumerate}
\end{lemma}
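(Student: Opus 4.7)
The plan is to generalise the proof of Lemma 43 of \cite{Cubitt_Perez-Garcia_Wolf} in two ways: to accommodate the additional $H_{in}$ and $H_{out}$ terms, and to exploit the dynamic-initialisation condition~\ref{clairvoyance:initial_state} so that the input penalties survive the clock rotation as simple diagonal projectors. First I would invoke Lemma \ref{Lemma:Invariant_subspaces} and treat each invariant subspace $\mathcal{K}_S \otimes \mathcal{H}_Q$ separately. Case~1 is immediate: every standard basis state in an all-illegal $S$ picks up a penalty from $H_{pen}$, so $H|_{\mathcal{K}_S\otimes\mathcal{H}_Q} \geq H_{pen}|_{\mathcal{K}_S\otimes\mathcal{H}_Q} \geq \mathds{1}$. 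For Case~2, properties~\ref{clairvoyance:closure} and \ref{clairvoyance:single_transition} give a linear ordering $S=\{s_0,\dots,s_{|S|-1}\}$ along which the transitions unfold, and the standard isometric change of basis $W := \sum_{t=0}^{|S|-1}\ket{s_t}\bra{t}_C \otimes (U_t \cdots U_1)$ conjugates $H_{trans}$ into $\Delta^{(|S|)}\otimes\mathds{1}_Q$ and turns each illegal basis state into a position penalty $\ket{k}\bra{k}_C$; the stated inequality follows after discarding $H_{in}+H_{out}\geq 0$.

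The substantive work is in Case~3. Applying the same $W$ sends $H_{trans}$ to $\Delta^{(|S|)}\otimes\mathds{1}_Q$. Property~\ref{clairvoyance:initial_state} is exactly what keeps the input terms intact under this rotation: since $U_t=\mathds{1}_Q$ for $t\leq T_{init}-1$, $W$ acts as the identity on $\mathcal{H}_Q$ at those clock positions, so $H_{in}$ becomes $\sum_{t=0}^{T_{init}-1}\ket{t}\bra{t}_C\otimes \Pi_t^{in}$. The $H_{out}$ term becomes $\ket{|S|-1}\bra{|S|-1}_C\otimes\tilde\Pi^{out}$, where $\tilde\Pi^{out} := U_1^\dagger\cdots U_{|S|-1}^\dagger \Pi_{xy} U_{|S|-1}\cdots U_1$ is again a projector on $\mathcal{H}_Q$. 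I would then apply Lemma \ref{Lemma:Make_Projector_Stoquastic} to $\tilde\Pi^{out}$ with $s=\dim G$, producing $V=X\oplus Y$ on $\mathcal{H}_Q = G\oplus G^c$ that conjugates $\tilde\Pi^{out}$ into the stoquastic block matrix $D$ with diagonal parts $D_{aa},D_{bb}$ and diagonal off-diagonal couplings $D_{ab},D_{ba}$. Because $G$ is defined as the support of $\sum_t \Pi_t^{in}$, each $\Pi_t^{in}$ has its support in $G$, and conjugation by $\mathds{1}_C\otimes V$ therefore leaves $H_{in}$ as $\sum_t\ket{t}\bra{t}_C\otimes((X^\dagger \Pi_t|_G X)\oplus 0)$, contributing only to the $H_{aa}$ block. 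Reading off the blocks of $(\mathds{1}_C\otimes V)^\dagger W^\dagger H W (\mathds{1}_C\otimes V)$ then yields the claimed forms of $H_{aa}, H_{bb}$, and $H_{ab}$; the $\bigoplus_i$ decomposition is just the spectral decomposition along the joint (diagonal) eigenbases of $D_{aa}$ and $D_{bb}$.

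The paired/unpaired classification of $(\alpha_i,\beta_i,\gamma_i)$ then falls out from the fact that $D$ is itself a projector. A $2\times 2$ block $\left(\begin{smallmatrix}\alpha_i & \gamma_i \\ \gamma_i & \beta_i\end{smallmatrix}\right)$ coupling a rank-one diagonal entry of $D_{aa}$ to one of $D_{bb}$ via $D_{ab}$ is itself a rank-one projector on a 2D invariant subspace, forcing $\mathrm{tr}=1$ and $\det=0$, hence $(\alpha_i,\beta_i,\gamma_i)=(1-\mu_i,\mu_i,-\sqrt{\mu_i(1-\mu_i)})$ for some $0\leq\mu_i\leq 1$; unpaired diagonal entries span 1D invariant subspaces on which $D$ acts as a $1\times 1$ projector with eigenvalue $0$ or $1$. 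The hard part I expect is the bookkeeping: verifying that after the two successive conjugations by $W$ and $\mathds{1}_C\otimes V$ all four terms $H_{trans}, H_{pen}, H_{in}, H_{out}$ assemble cleanly into the claimed block form without cross-terms. The two key structural facts that make this work are that $G$ is chosen so the $H_{in}$ projectors land wholly in the top-left block, and that $V$ cannot mix clock positions because it acts only on $\mathcal{H}_Q$.
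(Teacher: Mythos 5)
Your proposal is correct and follows essentially the same route as the paper: case 1 directly from $H_{pen}$, conjugation by the clock-rotation unitary $W$ for cases 2 and 3, preservation of $H_{in}$ via $U_t=\mathds{1}_Q$ during initialisation, application of Lemma \ref{Lemma:Make_Projector_Stoquastic} to $U^\dagger \Pi_{out} U$ with $s=\dim G$, and classification of the $2\times 2$ blocks using the fact that the conjugated output penalty is still a projector (rank-one blocks giving the $\mu_i$-parametrisation, rank-two or trivial blocks giving the identity/unpaired entries). The only detail the paper makes explicit that you gloss over is that in case 2 the minimal closed set $S$ may form a loop rather than a path, which is handled by dropping one wrap-around transition term --- harmless there since the claim is only an operator inequality, while in case 3 condition \ref{clairvoyance:initial_state} rules loops out so that equality with $\Delta^{(|S|)}\otimes\mathds{1}_Q$ holds as you use it.
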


\begin{proof}
	The case of type \ref{Clv_Lemma:_Illegal} subspaces is straightforward as $\bra{x}_C \bra{\psi}_Q H_{pen} \ket{x}_C \ket{\psi}_Q \geq 1$ for any illegal standard basis state $\ket{x}_C$. Thus, $H\vert_{\mathcal{K}_S\otimes \mathcal{H}_Q} \geq \mathds{1}$.
	
	We consider subspaces of type \ref{Clv_Lemma:Evolve_to_Illegal} and \ref{Clv_Lemma:Legal}. To begin with, we initially follow the analysis from \cite{Cubitt_Perez-Garcia_Wolf}. 
	Consider the directed graph of states in $\mathcal{W}$
	formed by adding a directed edge between pairs of states connected by transition
	rules. By assumption, only one transition rule applies in each direction to any state
	in $\mathcal{W}$, so the graph consists of a union of disjoint paths (which could be loops in
	case \ref{Clv_Lemma:Evolve_to_Illegal}). Minimality of $S$ (Lemma \ref{Lemma:Invariant_subspaces}) implies that $S$ consists of a single such
	connected path.
	
	Let $t = 0, \dots  , |S | - 1$ denote the states in $S$ enumerated in the order induced by
	the directed graph. $H_{trans}$ then acts on the subspace $\mathcal{K}_S \otimes H_Q$ as
	\begin{align} 
	H_{trans}|_{\mathcal{K}_S \otimes \mathcal{H}_Q} &= \sum_{t=0}^{T-1} \frac{1}{2}\left(
	\ket{t}\bra{t}\otimes \mathds{1} + \ket{t+1}\bra{t+1}\otimes \mathds{1}
	- \ketbra{t+1}{t}\otimes U_t - \ketbra{t}{t+1}\otimes U_t^\dagger
	\right) 
	\end{align}
	where  $T = \abs{S}  - 1$ if the path in S is a loop, otherwise $T = \abs{S}- 2$.
	Whether we have a loop or not, we have
	\begin{align}
	H_{trans}|_{\mathcal{K}_S \otimes \mathcal{H}_Q} &\geq \sum_{t=0}^{\abs{S}-2} \frac{1}{2}\left(
	\ket{t}\bra{t}\otimes \mathds{1} + \ket{t+1}\bra{t+1}\otimes \mathds{1}
	- \ketbra{t+1}{t}\otimes U_t - \ketbra{t}{t+1}\otimes U_t^\dagger
	\right)
	\\
	&:= H_{path}
	\end{align}
	Equality arises when the path is not a loop.
	Furthermore, the ordering of the states in $S$ means we can write 
	\begin{align}
	H_{in} &= \sum_{t=0}^{T_{init}-1} \ket{t}\bra{t}\otimes \Pi_t \\
	H_{out} &= \ket{T}\bra{T}\otimes \Pi_{out}.
	\end{align}
	Now define
	\begin{equation}
	W := \sum_{t=0}^{\abs{S}-2}\ket{t}\bra{t}\otimes \prod_{i=0}^{t}U_i^\dagger + \ket{\abs{S}-1}\bra{\abs{S}-1} \otimes \mathds{1}_Q.
	\end{equation}
	Standard results from \cite{Kitaev_Shen_Vyalyi}, \cite{Cubitt_Perez-Garcia_Wolf} give 	$W^\dagger H_{trans}|_{\mathcal{K}_S \otimes \mathcal{H}_Q} W=\Delta^{(\abs{S})} \otimes \mathds{1}_Q$. 
	Furthermore $W^\dagger H_{in}|_{K_S \otimes H_Q} W =  H_{in}|_{K_S \otimes H_Q}$.
	To see why this is the case, note $U_t = \mathds{1}_Q$ for $0 \leq t\leq T_{init}-1$ and hence $H_{in}|_{K_S \otimes H_Q}$ is preserved under the conjugation.   
	Using these relations we find:
	\begin{align} \label{Eq:Conjugated_WHW}
	W^\dagger H|_{\mathcal{K}_S \otimes \mathcal{H}_Q} W \geq \Delta^{(\abs{S})} \otimes \mathds{1}_Q + H_{pen}|_{K_S \otimes H_Q} + \sum_{t=0}^{T_{init}-1}(\ket{t}\bra{t}\otimes \Pi_t) +  \ket{T}\bra{T} \otimes U^\dagger \Pi_{out} U
	\end{align} 
	where we have defined $U :=\prod_{j=0}^{T-1}U_j$ and have written out $H_{in}, H_{out}$ explicitly. Again, equality holds when the path is not a loop.	We see that $W^\dagger H_{pen}|_{K_S \otimes H_Q}W = H_{pen}|_{K_S \otimes H_Q}$  as, by Def. \ref{Def:Standard-form_H}, $H_{pen}$ only acts non-trivially on $\mathcal{H}_C$ while the unitaries $U_{t}$ act non-trivially only on $\mathcal{H}_Q$. Additionally, $S$ is defined to be minimal. 
	We now consider subspaces of type \ref{Clv_Lemma:Evolve_to_Illegal} and \ref{Clv_Lemma:Legal} separately.
	\newline\newline
	\noindent
	\textbf{Type \ref{Clv_Lemma:Evolve_to_Illegal} Subspaces}
	\newline
	By definition computations in type \ref{Clv_Lemma:Evolve_to_Illegal} subspaces must evolve to an illegal state at some point and hence $H_{pen}|_{\mathcal{K}_S \otimes \mathcal{H}_Q}$ must have some support on any subspace of this type. Noting that the last two terms in expression \ref{Eq:Conjugated_WHW} are positive semi-definite hence we can remove them to give the following, 
	\begin{align} 
	W^\dagger H\vert_{\mathcal{K}_S\otimes \mathcal{H}_Q} W &\geq \Delta^{(\abs{S})}\otimes \mathds{1}_Q + H_{pen}\vert_{\mathcal{K}_S\otimes \mathcal{H}_Q}. \\
	W^\dagger H\vert_{\mathcal{K}_S\otimes \mathcal{H}_Q} W &\geq \bigoplus_i (\Delta^{(\abs{S})} + \sum_{\ket{k}\in K_i} \ket{k}\bra{k})
	\end{align}
	for some non-empty set of basis states $K_i$.
	\newline\newline
	\noindent
	\textbf{Type \ref{Clv_Lemma:Legal} Subspaces}
	\newline 
	
	By definition, all the states in type \ref{Clv_Lemma:Legal} subspaces are legal, and hence $H_{pen}|_{\mathcal{K}_S \otimes \mathcal{H}_Q}=0$ in this subspace. Furthermore, the states in $S$ cannot form a loop by condition \ref{clairvoyance:initial_state} in the lemma statement. Thus the Hamiltonian takes the form 
	\begin{align}
	W^\dagger H|_{\mathcal{K}_S \otimes \mathcal{H}_Q} W = \Delta^{(\abs{S})} \otimes \mathds{1}_Q + \sum_{t=0}^{T_{init}-1}(\ket{t}\bra{t}\otimes \Pi_t) +\ket{T}\bra{T}\otimes U^\dagger \Pi_{out} U
	\end{align} 
	
	We then use Lemma  \ref{Lemma:Make_Projector_Stoquastic} to define a unitary $(\mathds{1}_C\otimes V)$, where $V = X \oplus Y$, which puts $V^\dagger U^{\dagger} \Pi_{t} UV$ into the form described in Lemma \ref{Lemma:Make_Projector_Stoquastic}. We choose $X$ to be an $s \times s$ unitary with the same support as $\sum_{t=0}^{T_{init}-1}\Pi_{t}$. 
	This gives
	\begin{align} \label{Eq:Conjugated_Hamiltonian}
	(\mathds{1}_C\otimes V)^\dagger W^\dagger H|_{\mathcal{K}_S \otimes \mathcal{H}_Q}W(\mathds{1}_C\otimes V)  =& \Delta^{(\abs{S})} \otimes \mathds{1}_Q + \sum_{t=0}^{T_{init}-1}(\ket{t}\bra{t})\otimes (X\oplus 0)^\dagger (\Pi_t) (X \oplus 0)\\ &+   \ket{T}\bra{T} \otimes V^\dagger U^\dagger \Pi_{out} U V
	\end{align}
	where
	\begin{equation} \label{Eq:M_Definition}
	(UV)^\dagger \Pi_{out} UV= M =  
	\begin{pmatrix}
	M_{aa} & M_{ab} \\
	M_{ab}^{\dagger} & M_{bb} 
	\end{pmatrix},
	\end{equation}
	and $M_{aa}$ is an $s \times s$ matrix with the same support as $ \sum_{t=0}^{T_{init}-1} \Pi_t$. 
	\begin{figure}[h!] 
		\centering
		\includegraphics[width=1\textwidth]{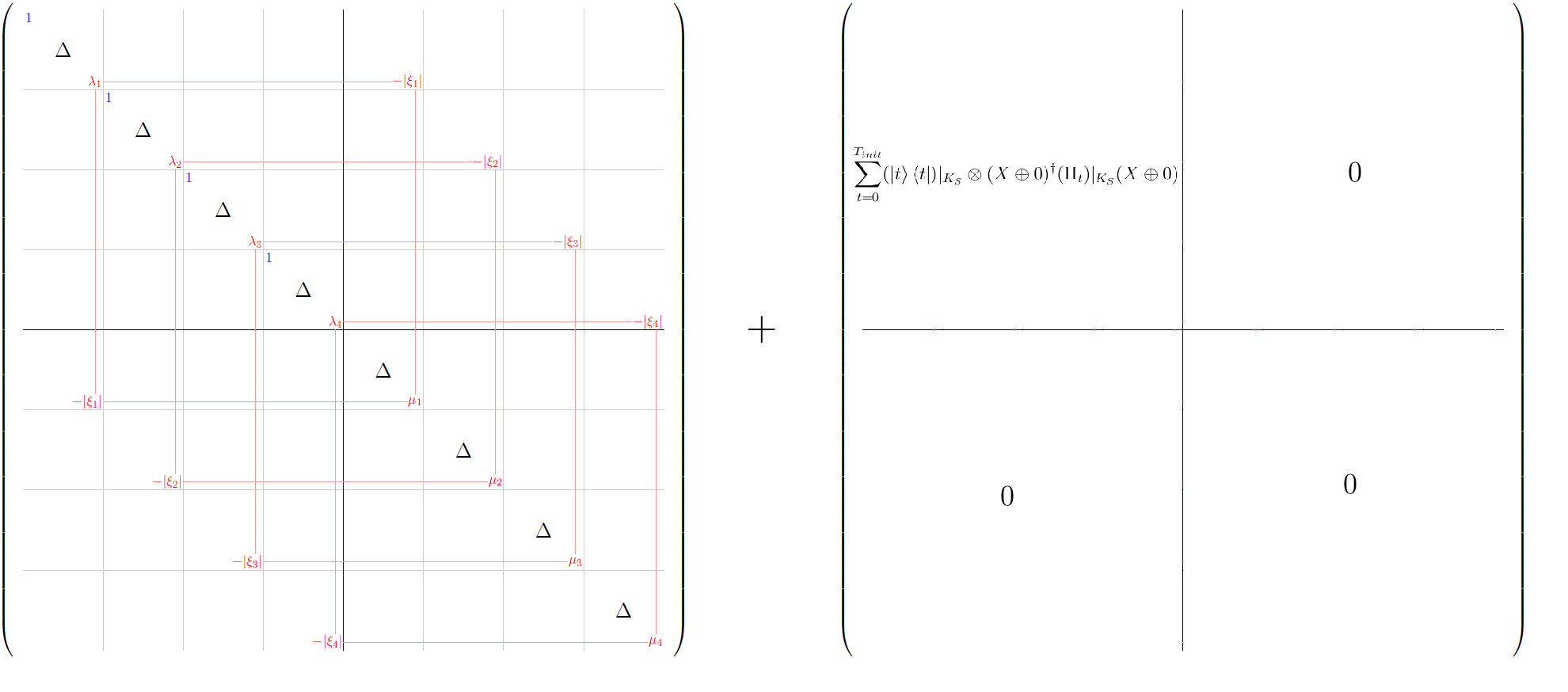}
		\caption{The conjugated Hamiltonian in restricted to a type \ref{Clv_Lemma:Legal} subspace. 
		The conjugated $H_{out}$ terms --- in red --- couple blocks in the support of $H_{in}$ to a block outside the support.	
		The conjugated $H_{in}$ terms --- represented by the term on the right --- couple $\Delta^{(T)}$ blocks in the $\Pi_{in}$ subspace together, where the coupling occurs on the first $T_{init}$ terms. 
		 }
		\label{Fig:Conjugated_Hamiltonian}
	\end{figure} 
	We can then decompose $(UV)^\dagger \Pi_{out} UV$ as a series of block diagonal terms 
	\begin{align} \label{Eq:Conjugated_Projector}
	(UV)^\dagger \Pi_{out} UV &= \bigoplus_i P_i, 
	\end{align}
	where either
	\begin{align}
	P_i &= \begin{pmatrix}
	\lambda_i & -|\xi_i| \\
	-|\xi_i| & \mu_i 
	\end{pmatrix}
	\end{align}
	or $P_i$ is equal to the $1\times 1$ matrices $P_i=1$ or $P_i=0$. Each non-zero $P_i$ must be a projector since $\Pi_{out}$ is a projector and is block diagonal in the $P_i$. The resulting conjugated Hamiltonian can be seen in figure \ref{Fig:Conjugated_Hamiltonian}. If $P_i$ is rank 2, then $P_i = \mathds{1}_2$ (as this is the only rank $2$, $2\times 2$ projector). If it is a rank 1, $2\times 2$ matrix then we can parametrise it in terms of a single value:
	\begin{align} \label{Eq:P_i_One_Parameter}
	P_i &= \begin{pmatrix}
	1-\mu_i & -\sqrt{\mu_i(1-\mu_i)} \\
	-\sqrt{\mu_i(1-\mu_i)} & \mu_i 
	\end{pmatrix}
	\end{align}
	(we see this from the fact $P_i$ is a rank 1 projector and hence we must be able to write it as $P_i = \ket{\chi}\bra{\chi}$ for  $\ket{\chi} = \sqrt{1-\mu_i}\ket{0}-\sqrt{\mu_i}\ket{1}$).
	This is the form as claimed in the lemma statement.
\end{proof}

\subsection{Encoding (E)QMA Verification in Standard Form Hamiltonians}

The Clairvoyance Lemma proves properties for a general standard form Hamiltonians under certain assumptions. 
We identify the basis states in $\mathcal{H}_C^{\otimes n}$ as ``clock states'', and as per Def. \ref{Def:Standard-form_H}, the transitions between these states are deterministic.
The clock states, together with the transition rules between them, form the \emph{clock}.
All penalties acting on the clocks states are diagonal.
There is a least one set of clock states for which there is an evolution which does not contain any illegal states and has a start and finish state: we called this a valid evolution.
Naturally we label the clock states in order as $\ket{0} \rightarrow \ket{1} \rightarrow \dots\rightarrow  \ket{t} \rightarrow \dots \rightarrow \ket{T}$.

To encode a Quantum Turing Machine in a Hamiltonian, we choose $U_t$ to correspond to a transition from the transition rule table, and a design a clock construction with an associated classical Hilbert space which labels the Quantum Turing Machine operations.
Although we will not explicitly state a clock construction, we note that constructions similar to the unary clock in \cite{Gottesman-Irani} and the base $\xi$ clock in \cite{Cubitt_Perez-Garcia_Wolf}both satisfy the assumptions in the Clairvoyance Lemma. 
In both these cases, if $H\in \mathcal{B}(\mathds{C}^d)^{\otimes L}$, then the computations these Hamiltonians encode have runtimes $O(L^2)$ and $O(L\xi^L\log(L))$ respectively, for some $\xi$ we are free to choose. 
These clocks have a dynamic initialisation and hence much of the previous analysis in \cite{Bausch_Crosson_2018} and \cite{Caha_Landau_Nagaj} does not apply to them.

We further introduce a set of other properties that these clocks share that we will find useful. We call this class of clocks ``Standard Form Clocks''.
\begin{assumption}[Standard Form Clock Properties]\label{Assumption:Clock_Properties}~\newline
	We assume the standard form clock construction for a standard form Hamiltonian $H\in \mathcal{B}(\mathds{C}^d)^{\otimes n}$ has the following properties:
	\begin{itemize}
		\item Satisfies assumptions 1-4 in the Clairvoyance Lemma (Lemma \ref{Lemma:Clairvoyance}).
		\item The total runtime of the clock is $T=T(n)=\Omega(n^2)$.
		\item For any set of basis states $S$ which contains at least one illegal state, then given any legal state $k \in S$, $k$ will evolve to an illegal state within $O(n)$ transitions. 
		\item The initialisation time is bounded as $T_{init}=O(n)=O(T^{1/2})$.
	\end{itemize}
	
\end{assumption}

\subsubsection{EQMA Computations}

We now consider a standard form Hamiltonian which will encode the time evolution of an EQMA verifier computation.
\begin{lemma}[EQMA Clairvoyance Lemma, extended from Lemma 43 of \cite{Cubitt_Perez-Garcia_Wolf}]\label{Lemma:EQMA_Clairvoyance}
	Let $H_{EQMA} = H_{trans} + H_{pen} +H_{in}+H_{out}$ be a Standard Form Hamiltonian encoding the verification of a EQMA problem instance  with a standard form clock. 
	Let the subspaces $\mathcal{K}_S$ be defined as in Lemma \ref{Lemma:Invariant_subspaces}. 
	Let $\lambda_0(\mathcal{K}_S)$ denote the minimum eigenvalue of the restriction $H\vert_{\mathcal{K}_S\otimes \mathcal{H}_Q}$ of $H_{EQMA} = H_{trans} + H_{pen}+ H_{in} + H_{out}$ to the invariant subspace $\mathcal{K}_S \otimes\mathcal{H}_Q$.
	
	\noindent Then each $\mathcal{K}_S$ falls into one of the following categories (corresponding to the same categories in Lemma \ref{Lemma:Clairvoyance}):
	\begin{enumerate} 
		\item $S$ contains only illegal states, and $\lambda_0(\mathcal{K}_S) \geq 1$. \label{EQMA_Clv_Lemma:_Illegal}
		
		\item $S$ contains both legal and illegal states, and $\lambda_0(\mathcal{K}_S) \geq 1 - \cos(\frac{\pi}{O(n)}) $. \label{EQMA_Clv_Lemma:Evolve_to_Illegal} 
		
		\item $S$ contains only legal states, and if the Hamiltonian's ground state encodes the verification of YES or NO instance, then \label{EQMA_Clv_Lemma:Legal}
		\[ \lambda_0(\mathcal{K}_S) = \begin{cases} 
		0 & YES \quad instance \\
		1-\cos(\frac{\pi}{2T}) & NO \quad instance,
		\end{cases}
		\]
		where $T(n)=\Omega(n^2)$ is the runtime of the standard form clock construction. 
		Furthermore, the Hamiltonian restricted to this subspace takes the form given in Lemma \ref{Lemma:Clairvoyance} for subspaces of type \ref{Clv_Lemma:Legal}, where $H_{ab}=H_{ba}=0$, and 
		\begin{equation}
		H_{bb} = \bigoplus_i(\Delta^{(T)} + \delta_i)
		\end{equation}
		where $\delta_i=1$ always for NO instances. For YES instances  $\delta_i=0$  for at least one case, and is $1$ otherwise. 
	\end{enumerate} 
\end{lemma}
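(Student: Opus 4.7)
The plan is to specialise each of the three cases of the general Clairvoyance Lemma (Lemma~\ref{Lemma:Clairvoyance}) using the standard-form clock assumptions (Assumption~\ref{Assumption:Clock_Properties}) together with the fact that EQMA acceptance is exact. Case~\ref{EQMA_Clv_Lemma:_Illegal} requires nothing new, since the first branch of Lemma~\ref{Lemma:Clairvoyance} already gives $H|_{\mathcal{K}_S\otimes\mathcal{H}_Q}\geq \mathds{1}$.

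For Case~\ref{EQMA_Clv_Lemma:Evolve_to_Illegal} I would start from the bound $W^\dagger H|_{\mathcal{K}_S}W \geq \bigoplus_i (\Delta^{(|S|)}+\sum_{k\in K_i}|k\rangle\langle k|)$ supplied by the second branch of Lemma~\ref{Lemma:Clairvoyance}, then let $k^*$ be the smallest index appearing in any $K_i$. By Assumption~\ref{Assumption:Clock_Properties}, any legal state evolves to an illegal one within $O(n)$ transitions, so $k^* = O(n)$. Applying the Uncoupling Lemma (Lemma~\ref{Lemma:Divorce_Lemma}) at $k^*$ isolates a length-$(k^*-1)$ block of the form $\Delta^{(k^*-1)}+\tfrac14|k^*-1\rangle\langle k^*-1|$. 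Combining the trivial operator bound $\Delta^{(N)}+\mu|N\rangle\langle N|\geq \mu(\Delta^{(N)}+|N\rangle\langle N|)$ with Lemma~\ref{Lemma:Minimum_Eigenvalue} gives a minimum eigenvalue of at least $\tfrac14(1-\cos(\pi/(2k^*)))=\Omega(1/n^{2})$, matching the claim $1-\cos(\pi/O(n))$.

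Case~\ref{EQMA_Clv_Lemma:Legal} is where the substantive work lies. I would invoke the block decomposition of Lemma~\ref{Lemma:Clairvoyance}, writing $R^\dagger H|_{\mathcal{K}_S} R$ in terms of $H_{aa},H_{bb},H_{ab}$ and the $2\times 2$ blocks $P_i$ extracted from $(UV)^\dagger\Pi_{out}UV$ and parametrised as in~(\ref{Eq:P_i_One_Parameter}). The pivotal observation is that EQMA is zero-error: for \emph{any} witness $|w\rangle$ fed into the circuit together with correctly initialised ancillas, the acceptance probability $\langle w|U^\dagger\Pi_{out}U|w\rangle$ is exactly $0$ or $1$. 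This forces every rank-one $P_i$ to have $\mu_i\in\{0,1\}$, so $\gamma_i=0$ and hence $H_{ab}=0$, while each $\delta_i:=\beta_i$ lies in $\{0,1\}$. Determinism then forces all $\delta_i=1$ for a NO instance (every accessible branch rejects), while at least one $\delta_i=0$ for a YES instance (the branch corresponding to the accepting witness).

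With $H_{ab}=0$ the spectrum of $H|_{\mathcal{K}_S}$ is the union of the spectra of $H_{aa}$ and $H_{bb}$. For a YES instance, the $\delta_i=0$ block of $H_{bb}$ is exactly $\Delta^{(T)}$, whose uniform vector is a zero eigenvector; undoing $R$ recovers the standard history state built from the accepting witness, giving $\lambda_0(\mathcal{K}_S)=0$. For a NO instance every block of $H_{bb}$ has the form $\Delta^{(T)}+|T-1\rangle\langle T-1|$, whose minimum eigenvalue equals $1-\cos(\pi/2T)$ by Lemma~\ref{Lemma:Minimum_Eigenvalue}; the $H_{aa}$ blocks carry the additional positive semidefinite initialisation terms $\sum_t|t\rangle\langle t|\otimes X^\dagger\Pi_t|_G X$, which can only increase eigenvalues and so do not undercut this bound. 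The main obstacle is the EQMA-determinism step that collapses the continuous parameter $\mu_i$ to $\{0,1\}$; once this is established the remainder reduces to the endpoint-penalty lemmas already proved.
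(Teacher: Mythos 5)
Your Case \ref{EQMA_Clv_Lemma:Evolve_to_Illegal} argument is incomplete. After applying the Uncoupling Lemma (Lemma \ref{Lemma:Divorce_Lemma}) at the first penalty position $k^*=O(n)$ you bound only the left block $\Delta^{(k^*-1)}+\tfrac14\ket{k^*-1}\bra{k^*-1}$; but the lower bound on the direct sum is the \emph{minimum} over both blocks, and the complementary block $\Delta^{(|S|-k^*+1)}+\tfrac12\ket{k^*}\bra{k^*}$ (plus whatever penalties it inherits) has length up to $|S|$, which can be far larger than $n$. Its endpoint penalty alone only yields $1-\cos\big(\pi/(2(|S|-k^*+1)+1)\big)=O(1/|S|^2)$, not the claimed $1-\cos(\pi/O(n))$. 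You need the full strength of Assumption \ref{Assumption:Clock_Properties}: penalties recur every $O(n)$ steps along the whole of $S$, which is why the paper cuts the entire walk into segments of length $\ell_j=O(n)$, each containing a penalty at its first site, and bounds every segment by $1-\cos(\pi/O(n))$ via Lemma \ref{Lemma:Minimum_Eigenvalue}. Your single-cut argument could be repaired by iterating the uncoupling at every penalty, but as written it does not establish the stated bound.

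In Case \ref{EQMA_Clv_Lemma:Legal} the NO-instance treatment of the blocks inside $\supp H_{in}$ (your $H_{aa}$) is a non sequitur: saying the initialisation terms ``are positive semidefinite and so do not undercut the bound'' compares against $\Delta^{(T)}$ alone, whose ground energy is $0$, so it does not show these blocks have energy at least $1-\cos(\pi/2T)$. This is precisely where the paper needs the Starting Penalty Lemma: after undoing the rotation, each such block satisfies $K_{in}(Z_i)\geq\Delta^{(T)}+\ket{j}\bra{j}$ for some penalised time $j\leq T_{init}-1$, and Lemma \ref{Lemma:Minimum_Eigenvalue} guarantees that a single penalty anywhere on the line costs at least $1-\cos(\pi/2T)$ (the minimum over positions being attained at the endpoints); to get the stated \emph{equality} you also need the existence of at least one out-penalised block outside $\supp H_{in}$, which you do not address. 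Separately, your determinism step for YES instances overreaches Def.~\ref{Def:EQMA}: EQMA only guarantees \emph{one} witness accepted with probability $1$, not that every correctly initialised input is accepted or rejected deterministically, so you cannot conclude that all $\mu_i\in\{0,1\}$ (and hence $H_{ab}=0$) this way. For NO instances the argument is sound, since there every correctly initialised input is rejected with certainty --- this is how the paper, via Lemma \ref{Lemma:U_NO_Instance}, forces $\mu_i=1$ for all $i$ and hence $H_{ab}=0$; for YES instances the eigenvalue claim fortunately needs only one $\mu_i=0$ block together with $H\geq 0$.
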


\begin{proof} 

	By assumption, the standard form clock satisfies all the assumptions for the Clairvoyance Lemma (Lemma \ref{Lemma:Clairvoyance}) to hold, hence we can apply it straightforwardly.

	We now consider the three different types of subspaces as defined in the Clairvoyance Lemma.
	The case of subspace \ref{Clv_Lemma:_Illegal} follows straightforwardly from the result about subspace \ref{Clv_Lemma:_Illegal} in Lemma \ref{Lemma:Clairvoyance}.

	\textcolor{white}{a}
	\newline\newline \noindent
	\textbf{Type \ref{Clv_Lemma:Legal} Subspaces:} 
	\newline\newline
	Here we consider subspaces which contain only legal states (as per point \ref{Clv_Lemma:Legal} point 3 of the lemma statement).
	From the analysis in Lemma \ref{Lemma:Clairvoyance} that the Hamiltonian takes the form
	\begin{align} \label{Eq:Conjugated_Hamiltonian_2}
	R^\dagger H|_{\mathcal{K}_S \otimes \mathcal{H}_Q}R =& \Delta^{(\abs{S})} \otimes \mathds{1}_Q + R^\dagger \bigg(\sum_{t=0}^{T_{init}-1}(\ket{t}\bra{t})\otimes  (\Pi_t) \bigg) R\\ &+ \ket{T}\bra{T} \otimes \ D,
	\end{align}
	for $R=\mathds{1}_C\otimes (X\oplus Y)_Q$.
	
	We first see from Lemma \ref{Lemma:Clairvoyance} that the Hamiltonian can be broken up into four blocks, where the top left corresponds to $\supp\big(\sum_{t=0}^{T_{init}-1}\Pi_t\big)$.
	We now consider the YES and NO instances separately. 
	\newline \newline
	\noindent
	\textbf{EQMA NO Instances}
	\newline \newline
	We now consider the minimum eigenvalues in the case that the ground state encodes the verification of an NO  EQMA instance. We will need the following lemma from \cite{Bausch_Crosson_2018}:
	\begin{lemma}[Lemma 15 of \cite{Bausch_Crosson_2018}] \label{Lemma:U_NO_Instance}
		Let $U_t$ be the unitary representing the evolution of the computation at its $t^{th}$ step.
		Let $U=\prod_{t=0}^{T}U_t$ encode the verification of a NO instance.
		Then $D_{bb}$, defined as $R^\dagger \Pi_{out} R$ restricted to $\ker\sum_{t=1}^{T_{init}} \Pi_t$, has full rank. 
	\end{lemma}
	For a NO EQMA instance, the probability of any input being rejected is 1. 
	We now realise that for a NO instance the probability of the circuit rejecting a \textit{correctly initialised input} $\ket{x}\in \ker(\sum_{t=0}^{T_{init}=1}\Pi_t) \subset( \mathds{C}^{d})^{\otimes L}$ is $\bra{x}U^\dagger \Pi_{out}U\ket{x} = \bra{x^\prime}D_{bb}\ket{x^\prime} = 1 $ for an EQMA instance. 
	If we choose $\ket{x^\prime} \in  \ker H_{in}$ then this, in combination with $D_{bb}$ having maximum rank, implies $\mu_i = 1,\quad \forall i $.

	If we rearrange the rows and columns and write $D=\bigoplus_i P_i$, then $\mu_i=1$ implies either $P_i|_{G^c}=1$ or for $2\times 2$ matrices $P_i = \mathds{1}_2$ or
	\begin{equation}
	P_i = \begin{pmatrix}
	0 & 0 \\
	0 & 1
	\end{pmatrix}.
	\end{equation}
	The part of the conjugated Hamiltonian represented by $H_{bb}$ in the Clairvoyance Lemma (Lemma \ref{Lemma:Clairvoyance}) now decouples into a set of $T \times T$ blocks (see figure \ref{Fig:Conjugated_Hamiltonian}).
	We also note $H_{ab}=H_{ba}^\dagger =0$. 
	We now know that the lowest eigenvalue of $H_{bb}$ must belong to one of these separate $T \times T$ blocks, or lie in the upper-left block $H_{aa}$. 
	We now consider these two possibilities.

	These $T \times T$ blocks are penalised by $H_{out}$ \emph{only}; let $K_{out}$ represent one of these blocks, then $\supp(K_{out})\cap \supp(H_{in})=\emptyset$ (where $G$ is defined in Lemma \ref{Lemma:Clairvoyance}). 
	As a result they must take the form 
	\begin{equation}
	K_{out}= \Delta^{(T)} + \ket{T}\bra{T} = \begin{pmatrix}
	&  &  \\
	& \quad \boldsymbol{\Delta}^{(T)} &  \\
	&  & 1
	\end{pmatrix}.
	\end{equation}
	We now want to show that the subspace $\supp\bigg( \sum_{t=0}^{T_{init}-1} \Pi_t \bigg)$ (i.e. the subspace penalised by $H_{in}$) must have a minimum eigenvalue greater than or equal to the minimum eigenvalue of $K_{out}$. 
	To consider the blocks penalised by $H_{in}$ we first label $\supp \bigg( \sum_{t=0}^{T_{init}-1} \Pi_t \bigg) =: G$. We note
	 $D|_G= (1-\mu_1)\ket{1}\bra{1}+(1-\mu_2)\ket{2}\bra{2}+\dots  =  \bigoplus_i (1-\mu_i)$, where $\mu_i=0 \ or \ 1$. 
	 Then the Hamiltonian restricted to this subspace is
	
	\begin{align}
	R^\dagger H R|_G  &= \Delta^{(T)} \otimes \mathds{1}_G + (R^\dagger \sum_{t=0}^{T_{init}-1} \ket{t}\bra{t} \otimes  \Pi_{t} R)|_G + \ket{T}\bra{T}\otimes \bigoplus_i (1-\mu_i) \\
	&\geq \Delta^{(T)} \otimes \mathds{1}_G + (R^\dagger \sum_{t=0}^{T_{init}-1} \ket{t}\bra{t} \otimes  \Pi_{t} R)|_G 
	\end{align}
	We then recall that $R=W(\mathds{1}_C\otimes(X\oplus Y)_Q)$, hence we can conjugate with the inverse $\mathds{1}_C\otimes X_Q = R|_G$ to get
	\begin{align} 
	W^\dagger H W|_G& \geq \Delta^{(T)} \otimes \mathds{1}_G + \sum_{t=0}^{T_{init}-1} \ket{t}\bra{t} \otimes  \Pi_{t}|_G \label{Eq:H_in_Penalised_Block} \\
	&\geq  \bigoplus_{i=1}^s \bigg(  \Delta^{(T)} +  \sum_{\ket{k}\in Z_i}\ket{k}\bra{k} \bigg). \label{Eq:Deconjugate}
	\end{align} 
	where $Z_i$ is a non-empty set of basis elements corresponding to the elements penalised by $H_{in}$, for $0\leq k\leq T_{init}-1$.
	Going from equation (\ref{Eq:H_in_Penalised_Block}) to (\ref{Eq:Deconjugate}) we have used the fact that the Hamiltonian decomposes into blocks: one block in $G$ and the other in $G^c$. 
	Then the matrix $\bigoplus_i(1-\mu_i)$ is positive semi-definite.
	
	Note that the matrix in equation (\ref{Eq:Deconjugate}) is a block diagonal matrix with blocks of the form
	\begin{equation}
	K_{in}(Z_i) = \Delta^{(T)} + \sum_{\ket{k} \in Z_i} \ket{k}\bra{k}
	\end{equation}
	We now want to show that minimum eigenvalue of blocks of the form $K_{in}(Z_i)$ is larger than those of $K_{out}$ $\forall k$. To do this we use Lemma \ref{Lemma:Minimum_Eigenvalue} derived earlier

	First realise that $K_{in}(Z_i) \geq \Delta^{(T)}+\ket{j}\bra{j}$, where $j$ is the smallest integer such that $\ket{j}\in Z_i$.
	To see that a $K_{out}$ block always exists we note that it must be possible to choose a state $\ket{x}\in \ker H_{in}$ for a non-trivial kernel, and for a NO EQMA instance this must correspond to a $K_{out}$ block.
	From Lemma \ref{Lemma:Minimum_Eigenvalue} we see that the minimum eigenvalue of $\Delta^{(T)}+\ket{k}\bra{k}$ occurs for $k=1$, which is equal to the minimum eigenvalue of $K_{out}$ blocks.
	Hence $K_{in}(Z_i) \geq \Delta^{(T)}+\ket{j}\bra{j}\geq K_{out}$.
	
	From Lemma \ref{Lemma:Tridiagonal_Full_Penalty} we find that the eigenvalues of $K_{out}$ blocks are $1-\cos(\frac{(2m-1)\pi}{2T})$, for $m=1,2...,T$, thus giving a minimum eigenvalue of a NO EQMA instance as:
	\begin{equation}
	\lambda_0(\mathcal{K}_S) = 1 - \cos\bigg(\frac{\pi}{2T}\bigg)
	\end{equation}
	\newline\newline
	\noindent
	\textbf{EQMA YES Instances}
	\newline \newline
	For an EQMA YES instance we assume that $\ker\bigg( \sum_{t=0}^{T_{init}-1} \Pi_t\bigg)$ is non-trivial. Then, by definition, for a YES EQMA instance, there exists a state that is in  $\ker \bigg( \sum_{t=0}^{T_{init}}\Pi_{t} + U^\dagger \Pi_{out}U \bigg)$. Thus there exists a block with $\mu_i=0$, hence only contains $\Delta^{(T)}$. By standard analysis we see that the corresponding minimum eigenvalue eigenspace for YES instances is
	\begin{equation}
	\ker\left(H_{trans} + H_{pen} + H_{in} + H_{out}\right) =
	span \left\{
	\frac{1}{\sqrt{\abs{S}}}\sum_{t=0}^{\abs{S}-1}\ket{t}_C\ket{\psi_t}_Q
	\right\}
	\end{equation}
	where $\ket{t}_C$ are the states in $S$, $\ket{\psi_0}$ is any state in $\mathcal{H}_Q$, and $\ket{\psi_t} := U_t\dots U_1\ket{\psi_0}_Q$ where $U_t$ is the unitary on $\mathcal{H}_Q$ appearing in the transition rule that takes $\ket{t-1}_C$ to $\ket{t}_C$. These states have eigenvalue $0$. All other states in $\mathcal{K}_S\otimes\mathcal{H}_Q$ have energy at least that of the a NO instance.
	\newline\newline
	\noindent
	\textbf{Type \ref{Clv_Lemma:Evolve_to_Illegal} Subspaces:}
	\newline\newline
	We now consider subspaces which contain both legal and illegal states.
	From the Clairvoyance Lemma (Lemma \ref{Lemma:Clairvoyance}) we have that the Hamiltonian after conjugation takes the form:
	\begin{align} \label{Eq:Type_2_Penalties}
	W^\dagger H\vert_{\mathcal{K}_S\otimes \mathcal{H}_Q} W &\geq \bigoplus_i \bigg(\Delta^{(\abs{S})} + \sum_{\ket{k}\in Z_i} \ket{k}\bra{k} \bigg)
	\end{align}
	where $\sum_{\ket{k}\in Z_i} \ket{k}\bra{k} := H_{pen}\vert_{\mathcal{K}_S\otimes \mathcal{H}_Q}$. 
	We note that here each basis state within a $\mathcal{K}_S\otimes \mathcal{H}_Q$ block represents a different time step as the computation propagates.
	We want to lower-bound the energy of these $\mathcal{K}_S\otimes \mathcal{H}_Q$ subspaces such that they have energy larger than subspaces of type \ref{EQMA_Clv_Lemma:Legal}. 
	Thus we consider the clock properties assumption (Assumption \ref{Assumption:Clock_Properties}) which tells us that any state in subspace \ref{EQMA_Clv_Lemma:Evolve_to_Illegal} must reach an illegal state in $O(n)$ steps forwards or backwards. 
	Thus for $\ket{k}\in Z_i$ there must be another state $\ket{r}\in Z_i$, such that $r \leq k +O(n)$ (unless $\abs{S} < k+O(n)$) and similarly in the backwards direction.
	
	Now consider the right-hand side of \cref{Eq:Type_2_Penalties}. This can be decomposed into sets of shorter 1D walks of length $\ell_i \leq O(n)$ such that each of these shorter paths contains an penalty term in its first element, thus allowing us to write 
	\begin{align}
	\Delta^{(\abs{S})} + \sum_{\ket{k}\in Z_i} \ket{k}\bra{k} &= \bigoplus_j \bigg( \Delta^{(\ell_j)} + \ket{m_j}\bra{m_j} \bigg)+\sum E_j \\
	&\geq \bigoplus_j \bigg( \Delta^{(\ell_j)} + \ket{m_j}\bra{m_j} \bigg)
	\end{align}
	where $\ket{m_j}\in Z_i$ is some basis state corresponding to the first element of $\Delta^{(\ell_i)}$, and 
	\begin{equation}
	E_j = \begin{pmatrix}
	1/2 & -1/2 \\
	-1/2 & 1/2
	\end{pmatrix}.
	\end{equation}
	The inequality comes from the fact the terms $E_j$ are are positive semi-definite.
	Since $\ell_i \leq O(n)$, and using Lemma \ref{Lemma:Minimum_Eigenvalue}, we can bound the minimum eigenvalue of each of these matrices as
	\begin{align}
	\lambda_0(\mathcal{K}_S)\geq 1 - \cos\bigg( \frac{\pi}{O(n)} \bigg).
	\end{align}
	For sufficiently large $n$, this is always larger than the minimum eigenvalues of type-\ref{EQMA_Clv_Lemma:Legal} subspaces.
\end{proof}

So far we have found the minimum eigenvalue of the standard form Hamiltonian encoding the verification of an EQMA instances.
We now consider the form of the ground states themselves.
\begin{lemma}[EQMA Ground States] \label{Lemma:EQMA_Ground_States}
	Let $H\in\mathcal{B}(\mathds{C}^{\otimes n})$ be a standard form Hamiltonian as described in Def. \ref{Def:Standard-form_H}.
	Let the Hamiltonian encode the verification of an EQMA instance and define $\ket{\psi_t}=\prod_{j=0}^{t}U_t\ket{\psi_0}$, for some initial state $\ket{\psi_0}$. 
	Then the ground states for the YES and NO instances take the form
	\begin{align}
	\ket{\nu_{YES}} &= \frac{1}{\sqrt{T}} \sum_{t=0}^{T-1} \ket{t}\ket{\psi_t} \\
	\ket{\nu_{NO}} &=\sum_{t=0}^{T-1} \bigg(  \sin\bigg( \frac{(t+1)\pi}{2T}\bigg) - \sin\bigg( \frac{\pi t}{2T} \bigg) \bigg) \ket{t}\ket{\psi_t}, \\
	&= \sum_{t=0}^{T-1} 2\cos\bigg( \frac{(2t+1)\pi}{2T}\bigg)\sin\bigg(\frac{\pi t}{2T}\bigg)\ket{t}\ket{\psi_t}. 
	\end{align}
\end{lemma}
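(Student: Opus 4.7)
The plan is to use the block decomposition from Lemma \ref{Lemma:EQMA_Clairvoyance}: the minimum-energy eigenspace of $H$ lives inside a specific block after conjugation by the unitary $W$ (from the Clairvoyance Lemma) that straightens the quantum register. In that conjugated picture the relevant block is $\Delta^{(T)}$ for a YES instance and $K_{out} = \Delta^{(T)} + \ket{T}\bra{T}$ for a NO instance, tensored with a choice of register state. Thus the task reduces to (i) writing down the ground-energy eigenvector inside that block, and (ii) applying $W$ to push the result back to the original basis, where $W$ acts on $\ket{t}\otimes\ket{\psi_0}$ as $\ket{t}\otimes\ket{\psi_t}$ with $\ket{\psi_t} = U_t\cdots U_1\ket{\psi_0}$.

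For the YES case the argument is immediate. By definition, there exists $\ket{\psi_0}\in\ker\!\bigl(\sum_t \Pi_t + U^\dagger\Pi_{out}U\bigr)$, and the conjugated restriction is simply $\Delta^{(T)}$ acting on the clock, tensored with $\ket{\psi_0}$. The uniform superposition $\frac{1}{\sqrt{T}}\sum_{t=0}^{T-1}\ket{t}$ is the unique zero eigenvector of $\Delta^{(T)}$, so after inverting the conjugation I obtain $\ket{\nu_{YES}} = \frac{1}{\sqrt{T}}\sum_{t=0}^{T-1}\ket{t}\ket{\psi_t}$, as claimed.

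For the NO case I need the minimum-eigenvalue eigenvector of $K_{out}$. Using the ansatz $v_t = \cos\!\bigl(\theta_1(t-\tfrac12)\bigr)$ with $\theta_1 = \pi/(2T)$ (indexing $t=1,\dots,T$), the interior recurrence and the left boundary (Neumann-like, inherited from the $\tfrac12$ entry of $\Delta^{(T)}$) hold automatically with eigenvalue $1-\cos\theta_1$; the right boundary reduces to the single condition $\cos(\theta_1 T)=0$, which is satisfied by $\theta_1 = \pi/(2T)$, matching Lemma \ref{Lemma:Tridiagonal_Full_Penalty}. Reindexing $t \to t+1$ so that $t = 0,\dots,T-1$ gives the coefficient $\cos\!\bigl(\pi(2t+1)/(4T)\bigr)$, and the sum-to-product identity $\sin A - \sin B = 2\cos\!\bigl(\tfrac{A+B}{2}\bigr)\sin\!\bigl(\tfrac{A-B}{2}\bigr)$ applied with $A=(t+1)\pi/(2T)$, $B = t\pi/(2T)$ rewrites this (up to the overall normalization constant $2\sin(\pi/4T)$) as $\sin\!\bigl((t+1)\pi/(2T)\bigr) - \sin\!\bigl(t\pi/(2T)\bigr)$. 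Applying $W$ to the tensor product of this clock vector with the register state $\ket{\psi_0}$ selecting the relevant $K_{out}$ block then produces $\ket{\nu_{NO}}$ in the stated form.

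The main obstacle is bookkeeping rather than technical difficulty: one needs to verify that the minimum-energy block is the $K_{out}$-type block and not an $H_{in}$-penalised block, but this is already established inside the proof of Lemma \ref{Lemma:EQMA_Clairvoyance} via the starting-penalty bound of Lemma \ref{Lemma:Minimum_Eigenvalue}, and that for the NO case the eigenvalue $1-\cos(\pi/2T)$ is in fact attained in a $K_{out}$ block due to the existence of $\ket{x}\in\ker H_{in}$. Once that is in hand, the trigonometric verification of the ansatz and the sum-to-product rewrite are routine, and the final step of undoing the $W$-conjugation is the standard move that turns a straightened state $\ket{t}\ket{\psi_0}$ into a history-like state $\ket{t}\ket{\psi_t}$.
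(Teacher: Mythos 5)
Your proposal is correct and follows essentially the same route as the paper: restrict to the legal block identified by Lemma \ref{Lemma:EQMA_Clairvoyance}, observe that the YES block is $\Delta^{(T)}$ with the uniform superposition as ground state and the NO block is $\Delta^{(T)}+\ket{T}\bra{T}$, then undo the $W$-conjugation to recover the history-state form $\ket{t}\ket{\psi_t}$. The only difference is cosmetic: the paper simply cites \cite{Yueh_Cheng_2008} for the ground state of $\Delta^{(T)}+\ket{T}\bra{T}$, whereas you verify the cosine ansatz and the boundary conditions explicitly, which makes the argument self-contained.
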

\begin{proof}
	We see that the ground state energies for these two cases correspond to Hamiltonians of the form of $T\times T$ matrices
	\begin{align}
	H^{(YES)}_{EQMA}\vert_{\mathcal{K}_S\otimes \mathcal{H}_Q} = \Delta^{(T)},
	\end{align}
	which is well known to have a uniform superposition as its ground state. 
	For the NO instance, we see
	\begin{align}
	H^{(NO)}_{EQMA}\vert_{\mathcal{K}_S\otimes \mathcal{H}_Q} = \Delta^{(T)} + \ket{T}\bra{T}.
	\end{align}
	The ground state of this is given in \cite{Yueh_Cheng_2008}.
\end{proof}

\subsection{QMA Computation}

We now consider a Standard-Form Hamiltonian encoding the verification of a QMA problem instance and its associated eigenvalues.
Before we do so we introduce the following lemma:
\begin{lemma}[Lemma 4 of \cite{Kempe_Kitaev_Regev}] \label{Lemma:KKR_Eigenvalue_Inequality}
	Let $H_1$ and $H_2$ be two Hamiltonians with eigenvalues $\mu_1 \leq \mu_2 \leq ...$ and $\sigma_1 \leq \sigma_2 \leq ...$. Then, for all $j$, $|\mu_j - \sigma_j | \leq || H_1 - H_2||$.
\end{lemma}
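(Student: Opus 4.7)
The plan is to prove this via the Courant--Fischer min-max characterization of eigenvalues of Hermitian operators, which is the standard route to Weyl-type perturbation bounds of which this lemma is an instance. First I would recall that for a Hermitian operator $H$ on a finite-dimensional Hilbert space with ordered eigenvalues $\lambda_1 \leq \lambda_2 \leq \dots$, one has
\begin{align}
\lambda_j(H) = \min_{\substack{V \subseteq \mathcal{H} \\ \dim V = j}} \, \max_{\substack{v \in V \\ \|v\|=1}} \langle v, H v \rangle.
\end{align}
This holds for both $H_1$ and $H_2$, giving the analogous expressions for $\mu_j$ and $\sigma_j$.

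Next I would use the operator-norm bound pointwise. For any unit vector $v$,
\begin{align}
\langle v, H_1 v \rangle = \langle v, H_2 v \rangle + \langle v, (H_1 - H_2) v \rangle \leq \langle v, H_2 v \rangle + \|H_1 - H_2\|,
\end{align}
since $\langle v, (H_1 - H_2) v \rangle \leq \|H_1 - H_2\|$ for any unit $v$ (a basic property of the operator norm applied to the Hermitian operator $H_1 - H_2$). Fixing any $j$-dimensional subspace $V$ and taking the maximum over unit vectors $v \in V$ on both sides preserves the inequality; then taking the minimum over $j$-dimensional subspaces $V$ on both sides and invoking the min-max formula yields $\mu_j \leq \sigma_j + \|H_1 - H_2\|$.

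The reverse inequality $\sigma_j \leq \mu_j + \|H_1 - H_2\|$ follows by symmetry, since $\|H_2 - H_1\| = \|H_1 - H_2\|$. Combining the two gives $|\mu_j - \sigma_j| \leq \|H_1 - H_2\|$ for all $j$, as required. There is no real obstacle here beyond the careful bookkeeping of the min-max argument; the only subtlety is ensuring the same subspace $V$ is used on both sides when passing from the pointwise bound to the min-max form, which is exactly what allows the perturbation $\|H_1 - H_2\|$ to be pulled out as a uniform additive constant independent of $V$ and $v$.
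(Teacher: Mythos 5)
Your proof is correct: this is the standard Weyl-type perturbation bound, and the Courant--Fischer min-max argument you give (bounding $\langle v, H_1 v\rangle \leq \langle v, H_2 v\rangle + \|H_1-H_2\|$ pointwise, then passing through $\max$ over $V$ and $\min$ over $j$-dimensional subspaces, plus symmetry) is exactly the established route. The paper itself offers no proof, simply citing Lemma 4 of Kempe--Kitaev--Regev, whose proof is essentially the same min-max argument, so there is nothing to reconcile.
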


\begin{lemma}[QMA Clairvoyance Lemma]\label{Lemma:QMA_Clairvoyance}
	Let $H = H_{trans} + H_{pen} +H_{in}+H_{out}$ be a Standard-Form Hamiltonian encoding the evolution of a QMA verifier. 
	Let the subspaces $\mathcal{K}_S$ be defined as in Lemma \ref{Lemma:Invariant_subspaces} and let $\lambda_0(\mathcal{K}_S)$ denote the minimum eigenvalue of the restriction $H\vert_{\mathcal{K}_S\otimes \mathcal{H}_Q}$ of $H = H_{trans} + H_{pen}+ H_{in} + H_{out}$ to the invariant subspace $\mathcal{K}_S \otimes\mathcal{H}_Q$.
	
	\noindent Then each $\mathcal{K}_S$ falls into one of the following categories (corresponding to the same categories in Lemma \ref{Lemma:Clairvoyance}):
	\begin{enumerate} 
		\item $S$ contains only illegal states, and $\lambda_0(\mathcal{K}_S) \geq 1$. \label{QMA_Clv_Lemma:_Illegal}
		
		\item $S$ contains both legal and illegal states., and $\lambda_0(\mathcal{K}_S) \geq 1 - \cos(\frac{\pi}{8L}) $. \label{QMA_Clv_Lemma:Evolve_to_Illegal} 
		
		\item $S$ contains only legal states. \label{QMA_Clv_Lemma:Legal} 
		
		Define $\eta$ to be the probability of error for QMA, as per Def. \ref{Def:QMA}. 
		Let $H^{(YES/NO)}_{QMA}$ represents a Hamiltonian encoding the verification of a YES/NO instance, then its minimum eigenvalue is bounded by 
		\begin{align}
		0 \leq &\lambda_0\big( H_{QMA}^{(YES)}\vert_{\mathcal{K}_S\otimes \mathcal{H}_Q}\big) \leq \eta^{1/2} \\
		1-\cos\bigg(\frac{\pi}{2T}\bigg) - \eta^{1/2} \leq  &\lambda_0\big( H_{QMA}^{(NO)}\vert_{\mathcal{K}_S\otimes \mathcal{H}_Q}\big) \leq 1-\cos\bigg(\frac{\pi}{2T}\bigg)
		\end{align}
	\end{enumerate}
\end{lemma}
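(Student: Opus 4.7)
The plan is to mirror Lemma \ref{Lemma:EQMA_Clairvoyance} closely: classify each invariant subspace $\mathcal{K}_S$ by type, apply Lemma \ref{Lemma:Clairvoyance} to reach the canonical block form, then bound the minimum eigenvalue block by block. The new ingredient relative to the EQMA setting is that the QMA verifier may err with probability $\eta>0$ rather than $0$, so I would need to control the resulting perturbation in the conjugated $H_{out}$ using Lemma \ref{Lemma:KKR_Eigenvalue_Inequality}. The classifications and bounds for types 1 and 2 are inherited essentially verbatim from Lemma \ref{Lemma:EQMA_Clairvoyance}, because those arguments rely only on $H_{pen}$, $H_{trans}$, Assumption \ref{Assumption:Clock_Properties} and Lemma \ref{Lemma:Minimum_Eigenvalue}, none of which depends on whether the embedded computation is exact.

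For the type-3 (all-legal) subspaces I would use the Clairvoyance canonical form, in which the conjugated Hamiltonian decomposes into $H_{aa}$, $H_{bb}$ and an off-diagonal coupling $H_{ab} = \bigoplus_i \gamma_i \ket{T}\bra{T}$ with $\gamma_i = -\sqrt{\mu_i(1-\mu_i)}$. For the YES upper bound I would take the accepting witness $\ket{\phi}$ guaranteed by Def. \ref{Def:QMA}, which satisfies $\bra{\phi} U^\dagger \Pi_{out} U \ket{\phi} \leq \eta$, and form its history state $\ket{\psi_{YES}} = T^{-1/2}\sum_{t=0}^{T-1}\ket{t}\otimes U_t\cdots U_1 \ket{\phi, 0^k}$. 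This state is annihilated by $H_{trans}$, $H_{pen}$ and $H_{in}$, leaving $\bra{\psi_{YES}} H_{QMA}^{(YES)} \ket{\psi_{YES}} \leq \eta/T \leq \eta^{1/2}$. For the NO upper bound, every projector eigenvalue satisfies $\mu_i\leq 1$, so each $H_{bb}$-block is dominated by $\Delta^{(T)}+\ket{T}\bra{T}$, whose minimum eigenvalue is exactly $1-\cos(\pi/2T)$ by Lemma \ref{Lemma:Tridiagonal_Full_Penalty}, and a trial vector supported on such a block witnesses this bound.

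The NO lower bound is the main delicate step, and I would establish it through Lemma \ref{Lemma:KKR_Eigenvalue_Inequality}. The QMA NO promise forces $\mu_i \geq 1-\eta$ on every block supported outside the $H_{in}$-penalty region, so the off-diagonal coupling satisfies $\|H_{ab}\| \leq \max_i |\gamma_i| \leq \sqrt{\eta(1-\eta)} \leq \sqrt{\eta}$. I would then compare $H_{QMA}^{(NO)}|_{\mathcal{K}_S\otimes\mathcal{H}_Q}$ against the decoupled operator $H_{aa}\oplus H_{bb}$ obtained by zeroing $H_{ab}$: each $H_{bb}$-block reads $\Delta^{(T)}+\mu_i\ket{T}\bra{T}$ with $\mu_i\geq 1-\eta$, so the endpoint-penalty eigenvalue bound $\lambda_m \geq \mu(1-\cos((2m-1)\pi/2T))$ proved earlier gives $\lambda_0 \geq (1-\eta)(1-\cos(\pi/2T))$; each $H_{aa}$-block carries at least one $H_{in}$ penalty near the start of the walk, so Lemma \ref{Lemma:Minimum_Eigenvalue} yields $\lambda_0 \geq 1-\cos(\pi/2T)$. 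Applying Lemma \ref{Lemma:KKR_Eigenvalue_Inequality} with the $\sqrt{\eta}$ perturbation norm then delivers the claimed lower bound $1-\cos(\pi/2T) - \sqrt{\eta}$.

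The hardest part will be the careful block-by-block accounting demanded by Lemma \ref{Lemma:Clairvoyance}'s canonical form, which mixes paired $(\alpha_i,\beta_i,\gamma_i)$ triples with unpaired singletons in $H_{aa}$ or $H_{bb}$ and requires tracking how the $H_{in}$-support $G$ interacts with the conjugated $H_{out}$. Each configuration demands its own lower-bound argument to conclude that no block falls below $1-\cos(\pi/2T) - \sqrt{\eta}$. A secondary subtlety is that the $\sqrt{\eta}$ rate (rather than $\eta$ or $\eta/T^2$) is forced by the off-diagonal magnitude $|\gamma_i| = \sqrt{\mu_i(1-\mu_i)}$ and looks loose; it is however the best one can extract from a black-box KKR perturbation and is adequate for the eventual QMA-hardness application, where $\eta$ can always be amplified to be exponentially small.
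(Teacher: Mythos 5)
Your overall strategy matches the paper's: classify the invariant subspaces via Lemma \ref{Lemma:Clairvoyance}, inherit the type-1 and type-2 bounds from the EQMA argument, and control the type-3 subspaces by a norm-$\eta^{1/2}$ perturbation fed into Lemma \ref{Lemma:KKR_Eigenvalue_Inequality}. Your upper-bound arguments are in fact more explicit than the paper's (the history state of an accepting witness gives $\eta/T$ for YES, and a trial vector supported entirely on a $bb$-block, where the cross terms vanish and $\mu_i\leq 1$, cleanly gives $1-\cos(\pi/2T)$ for NO), and both are correct.

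The one step that does not deliver what you claim is the NO lower bound. You zero out only the off-diagonal coupling $H_{ab}$, keep the diagonal penalties $\mu_i\ket{T}\bra{T}$ with $\mu_i\geq 1-\eta$, and bound the surviving $bb$-blocks by $\Delta^{(T)}+\mu_i\ket{T}\bra{T}\geq \mu_i\bigl(\Delta^{(T)}+\ket{T}\bra{T}\bigr)$. The comparison operator then has minimum eigenvalue only $(1-\eta)\bigl(1-\cos(\pi/2T)\bigr)$, so Weyl with the $\sqrt{\eta(1-\eta)}$ off-diagonal norm yields $(1-\eta)\bigl(1-\cos(\pi/2T)\bigr)-\sqrt{\eta(1-\eta)}$, which falls short of the stated $1-\cos(\pi/2T)-\eta^{1/2}$ by an additive $\eta\bigl(1-\cos(\pi/2T)\bigr)$ whenever $\eta$ is small compared with $T^{-4}$ --- precisely the amplified regime the lemma is later used in. The paper avoids this by rounding each rank-one output block to an exact projector, i.e.\ comparing against an auxiliary ``EQMA Hamiltonian'' in which $P_i^{QMA}$ is replaced by $P_i^{EQMA}=\ket{1}\bra{1}$ (diagonal rounding \emph{and} off-diagonal removal treated as a single perturbation); the $2\times 2$ difference
\begin{equation}
P_i^{EQMA}-P_i^{QMA}=\begin{pmatrix}\mu_i-1 & \sqrt{\mu_i(1-\mu_i)}\\ \sqrt{\mu_i(1-\mu_i)} & 1-\mu_i\end{pmatrix}
\end{equation}
has norm exactly $\sqrt{1-\mu_i}\leq\eta^{1/2}$, while the rounded Hamiltonian has minimum eigenvalue exactly $1-\cos(\pi/2T)$ by Lemma \ref{Lemma:EQMA_Clairvoyance}. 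Replacing your two-step split (off-diagonal via Weyl, diagonal via the $\mu$-rescaling) with this single bundled perturbation repairs the constant and recovers the lemma as stated; everything else in your plan, including the block-by-block accounting of $G$ versus $G^c$ and the $H_{aa}$ bound via Lemma \ref{Lemma:Minimum_Eigenvalue}, goes through as in the paper.
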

\begin{proof}
	The Hamiltonian is of standard form and hence satisfies the assumptions of the Clairvoyance Lemma (Lemma \ref{Lemma:Clairvoyance}).
	The statements for subspaces of types \ref{QMA_Clv_Lemma:_Illegal} and \ref{QMA_Clv_Lemma:Evolve_to_Illegal} follow directly from Lemma \ref{Lemma:Clairvoyance} by the same reasoning as the EQMA case in Lemma \ref{Lemma:EQMA_Clairvoyance}.
	
	We now consider subspaces of type \ref{QMA_Clv_Lemma:Legal}. To do this we define an ``EQMA Hamiltonian'' $\tilde{H}_{EQMA}$ that has the same eigenvalues as $H_{QMA}$ would have if all its verification computations were computed deterministically. We then bound the eigenvalues of type \ref{QMA_Clv_Lemma:Legal} subspaces relative to this EQMA Hamiltonian.
	
	\textcolor{white}{a} \\
	\noindent
	\textbf{Defining the EQMA Hamiltonian} \\
	\noindent
	First we consider a standard form Hamiltonian which encodes the verification a QMA instance, $H_{QMA}$. From it, we define an EQMA Hamiltonian which corresponds to it.
	Following from the Clairvoyance Lemma (Lemma \ref{Lemma:Clairvoyance}), we can conjugate $H_{QMA}\vert_{\mathcal{K}_S\otimes \mathcal{H}_Q}$ by the unitary $R=W(\mathds{1}_C\otimes (X\oplus Y)_Q)$ to put it in the following form:
	\begin{align}
	R^\dagger H_{QMA}\vert_{\mathcal{K}_S\otimes \mathcal{H}_Q}R =&  \Delta^{(T)} \otimes \mathds{1}_Q +  R^\dagger H_{in}\vert_{\mathcal{K}_S\otimes \mathcal{H}_Q}R\\ &+ \ket{T}\bra{T}\otimes  D^{QMA},
	\end{align}
	where $D^{QMA}=\bigoplus P_i^{QMA}$, where $P_i^{QMA}$ are the matrices described in the Clairvoyance (Lemma \ref{Lemma:Clairvoyance}).
	
	We then define the corresponding EQMA Hamiltonian to be:
	\begin{equation}
	R^\dagger \tilde{H}_{EQMA}\vert_{\mathcal{K}_S\otimes \mathcal{H}_Q}R := R^\dagger H_{QMA}\vert_{\mathcal{K}_S\otimes \mathcal{H}_Q}R - \ket{T}\bra{T}\otimes D^{QMA} + \ket{T}\bra{T}\otimes D^{EQMA}.
	\end{equation}
	where we will define $D^{EQMA}$ below.
	By rearranging the rows and columns it is possible to write $D^{(E)QMA}= \bigoplus_i P_i^{(E)QMA}$, and thus
	\begin{align}
	R^\dagger \tilde{H}_{EQMA}\vert_{\mathcal{K}_S\otimes \mathcal{H}_Q}R -  R^\dagger H_{QMA}\vert_{\mathcal{K}_S\otimes \mathcal{H}_Q}R =  \ket{T}\bra{T}\otimes\bigoplus_i (P_i^{EQMA} - P_i^{QMA})\vert_{\mathcal{K}_S\otimes \mathcal{H}_Q},
	\end{align}
	where $P_i^{EQMA}$ is of the same dimension as the corresponding $P_i^{QMA}$. 
	If $P_i^{QMA}$ is a $2\times 2$ matrix, then from the Clairvoyance Lemma (Lemma \ref{Lemma:Clairvoyance}) it is known that
	\begin{align} \label{Eq:P_i_One_Parameter_2}
	P_i^{QMA} &= \begin{pmatrix}
	1-\mu_i & -\sqrt{\mu_i(1-\mu_i)} \\
	-\sqrt{\mu_i(1-\mu_i)} & \mu_i 
	\end{pmatrix} \ or \ \mathds{1}.
	\end{align}
	If $P_i^{QMA}$ has dimension $1$, then $P_i^{QMA}=0, 1$. 
	If $P_i^{QMA}$ is rank 0, 1 or 2, then the corresponding $P_i^{EQMA}$ is chosen to also be rank 0, 1 or 2 and of the same dimensions.  From the definition of EQMA, we see that 
	\begin{equation}
	P_i^{EQMA} = \mathds{1} \quad or \quad \begin{pmatrix}
	0 & 0 \\
	0 & 1
	\end{pmatrix}
	\end{equation}
	for rank 2 and rank 1 $2\times 2$ matrices respectively.
	\newline
	\newline
	\noindent
	\textit{Aside: the Hamiltonian $\tilde{H}_{EQMA}$ defined here will generally be highly non-local. To understand this, we note $H_{EQMA}$ is defined with respect to a QMA verification circuit. However, this will not concern us since we only require that $R^\dagger \tilde{H}_{EQMA} R$ have a form and minimum eigenvalue close to $R^\dagger H_{QMA} R$ to allow us to analyse the spectrum of $H_{QMA}$ more easily.}
	
	We now analyse to subspaces of type \ref{QMA_Clv_Lemma:Legal} (as defined in Lemma \ref{Lemma:Clairvoyance}).
	\newline
	\newline
	\noindent
	\textbf{Proving Energy Separation for YES and NO instances.} \\
	\noindent
	We now consider how the $\{\mu_i\}_i$ relate to the probability of witness rejection. 
	\newline
	\newline
	\noindent\textbf{NO Instances:}
	\newline
	The probability of a correctly initialised input $\ket{x}\in \ker\sum_{t=0}^{T_{init}}\Pi_t = G^c \subset ( \mathds{C}^{d})^{\otimes n}$ being rejected is $\bra{x}\Pi_{out}|_{G^c}\ket{x} =  \bra{x^\prime} \bigoplus_i P_i^{QMA}|_{G^c}\ket{x^\prime} \leq 1 $. From the definition of QMA we have $1-\eta \leq \bra{x^\prime} \bigoplus_i P_i^{QMA}\ket{x^\prime} \leq 1$ for a NO QMA instance. 
	We note $P_i^{QMA}|_{G^c} = \bigoplus_i \mu_i$. 
	This implies $1-\eta \leq \mu_i \leq 1$ for a Hamiltonian encoding the verification of a NO instance.
	
	We now see that if $P^{QMA}_i$ is one dimensional, $P^{QMA}_i= P^{EQMA}_i =1$ for NO instances. If $P^{QMA}_i$ is a $2\times 2$ matrix, then the $P_i^{QMA}$ is as in expression \ref{Eq:P_i_One_Parameter_2}. Thus $P_i^{EQMA} - P_i^{QMA} = 0$ or 
	\begin{equation}
	(P_i^{EQMA} - P_i^{QMA})\vert_{\mathcal{K}_S\otimes \mathcal{H}_Q} =   
	\begin{pmatrix}
	\mu_i-1 & \sqrt{\mu_i(1-\mu_i)} \\
	\sqrt{\mu_i(1-\mu_i)} & 1-\mu_i 
	\end{pmatrix}.
	\end{equation}
	\newline 
	\newline
	\noindent\textbf{YES Instances:}
	\newline
	We now consider the similar argument for YES instances: 
	By the definition of QMA there must be at least one eigenvector $\ket{x}\in \ker(\sum_{t}\Pi_t)$ for which $0\leq \bra{x}\bigoplus_i U^\dagger P_i^{QMA}|_{G^c}U\ket{x} \leq \eta$. By the same reasoning we find there must be at least one $P_i^{QMA}=0$ or one $0 \leq \mu_i \leq \eta$. 
	
	The $P_i^{QMA}$ are different from the NO instances as at least one witness must be accepted. Hence we know in the EQMA case that the ground state can receive no energy penalty, hence 
	\begin{equation} \label{Eq:EQMA_Projector_Types}
	P_i^{EQMA} =  \begin{pmatrix}
	1 & 0 \\
	0 & 0
	\end{pmatrix} or
	\begin{pmatrix}
	0 & 0 \\
	0 & 0
	\end{pmatrix} or
	\begin{pmatrix}
	1 & 0 \\
	0 & 1
	\end{pmatrix}
	or
	\begin{pmatrix}
	0 & 0 \\
	0 & 1
	\end{pmatrix}.
	\end{equation}
	The first two of these appear for witnesses that are accepted by the verifier, and thus must occur for at least one $i$ (as we are considering a YES instance). The third and fourth appear for rejected witnesses. We now consider the corresponding QMA cases (for convenience we will label $\mu_i=\gamma_i$ for witness that are accepted, hence $0 \leq \gamma_i \leq \eta$):
	\begin{align} \label{Eq:QMA_Projector_Types}
	P_i^{QMA} =&  \begin{pmatrix}
	1-\gamma_i & -\sqrt{\gamma_i(1-\gamma_i)} \\
	-\sqrt{\gamma_i(1-\gamma_i)} & \gamma_i
	\end{pmatrix} or
	\begin{pmatrix}
	0 & 0 \\
	0 & 0
	\end{pmatrix} \\
	&or
	\begin{pmatrix}
	1 & 0 \\
	0 & 1
	\end{pmatrix}
	or
	\begin{pmatrix}
	1-\mu_i & -\sqrt{\mu_i(1-\mu_i)} \\
	-\sqrt{\mu_i(1-\mu_i)} & \mu_i
	\end{pmatrix}.
	\end{align}
	Here $\mu_i$ is only bounded between $1-\eta\leq\mu_i\leq 1$.
	If we now consider accepting witnesses we find either $P_i^{EQMA} - P_i^{QMA}=0$ or 
	\begin{align}
	P_i^{EQMA} - P_i^{QMA}= \begin{pmatrix}
	\gamma_i & \sqrt{\gamma_i(1-\gamma_i)} \\
	\sqrt{\gamma_i(1-\gamma_i)} & -\gamma_i
	\end{pmatrix}.
	\end{align}
	
	~\newline
	\noindent\textbf{Energy Bound:}
	\newline
	We now consider the difference for the NO case. 
	\begin{align}
	\tilde{H}_{EQMA}\vert_{\mathcal{K}_S\otimes \mathcal{H}_Q} - H_{QMA}\vert_{\mathcal{K}_S\otimes \mathcal{H}_Q} &\cong \ket{T}\bra{T} \otimes \bigoplus_i (P_i^{EQMA} - P_i^{QMA})\vert_{\mathcal{K}_S\otimes \mathcal{H}_Q} \\
	||\tilde{H}_{EQMA}^{(NO)}\vert_{\mathcal{K}_S\otimes \mathcal{H}_Q} - H_{QMA}^{(NO)}\vert_{\mathcal{K}_S\otimes \mathcal{H}_Q}|| &= \max_i||P_i^{EQMA} - P_i^{QMA}||  \\
	&= \max_i (1-\mu_i)^{1/2} \\
	&\leq \eta^{1/2}.
	\end{align}
	Using Lemma \ref{Lemma:KKR_Eigenvalue_Inequality} we get:
	\begin{align}
	|\lambda_0 \big(\tilde{H}_{EQMA}^{(NO)}\vert_{\mathcal{K}_S\otimes \mathcal{H}_Q}\big) -\lambda_0\big( H_{QMA}^{(NO)}\vert_{\mathcal{K}_S\otimes \mathcal{H}_Q}\big)| &\leq \eta^{1/2}.
	\end{align}
	
	We now consider YES instances. 
	To bound the minimum eigenvalue of Hamiltonians encoding the verification of YES instances we need only consider the minimum eigenvalue of the blocks corresponding to accepting witness(es). 
	In these cases, $\gamma_i\leq \eta$ and $\mu_i \geq 1-\eta $, hence
	\begin{align}
	||P_i^{EQMA} - P_i^{QMA}|| &= \max\{\gamma_i^{1/2},(1-\mu_i)^{1/2} \} \\
	&\leq \eta^{1/2}.
	\end{align}
	Thus 
	\begin{align} \label{Eq:YES_Eigenvalue_Bound}
	|\lambda_0 \big(\tilde{H}_{EQMA}^{(YES)}\vert_{\mathcal{K}_S\otimes \mathcal{H}_Q}\big) -\lambda_0\big( H_{QMA}^{(YES)}\vert_{\mathcal{K}_S\otimes \mathcal{H}_Q}\big)| &\leq \eta^{1/2}.
	\end{align}
	Combining the bounds for both the YES and NO cases gives us:
	\begin{equation} \label{Eq:Eigenvalue_Bound}
	|\lambda_0 \big( \tilde{H}_{EQMA}\vert_{\mathcal{K}_S\otimes \mathcal{H}_Q} \big) - \lambda_0 \big( H_{QMA}\vert_{\mathcal{K}_S\otimes \mathcal{H}_Q} \big) | \leq \eta^{1/2}.
	\end{equation}
	
	Now consider the results in the statement of the lemma:
	the YES case is trivial as we know that $\lambda_0 \big(\tilde{H}_{EQMA}^{(YES)}\vert_{\mathcal{K}_S\otimes \mathcal{H}_Q}\big) =0 $ and $H_{QMA}^{(YES)}\vert_{\mathcal{K}_S\otimes \mathcal{H}_Q} \geq 0$, hence using the bound in equation (\ref{Eq:YES_Eigenvalue_Bound}),
	\begin{equation}
	0 \leq \lambda_0\big( H_{QMA}^{(YES)}\vert_{\mathcal{K}_S\otimes \mathcal{H}_Q}\big) \leq \eta^{1/2}.
	\end{equation}
	To see the NO case, we note that the EQMA minimum eigenvalue occurs for a block of the form:
	\begin{align}
	\Delta^{(T)}+ \ket{T}\bra{T}.
	\end{align}
	Using the bound in equation (\ref{Eq:Eigenvalue_Bound}),
	\begin{align}
	1-\cos\bigg(\frac{\pi}{2T}\bigg) - \eta^{1/2} \leq  &\lambda_0\big( H_{QMA}^{(NO)}\vert_{\mathcal{K}_S\otimes \mathcal{H}_Q}\big) \leq 1-\cos\bigg(\frac{\pi}{2T}\bigg).
	\end{align}
\end{proof}

Finally we note that $\eta$ represents the probability of the verifier being wrong. 
If we are interested in a QMA computation, then we can repeat the computation multiple times to get an exponentially better soundness and completeness boundaries \cite{Nielsen_and_Chuang}\cite{Marriott_Watrous_2005}. 
We formalise this below.
\begin{corollary}
	Given a QMA instance, there exists a standard form Hamiltonian, as described in Lemma \ref{Lemma:QMA_Clairvoyance}, which has ground state energy in the bounds
	\begin{align}
	0 \leq &\lambda_0\big( H_{QMA}^{(YES)}\big) \leq e^{-O(\poly(n))} \\
	1-\cos\bigg(\frac{\pi}{2T}\bigg) - e^{-O(\poly(n))} \leq  &\lambda_0\big( H_{QMA}^{(NO)}\big) \leq 1-\cos\bigg(\frac{\pi}{2T}\bigg).
	\end{align}
\end{corollary}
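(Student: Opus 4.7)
The plan is to deduce this corollary from Lemma \ref{Lemma:QMA_Clairvoyance} by first driving the QMA error probability $\eta$ down to an exponentially small value via standard amplification, and then propagating the improved $\eta$ through the bounds already proved for a single (unamplified) verifier.

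First I would invoke standard QMA amplification. Given the QMA instance with error probability bounded by some constant (by Def. \ref{Def:QMA}, at most $1/3$), either sequential repetition with majority vote on $k=\poly(n)$ independent witness copies, or the more efficient Marriott--Watrous amplification \cite{Marriott_Watrous_2005}, produces a new verifier circuit $V'$ whose error probability satisfies $\eta \leq 2^{-\Omega(k)} = e^{-O(\poly(n))}$. The key observation is that $V'$ is still a polynomial-size quantum circuit: the witness register and ancilla register grow at most polynomially, the number of gates grows by a polynomial factor, and there is still a single final measurement projecting onto a rejection flag. Consequently $V'$ admits an encoding in a standard-form Hamiltonian obeying Def. \ref{Def:Standard-form_H} with a standard-form clock satisfying Assumption \ref{Assumption:Clock_Properties}, with the clock runtime $T$ merely inflated by a polynomial factor relative to the original.

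Next I would apply Lemma \ref{Lemma:QMA_Clairvoyance} to this amplified Hamiltonian. The lemma partitions the Hilbert space into invariant subspaces $\mathcal{K}_S \otimes \mathcal{H}_Q$ of three types, with bounds $\lambda_0(\mathcal{K}_S) \geq 1$ for type 1, $\lambda_0(\mathcal{K}_S) \geq 1 - \cos(\pi/O(n))$ for type 2, and the explicit YES/NO bounds in terms of $\eta^{1/2}$ for type 3. Since $T=\Omega(n^2)$ by Assumption \ref{Assumption:Clock_Properties}, for sufficiently large $n$ both $1$ and $1-\cos(\pi/O(n))$ strictly exceed $1-\cos(\pi/2T)$, so the global ground state of $H_{QMA}$ must reside in a type-3 subspace, and the bounds for type 3 transfer verbatim to $\lambda_0(H_{QMA})$. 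Substituting $\eta = e^{-O(\poly(n))}$, so that $\eta^{1/2} = e^{-O(\poly(n))}$ as well, yields
\begin{align}
0 \;\leq\; &\lambda_0\!\left(H_{QMA}^{(YES)}\right) \;\leq\; e^{-O(\poly(n))}, \\
1-\cos\!\left(\frac{\pi}{2T}\right) - e^{-O(\poly(n))} \;\leq\; &\lambda_0\!\left(H_{QMA}^{(NO)}\right) \;\leq\; 1-\cos\!\left(\frac{\pi}{2T}\right),
\end{align}
which is the claim.

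There is no real obstacle here beyond a bookkeeping check: one must confirm that the amplified verifier still satisfies all four hypotheses of the Clairvoyance Lemma (single-transition structure, trivial $U_t = \mathds{1}_Q$ on the initialisation segment, detectability of the final clock state) when encoded via the same class of standard-form clocks (e.g.\ the constructions of \cite{Gottesman-Irani} or \cite{Cubitt_Perez-Garcia_Wolf}). Because amplification affects only the computation performed on the quantum track and not the clock construction itself, this is immediate; the increase in runtime $T$ is absorbed into the $1-\cos(\pi/2T)$ term in the statement. Hence the corollary follows by composing amplification with Lemma \ref{Lemma:QMA_Clairvoyance}.
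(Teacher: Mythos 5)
Your proposal is correct and follows essentially the same route as the paper: amplify the QMA verifier (via standard repetition or Marriott--Watrous) until $\eta = e^{-O(\poly(n))}$, re-encode the amplified verifier in a standard-form Hamiltonian, and then apply Lemma \ref{Lemma:QMA_Clairvoyance} so that the $\eta^{1/2}$ terms become exponentially small. Your added bookkeeping — checking the amplified circuit still satisfies the clock assumptions and that types 1 and 2 subspaces lie above the type-3 bounds because $T=\Omega(n^2)$ — is slightly more explicit than the paper's proof but not a different argument.
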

\begin{proof}
	We apply Lemma \ref{Lemma:QMA_Clairvoyance}, where $\eta$ is the probability of the QMA verifier outputting incorrectly: for YES instances $\eta  \geq  \min_{\ket{x}\in \ker H_{in}}\bra{x}\Pi_{out}\ket{x}$, while for NO instances $1 - \eta  \leq  \min_{\ket{x}\in \ker H_{in}}\bra{x}\Pi_{out}\ket{x}$. 
	If we are interested in a QMA computation, then we can repeat the computation a polynomial number of times to get an exponentially better soundness and completeness boundaries \cite{Nielsen_and_Chuang}\cite{Marriott_Watrous_2005}.
	Using these amplification methods, we amplify until $\eta =O(e^{-\poly(n)})$,
	and then apply Lemma \ref{Lemma:QMA_Clairvoyance}, thus giving ground state energies exponentially close to the EQMA case.
\end{proof}

\section{Eigenvalue Scaling with Constant Rejection Probability}

\begin{figure}[h!] 
	\centering
	\includegraphics[width=1.15\textwidth]{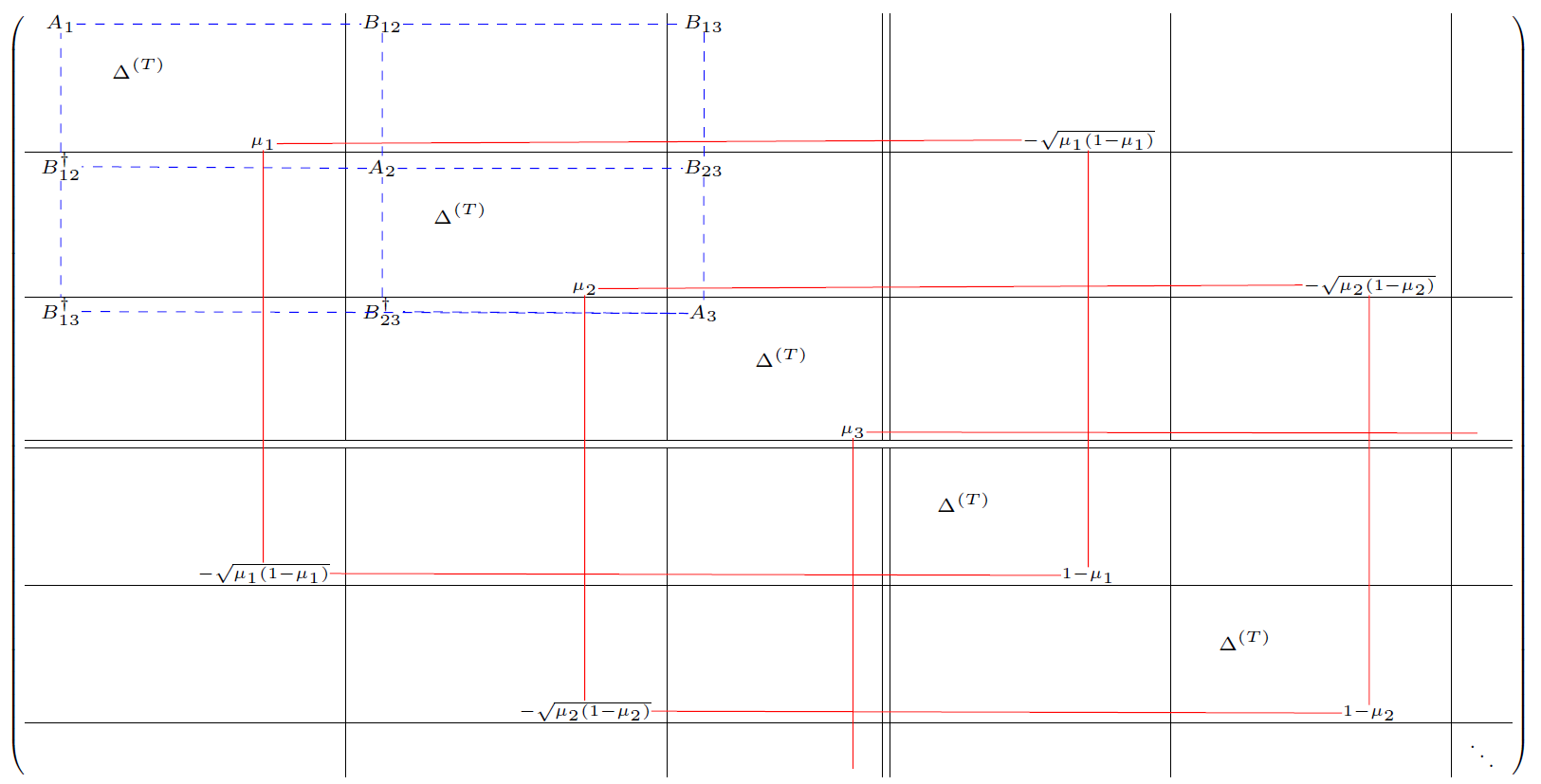}
	\caption{Matrix shows the form of a standard form Hamiltonian $H$ after conjugation by $R$. 
	Each block has size $T\times T$
	Note that $A_i$ and $B_{ij}$ are diagonal $T_{init}\times T_{init}$ sized matrices representing $R^\dagger H_{in}R$.
	The terms connect by solid blue lines are those result from the conjugation of $H_{out}$.
	The solid double-black line separates blocks which are inside and outside of the support of $H_{in}$. 
	}
	\label{Fig:Conjugated_Matrix_Full}
\end{figure}

In the previous section we used the fact that we can amplify in QMA in addition to  perturbation theory to bound the minimum eigenvalues within an exponentially small region. However, it some cases there may be times we cannot amplify, in which case the $\eta^{1/2}$ bound is not useful (since it may be the case $\eta=1/3$). In this section we find an upper bound that scale with $T$ with constant $\eta$.

\begin{theorem}[Constant Acceptance Upper Bound] \label{Theorem:YES_Ground_State_Energy_Bound}
	Let $H^{(YES)}_{QMA}$ be a standard form Hamiltonian with a standard form clock which encodes the verification of a YES QMA instance.
	Let $\eta=O(1)$ be the maximum probability of rejection as per the definition of QMA, then 
	\begin{align}
	0\leq \lambda_0\big( H_{QMA}^{(YES)}\big) \leq \eta\bigg(1-\cos(\frac{\pi}{2(T-T_{init})+1})\bigg)=O\bigg(\frac{\eta}{T^2}\bigg).
	\end{align}
\end{theorem}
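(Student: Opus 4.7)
The statement is a variational upper bound on the YES-instance ground state energy with constant QMA error, sharpening the $O(\eta/T)$ estimate of \cite{Caha_Landau_Nagaj}. My plan is to exhibit a trial state $\ket{\chi}$ satisfying $\langle\chi|H_{QMA}^{(YES)}|\chi\rangle/\|\chi\|^2 \leq \eta\bigl(1-\cos(\pi/(2(T-T_{init})+1))\bigr)$ and then conclude by the Rayleigh--Ritz principle; the small-angle expansion of cosine gives the $O(\eta/T^2)$ form.

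The first step is to apply Lemma~\ref{Lemma:QMA_Clairvoyance} and work in the conjugated basis on a legal (type~3) invariant subspace, where the Hamiltonian takes the form $\Delta^{(T)}\otimes \mathds{1}_Q + H_{in} + \ket{T}\bra{T}\otimes U^\dagger \Pi_{out} U$. The YES hypothesis provides an accepting witness $\ket{\xi}\in\ker H_{in}$ with rejection probability $\eta_0 := \bra{\xi}U^\dagger \Pi_{out} U \ket{\xi} \leq \eta$. The natural product trial state to try first is $\ket{\tilde\chi} = \sum_{t=T_{init}+1}^T c_t \ket{t}\otimes \ket{\xi}$: the support on clock times $t > T_{init}$ kills the $H_{in}$ contribution automatically, and with $c_{T_{init}}=0$ the restriction of $\Delta^{(T)}$ to the clock support is an $n\times n$ tridiagonal block ($n:=T-T_{init}$) equal to $\Delta^{(n)}+\tfrac{1}{2}\ket{1}\bra{1}$, whose smallest eigenvalue is $1-\cos(\pi/(2n+1))$ by Lemma~\ref{Lemma:Tridiagonal_Partial_Penalty}. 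The $H_{out}$ contribution is $\eta_0 |c_T|^2$.

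The main obstacle lies in producing the overall $\eta$ prefactor, not just an additive $\eta_0|c_T|^2$ term: if $c$ is taken to be the ground state of the restricted walk, the total energy is $(1-\cos(\pi/(2n+1))) + \eta_0|c_T|^2$, and an explicit calculation of the eigenvector from Lemma~\ref{Lemma:Tridiagonal_Partial_Penalty} shows $|c_T|^2 \sim 2/n$, which dominates and reproduces only the $O(\eta/T)$ Caha--Landau--Nagaj bound. To get the linear $\eta$ scaling, I would instead work inside a paired $H_{aa}\oplus H_{bb}$ block identified in Lemma~\ref{Lemma:Clairvoyance}: the accepting block has $\mu_i \leq \eta$ on the $H_{bb}$ diagonal together with a rank-one off-diagonal coupling $-\sqrt{\mu_i(1-\mu_i)}\ket{T}\bra{T}$ to $H_{aa}$. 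A two-block superposition weighted by $\sqrt{1-\mu_i}$ on the zero-energy EQMA history state and $\sqrt{\mu_i}$ on a correction concentrated near $\ket{T}$ is engineered so that the off-diagonal coupling exactly cancels the dominant $\Delta^{(T)}$-contribution from the correction, reducing the problem to a single-block walk with penalty strength $\mu_i$ on an effective length $n$.

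Finally, applying the operator inequality $\Delta^{(n)} + \mu\ket{n}\bra{n} \leq \mu(\Delta^{(n)}+\ket{n}\bra{n}) + (1-\mu)\Delta^{(n)}$ (as used in the lower-bound portion of the Endpoint Penalty Analysis), together with Lemma~\ref{Lemma:Tridiagonal_Partial_Penalty} applied to the walk $\Delta^{(n)} + \tfrac{1}{2}\ket{n}\bra{n}$, upper-bounds the reduced problem by $\mu_i (1-\cos(\pi/(2n+1)))$; since $\mu_i \leq \eta$ for the accepting block this completes the proof, and the small-angle expansion of cosine delivers the explicit $O(\eta/T^2)$ statement.
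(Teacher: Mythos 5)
Your overall strategy is the right one and close to the paper's: restrict to a legal (type 3) block via the Clairvoyance Lemma, note that any trial state built from the witness alone can only reproduce the $O(\eta/T)$ Caha--Landau--Nagaj bound, and instead use a two-component trial vector spread over the pair of $\Delta^{(T)}$ blocks coupled by $P(\mu_i)$ with $\mu_i\leq\eta$. However, the key quantitative step is misdescribed and the finishing argument is invalid. First, the cancellation mechanism is the wrong way round: with endpoint amplitudes $\sqrt{\mu}$ (on the $H_{aa}$-side correction) and $\sqrt{1-\mu}$ (on the history-state side) the off-diagonal coupling $-\sqrt{\mu(1-\mu)}$ cancels the \emph{diagonal output-penalty entries} $1-\mu$ and $\mu$ at time $T$ (the pair $(\sqrt{\mu},\sqrt{1-\mu})$ is the zero eigenvector of $P(\mu)$); it does \emph{not} cancel the $\Delta^{(T)}$ energy of the correction --- that kinetic energy is precisely what survives and furnishes the final bound. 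Moreover, the cancellation requires the \emph{time-$T$ amplitudes} to be in ratio $\sqrt{\mu}:\sqrt{1-\mu}$; weighting the \emph{normalised} history state by $\sqrt{1-\mu}$, as you state, gives time-$T$ amplitude $\sqrt{(1-\mu)/T}$ and the cross term no longer cancels the diagonal terms. You also never say how $H_{in}$ acting on the $H_{aa}$-side correction is handled; in the paper this is done by deleting the first $T_{init}$ coordinates of that block, which is exactly where the extra $\tfrac12$ boundary penalty and the effective length $T-T_{init}$ come from.

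Second, the claimed ``reduction to a single-block walk with penalty strength $\mu_i$'' followed by the bound $\lambda_0(\Delta^{(n)}+\mu\ket{n}\bra{n})\leq \mu\big(1-\cos(\pi/(2n+1))\big)$ cannot work. Since $\lambda_0$ is a minimum of functions affine in $\mu$, it is concave in $\mu$ with $\lambda_0(0)=0$, so $\lambda_0(\Delta^{(n)}+\mu\ket{n}\bra{n})\geq \mu\,\lambda_0(\Delta^{(n)}+\ket{n}\bra{n})=\mu\big(1-\cos(\pi/(2n))\big)>\mu\big(1-\cos(\pi/(2n+1))\big)$ for $0<\mu\leq1$; in fact for constant $\mu$ the single-block ground energy is $\Theta(1/n^2)$ with no linear $\mu$ prefactor at leading order. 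The identity $\Delta^{(n)}+\mu\ket{n}\bra{n}=\mu(\Delta^{(n)}+\ket{n}\bra{n})+(1-\mu)\Delta^{(n)}$ only yields \emph{lower} bounds on $\lambda_0$ (this is how the paper uses it in the endpoint-penalty analysis); it cannot give the upper bound you need, and the linear-in-$\eta$ scaling genuinely relies on keeping the two coupled blocks rather than collapsing to one. The paper finishes instead by a direct Rayleigh-quotient evaluation in the coupled system: with $T'=T-T_{init}$, take the trial vector $\frac{\sqrt{\mu}}{u_{T'}}\ket{u}\oplus\sqrt{T(1-\mu)}\ket{w}$, where $\ket{u}$ is the ground state of the truncated half-penalty walk (eigenvalue $\gamma_0=1-\cos(\pi/(2T'+1))$) and $\ket{w}$ the uniform history state; the $P(\mu)$ contribution vanishes identically and the quotient equals $\mu\gamma_0/\big(\mu+(1-\mu)T|u_{T'}|^2\big)\leq\mu\gamma_0\leq\eta\gamma_0$, using $T|u_{T'}|^2\geq1$ (which is where $T_{init}=O(T^{1/2})$ enters). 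Your plan needs this endpoint-amplitude weighting and two-block evaluation in place of the single-block reduction and the convexity-style inequality.
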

\begin{proof}
	From \ref{Lemma:QMA_Clairvoyance}, subspaces of type 1 or 2 have minimum eigenvalue $\lambda_0\big( H_{QMA}^{(YES)}\vert_{\mathcal{K}_S\otimes \mathcal{H}_Q}\big)\geq 1-\cos(\frac{\pi}{O(n)})=\Omega\left(\frac{1}{n^2} \right)$.
	Hence the only subspace where we can hope to get a low upper bound on the minimum eigenvalue is in the legal-only type 3 subspaces.
	For the remainder of this proof we only consider these type 3 subspaces.
	
	We first consider the matrix we are trying to bound the minimum eigenvalue of.
	Denote 
	\begin{align}
	P(\mu_i)= \begin{pmatrix}
	& 1-\mu_i & -\sqrt{\mu_i(1-\mu_i)} \\
	&-\sqrt{\mu_i(1-\mu_i)} & \mu_i
	\end{pmatrix}.
\end{align}
	Let $\mathcal{K}_S\otimes \mathcal{H}_Q$ be a type 3 subspace, then from the Clairvoyance Lemma (Lemma \ref{Lemma:Clairvoyance}) with $R=W(\mathds{1}_C\otimes (X\oplus Y)_Q)$, we have 
	\begin{align}
	A :=& R^\dagger H_{QMA}^{(YES)}\vert_{\mathcal{K}_S\otimes \mathcal{H}_Q}R \\ 
	=& \bigoplus_i(\Delta^{(T)}\oplus \Delta^{(T)}+  0_{T-1} \oplus P(\mu_i)\oplus 0_{T-1}) + R^\dagger  \bigg(\sum_{t=0}^{T_{init}-1}(\ket{t}\bra{t})\otimes  (\Pi_t) \bigg) R. 
	\end{align}
	This has the structure seen in Figure \ref{Fig:Conjugated_Matrix_Full}.
	We now consider an inequality for the minimum eigenvalue using the Rayleigh quotient
	\begin{align} \label{Eq:Restricted_Subspace_Eigenvalues}
	\lambda_0(A) &\leq \min_{\ket{\nu}\in\mathcal{K}_S\otimes \mathcal{H}_Q}\frac{\bra{\nu}A\ket{\nu}}{\bra{\nu}\ket{\nu}} \\
	&\leq \min_{\ket{\nu}\in S \subseteq\mathcal{K}_S\otimes \mathcal{H}_Q }\frac{\bra{\nu}A\ket{\nu}}{\bra{\nu}\ket{\nu}} \label{Eq:Restricted_Subspace_Eigenvalues_2}
	\end{align}
	where $S$ is some restricted subspace and going from the first to second line is a consequence of the min-max theorem for matrix eigenvalues.
	We now consider the structure of $A$. 
	We note that the blocks corresponding to each $\Delta^{(T)}$ which have support on $H_{in}$ are coupled together by the $R^\dagger  \bigg(\sum_{t=0}^{T_{init}-1}(\ket{t}\bra{t})\otimes  (\Pi_t) \bigg) R$ term.
	Then the bottom-right blocks (i.e. those in the complement of the support of $H_{in}$) are disjoint from each other, but each is coupled to a single $\Delta^{(T)}$ block in the top-left by a term $P(\mu_i)$. 
	
	Consider the $P(\mu_i)$ with the smallest value of $\mu_i$ and the two $\Delta^{(T)}$ blocks it couples together. 
	Now restrict this subspace to everything except the first $T_{init}$ rows and columns in the top-left block. 
	This is now completely decoupled from rest of $A$.  
	We label this matrix $B'$ and let the corresponding subspace it acts on be labelled $S$, such that $\dim S=T-T_{init}$. 
	We see
	\begin{align}
	B &:= (\Delta^{(T-T_{init})}+ \frac{1}{2}\ket{0}\bra{0} )\oplus \Delta^{(T)} + 0_{T-T_{init}-1}\oplus P(\mu)\oplus 0_{T-1}.
	\end{align}

	We now consider the inequality above and choose $S$ to be the entire subspace except the states $\{\ket{t}\}_{t=1}^{T_{init}-1}$.
	Then from inequalities (\ref{Eq:Restricted_Subspace_Eigenvalues}) and (\ref{Eq:Restricted_Subspace_Eigenvalues_2}), we have $\lambda_0(A)\leq \lambda_0(B)$. 
	We get the eigenvalue bound
	\begin{align}
	\lambda_0(A) &\leq \min_{\ket{\nu}\in S}\frac{\bra{\nu}B\ket{\nu}}{\bra{\nu}\ket{\nu}} \label{Eq:Variational_Eigenvalue_1}.
	\end{align} 
	
	From now on denote $T':=T-T_{init}$.
	Further let the minimum eigenvector of $\Delta^{(T')}+ \ket{0}\bra{0}$ be $\ket{u}$, where by Theorem 3.2 $(viii)$ of \cite{Yueh_Cheng_2008} its components are given by
	\begin{align}
	\ket{u} = \sum_{t=1}^{T'} u_t\ket{t} = \sqrt{\frac{2\sin(\pi/(4T'+2))}{\sin(\pi T'/(2T'+1))}}  \sum_{t=1}^{T'} \sin(\frac{t\pi}{2T'+1})\ket{t-1},
	\end{align}
	and the associated eigenvalue is $\gamma_0 = 1 - \cos(\pi/(2T'+1))$.
	Furthermore, $P(\mu)$ has an eigenvector with zero eigenvalue
	\begin{align}
	\sqrt{\mu}\ket{0}+ \sqrt{1-\mu}\ket{1},
	\end{align}
	and $\Delta^{(T)}$ has an eigenvector with zero eigenvalue
	\begin{align}
	\ket{w} = \frac{1}{\sqrt{T}}\sum_{t=1}^{T}\ket{t}.
	\end{align}
	We then use the unnormalised vector 
	\begin{align}
	\ket{\nu} = \begin{pmatrix}
	\frac{1}{u_{T'}}\sqrt{\mu}&\ket{u} \\
	\sqrt{T}\sqrt{1-\mu}&\ket{w}
	\end{pmatrix}
	\end{align}
	as a trial ground state.
	Consider 
	\begin{align}
	\frac{\bra{\nu}B\ket{\nu}}{\bra{\nu}\ket{\nu}} &= \frac{(\mu/u_{T'}^2) \gamma_0}{(\mu/u_{T'}^2)+ (1-\mu)T} \\
	 &= \frac{\mu \gamma_0}{\mu+ (1-\mu)Tu_{T'}^2}.
	\end{align}
	We now note that 
	\begin{align}
	T|u_{T'}|^2 &= \frac{2T\sin(\pi/(4T'+2))}{\sin(\pi T'/(2T'+1))}\sin[2](\frac{T'\pi}{2T'+1}) \\
	&= 2T\sin(\frac{\pi}{4T'+2})\sin(\frac{T'\pi}{2T'+1}) \\
	&= 2T\sin(\frac{\pi}{4T'+2})\cos(\frac{\pi}{4T'+2}) \\
	&=T\sin(\frac{\pi}{2T'+1})
	\end{align}
	From the Clock Properties Assumptions (Assumptions \ref{Assumption:Clock_Properties}) we know that $T_{init}=O(T^{1/2})$, hence for  $T\geq 2$ we find that 
	\begin{align}
	T|u_{T'}|^2 &\geq 1.
	\end{align}
	Using this we get
	\begin{align}
	\frac{\bra{\nu}B\ket{\nu}}{\bra{\nu}\ket{\nu}} &\leq \mu \gamma_0.
	\end{align}
	We know that $\gamma_0 = 1 - \cos(\pi/2(T-T_{init})+1)$, hence, combining all of the above, we get 
	\begin{align}
	\frac{\bra{\nu}H^{(YES)}_{QMA}\ket{\nu}}{\bra{\nu}\ket{\nu}} &\leq \mu \bigg(1-\cos(\frac{\pi}{2(T-T_{init})+1})\bigg).
	\end{align}
	Again using $T_{init}=O(T^{1/2})$ we get
	\begin{align}
	\lambda_0(H^{(YES)}_{QMA}) = O\bigg(\frac{\mu}{T^2}\bigg).
	\end{align}
	Finally, we note $\mu \leq \eta$ as $\eta$ is the maximum probability of rejection of a correctly initialised witness, thus giving
	\begin{align}
	\lambda_0(H^{(YES)}_{QMA}) = O\bigg(\frac{\eta}{T^2}\bigg).
	\end{align}
	
\end{proof}

\section{Discussion and Outlook}

The main aim of this work has been to understand the ground state eigenvalues and subspace as thoroughly as possible, as  well as providing a toolbox for future work involving Feynman-Kitaev Hamiltonians and their extensions. 
We expect the constant rejection probability analysis to be useful for situations where our ability to provide amplifications is limited in some way. 
An example is in \cite{Bausch_Cubitt_Watson_2019} where we are allowed only very limited amplification and we need to apply these bounds.

We can then ask what else can we apply this analysis to:
\paragraph{Extensions to Unitary Labelled Graphs}
As mentioned previously, unitary labelled graphs representing branching computations have been shown to have promise gaps going as $\Omega(N^{-3})$ if the graph has $N$ vertices \cite{Bausch_Cubitt_Ozols_2016}. 
Other bounds are known, but are similarly fairly loose. 
Given some of the techniques in this paper (notably the Uncoupling Lemma (Lemma \ref{Lemma:Divorce_Lemma}) allow us to decouple line graphs, it would be interesting to see if better bounds for ULGs can be found using these techniques or something similar.

\paragraph{Analysis of Quantum Walks on a Line}
The analysis quantum walks on a line in this paper is limited in that it only tells us that energy penalties at the end points are lower energy than elsewhere, and then given bounds on the energy of these end points. 
Intuitively, one would expect energy penalties closer to the centre of a computation to raise the energy more.	
It would be useful if we could rigorously understand how placing an energy penalty within a computation can change the energy --- indeed such results would liked help us proving better bounds for ground state energies of unitary labelled Hamiltonians.

\paragraph{Fine-tuned Hamiltonian Energies}
A wider project can be seen in the context of designing Hamiltonians with specifically chosen energies and associated scalings for use in particular constructions. 
Examples include \cite{Bausch_Cubitt_Watson_2019} and \cite{Bausch_Cubitt_Lucia_PG_2018}, both of which rely on a construction where the energy of a negative energy Hamiltonian is traded-off against a positive energy Hamiltonian encoding a computation.

There are two motivating points in this: extending the quantum walk analysis to bonus penalties to get a Hamiltonian which as an energy $-f(T)$, for runtime $T$, such that $f(T)$ only decays polynomially. 
The second point would be attempting to encode a computation in a Hamiltonian with negative ground state energies.
At the moment it is not known how to do this due to difficulties initialising the encoded circuit/quantum Turing Machine. 
That is, the bonus provided by reaching an accepting state is usually sufficiently large to make it favourable to pick up energy penalties in the initialisation steps.

\section{Acknowledgements}
The author would like to thank Toby Cubitt for support and useful discussions particularly regarding Section \ref{Sec:Standard_Form_Hamiltonian_Analysis}, and Johannes Bausch for discussions regarding the constant acceptance probability lemma.
The author is supported by the EPSRC Centre for Doctoral Training in Delivering Quantum Technologies.

\end{document}